\newtheorem{theorem}{Theorem}
\newtheorem{lemma}{Lemma}
\newtheorem{proposition}{Proposition}
\DeclarePairedDelimiter\floor{\lfloor}{\rfloor}
\newcommand{\ind}{\perp\!\!\!\!\perp}
\title{\bf \Large Bernstein Polynomial Processes\\ for Continuous Time Change Detection}
\author[1]{Dan Cunha}
\author[2]{Mark Friedl}
\author[1]{Luis Carvalho}
\affil[1]{Department of Mathematics and Statistics, Boston University}
\affil[2]{Department of Earth and Environment, Boston University}
\date{}
\begin{document}
\maketitle

\begin{abstract}
There is a lack of methodological results for continuous time change detection due to the challenges of noninformative prior specification and efficient posterior inference in this setting.  Most methodologies to date assume data are collected according to uniformly spaced time intervals. This assumption incurs bias in the continuous time setting where, \textit{a priori}, two consecutive observations measured closely in time are less likely to change than two consecutive observations that are far apart in time. Models proposed in this setting have required MCMC sampling which is not ideal. To address these issues, we derive the heterogeneous continuous time Markov chain that models change point transition probabilities noninformatively. By construction, change points under this model can be inferred efficiently using the forward backward algorithm and do not require MCMC sampling. We then develop a novel loss function for the continuous time setting, derive its Bayes estimator, and demonstrate its performance on synthetic data. A case study using time series of remotely sensed observations is then carried out on three change detection applications. To reduce falsely detected changes in this setting, we develop a semiparametric mean function that captures interannual variability due to weather in addition to trend and seasonal components.
\end{abstract}

\noindent%
{\it Keywords:}  heterogeneous continuous time Markov chain, expectation maximization, land disturbance modeling
\vfill

\newpage

\section{Introduction}
\label{sec:intro}
Change detection is a widely used modeling and inference procedure for a vast number of applications, many of which use data measured in continuous time or along a continuous axis. However, methodologies to date have largely focused on the discrete time model \citep{truong2020selective,aminikhanghahi2017survey}.  It is not hard to reason, \textit{a priori}, that the probability of change ought to be lower when observations are closer in time than when observations are further apart in time. Our objective is to develop this prior intuition in a principled noninformative way that also yields analytically tractable posterior inference of change points in continuous time models.  We will start by introducing the discrete time setting where most methodologies have been developed.  

Suppose we have conditionally independent observations $[y_i|\Theta_k,z_i=j]$ indexed by equally spaced discrete times $i=0,\dots,n$ with $j = 1,\dots, k$ model states. The likelihood distribution is conditioned on a latent change point process $\{z_i\}_{i=0}^n$ that takes on states $1,\dots,k$ as well as parameters $\Theta_k\in \mathbb{R}^{p\times k}$. When $z_i=j$, the likelihood distribution of $y_i$ is a function of the $j$th parameter vector $\bm{\theta}_j\in\mathbb{R}^{p\times 1}$. What makes $\{z_i\}_{i=0}^n$ a change point process is that $z_0=1$, $z_n=k$, and if $z_i=j$, then $z_{i+1}\in \{j,j+1\}$ with probability 1. We index observations from $i=0$ since $z_0$ always equals $1$, which simplifies notation later on. 

Notice, we could just as well have defined $k$ \textit{segment length} parameters $\{\zeta_j\}_{j=1}^k$ such that $\sum_{j=1}^k \zeta_j = n$ and $\zeta_j=i$ when $z_{i-1}=j$ and $z_i=j+1$. The majority of change detection literature postulates models as a function of segment length parameters $\bm{\zeta}$ or change point locations $\tau_j = \sum_{l=1}^j \zeta_l$ \citep{b399ed80-3ccc-30cf-af0d-87bbdad7ade6, auger1989algorithms, killick2012optimal,fearnhead2006exact,fearnhead2007line,adams2007bayesian}. While the state space and segment length models are equivalent in terms of their likelihood distributions, their corresponding Bayesian inference procedures can be quite different in terms of tractability of the posterior distribution. In the continuous time offline setting, the posterior distribution of the segment length parameters $\bm{\zeta}$ is not analytically tractable under a noninformative $\textbf{Dir}(1_k)$ prior \citep{stephens1994bayesian}. The main contribution of this paper is the development of a heterogeneous continuous time Markov chain $\pi_k(z_t=h|z_s=j) \coloneqq \pi(z_t=h|z_s=j,k)$ for times $0<s<t<1$, that is noninformative in the offline setting and enjoys analytical posterior inferences without approximation nor MCMC sampling.
\subsection{State space models versus partition models}
\cite{chib1998estimation} was the first to show the connection between state variables and segment lengths for the \textit{online model}. In this setting, segment lengths are assumed to follow a geometric distribution, $\pi_k^{(G)}(\zeta_j=i) = p_j^{i-1}(1-p_j)$ \citep{yao1984estimation,barry1993bayesian}. \cite{chib1998estimation} showed these segment lengths can be reparameterized in terms of state variables with transition probabilities $\pi_k^{(G)}(z_{i+1}=j|z_i=j) = p_j$ and $\pi_k^{(G)}(z_{i+1}=j+1|z_i=j) = (1-p_j)$. While these models are distributionally equivalent, the latter is a special case of a hidden Markov model and is equipped with methodological conveniences such as the forward backward algorithm for computing posterior expectations or analytic formulas for simulation \citep{chib1996calculating,fearnhead2006exact}.

\subsection{Offline modeling versus online modeling}
In the current work, we operate in the retrospective (offline) setting and assume a priori all change point sequences are equally likely. Please see \cite{truong2020selective} for a review of offline approaches. For example in discrete time, let $\Omega_{n,k}$ be the sample space of all change point process sequences. If $n=3$ and $k=3$, then $\Omega_{3,3} = \{\{1,1,2,3\},\{1,2,2,3\},\{1,2,3,3\}\}$ and each of these sequences is given equal prior probability in the offline setting. The corresponding prior $\pi_k(\zeta_1)$ is discrete uniform on $0,\dots,n$ and the conditional prior on the $j$th segment length $\pi_k(\zeta_j|\sum_{l=1}^{j-1} \zeta_l = i_0)$ is discrete uniform on the remaining $n-(k-j) - i_0$ positions. The $(k-j)$ term is subtracted to ensure there are enough positions for the remaining segments. The last length $\zeta_k$ is restricted by $\sum_{j=1}^k \zeta_j = n$ \citep{stephens1994bayesian}. However, the corresponding offline model for state variables $\bm{z}$ has been unexplored in both discrete and continuous time.  We develop both approaches in this work. 

\subsection{Continuous time versus discrete time}
In continuous time, the noninformative prior on segment lengths is $\bm{\zeta} \sim \textbf{Dir}(1_k)$, but the posterior distribution of this model is intractable and requires MCMC sampling \citep{stephens1994bayesian} or an approximation.  For this reason, we take a different approach.  First, noninformative priors for the state variables $\bm{z}$ are developed in discrete time and then relaxed to continuous time. We then show our model of the continuous time state variables $\{z_{t_i}\}_{i=0}^n$ is distributionally equivalent to $\bm{\zeta} \sim \textbf{Dir}(1_k)$ using the relationship $1\{z_{t_i}=j\} = 1\{\sum_{l=1}^{j-1} \zeta_l \leq t_i < \sum_{l=1}^j \zeta_l\}$. In doing so, we are able to derive the heterogeneous continuous time Markov chain $\pi_k(z_t=h|z_s=j)$ for $0\leq s<t\leq 1$ and $h\geq j$ for which exact posterior inference procedures are analytically available without MCMC nor approximation.

\subsection{Modeling environmental changes using satellite imagery}
Change detection is an important and challenging problem in remote sensing data.  Applications include detecting land cover change from both natural events (e.g., desertification, fires, etc.) and land use by humans (e.g., urbanization, agriculture, forestry, etc.) \citep{zhu2014continuous,keenan2014net,zhu2020continuous}. Due to high frequency of missing data in available remote sensing data sources and the variability in satellite periodicity, the data are collected in continuous time and as such require continuous time methods for their analysis.  In this work, we provide three case studies to demonstrate that our model generalizes across a range of situations.  The first is a deforestation example in the Rondonia region of the Amazon rainforest, the second is an agricultural land management example in the San Joaquin Valley, California, and the third is a study of vegetative drought detection in a semi-arid region in Texas.

\subsection{Paper structure}
The paper is structured as follows. In Section \ref{sec:NonInfMarg}, we develop our retrospective Bayesian change detection model in discrete time by deriving noninformative marginals $\pi_k(z_i=j)$ and extending them to their corresponding transition probabilities. In Section \ref{sec:relaxCont}, we derive the continuous time marginal distribution $\pi_k(z_t=j)$ and prove these marginals have a distributional equivalence to the noninformative prior $\bm{\zeta}\sim \textbf{Dir}(1_k)$. In Section \ref{sec:conttime}, we derive the heterogeneous continuous time Markov chain $\pi_k(z_{t}=h|z_{s}=j)$ for $0\leq s<t\leq 1$ and $h\geq j$ under the noninformative prior measure.  In Section \ref{sec:meth}, we develop a methodology for inference using expectation maximization, a novel loss function suited for the continuous time change point problem, and derive the Bayes estimator for that loss function. The Bayes estimator can be computed with the posterior moments made available by the forward backward algorithm. We then extend our model to handle outlier observations. In Section \ref{sec:sim}, we provide a simulation study and compare our method to other popular change detection methods. In Section \ref{sec:pheno}, we introduce a semiparametric model that captures interannual variation due to variation in weather and derive constraints on that function to ensure its continuity.  Finally in Section \ref{sec:casestudy}, we provide case studies of change detection examples using remote sensing, including detecting deforestation, crop management, and detecting shrub and grassland drought responses to interannual variation in weather in semi-arid regions.
\section{Noninformative Priors in Discrete Time}
\label{sec:NonInfMarg}
Let $\Omega_{n,k}$ be the sample space of all change point process sequences in discrete time with $n+1$ time points (including time $0$) and $k$ segments. The cardinality of $\Omega_{n,k}$ is $\binom{n}{k-1}$ since there are $n$ ways to choose $k-1$ changes. Now suppose we define a probability measure $\pi$ that places equal probability on all change point sequences in $\Omega_{n,k}$. Using the same counting argument above, the marginal probability $\pi_k(z_i=j)$ can be evaluated by counting the number of change point sequences that occur before $z_i = j$ and the number that occur after. That is, the number of ways to choose $j-1$ changes from $i$ time points, times the number of ways to choose $k-j$ change points from $n-i$ time points,
\begin{proposition}[Marginal noninformative prior in discrete time]
\label{prop:hypergeo}
The marginal noninformative prior on the state space $\{z_i\}_{i=0}^n$ in discrete time is hypergeometric distributed,
\begin{align*}
\pi_k(z_i=j) = \frac{\binom{n-i}{k-j}\binom{i}{j-1}}{\binom{n}{k-1}} 
\end{align*}
\end{proposition}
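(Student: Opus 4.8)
The plan is to prove the formula by a direct counting argument that exploits the equiprobable structure of the measure $\pi$ on $\Omega_{n,k}$. Since every sequence in $\Omega_{n,k}$ carries probability $1/\binom{n}{k-1}$, it suffices to count the number of sequences consistent with the event $\{z_i=j\}$ and divide by the total $\binom{n}{k-1}$.

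First I would make precise the bijection between change point sequences and subsets of transition locations that underlies the stated cardinality $|\Omega_{n,k}| = \binom{n}{k-1}$. Because $z_0=1$ is fixed and each transition increments the state by exactly one, a sequence is uniquely determined by the set of $k-1$ positions among the $n$ gaps $\{1,\dots,n\}$ at which the change $z_{\ell-1}\to z_\ell$ occurs. This reduces the problem from counting sequences to counting subsets of a fixed size.

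Next I would translate the event $\{z_i=j\}$ into a constraint on these subsets. Since the state starts at $z_0=1$ and increases by one at each change, we have $z_i=j$ if and only if exactly $j-1$ of the $k-1$ changes fall among the first $i$ gaps $\{1,\dots,i\}$, leaving the remaining $k-j$ changes to fall among the last $n-i$ gaps $\{i+1,\dots,n\}$. These two placements range over disjoint index sets and may be chosen independently, so the number of admissible sequences factors as $\binom{i}{j-1}\binom{n-i}{k-j}$. Dividing by $\binom{n}{k-1}$ then yields the claimed hypergeometric form.

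The argument is essentially routine; the only points requiring care are the bookkeeping of the indexing convention, namely that the changes live in the $n$ gaps following the fixed value $z_0=1$, and confirming that the support is handled correctly, since the binomial coefficients vanish automatically outside $\max(1,\,k-n+i)\le j\le\min(i+1,\,k)$, matching the combinatorial constraints on feasible states. As a consistency check I would sum over $j$ and invoke the Vandermonde identity to obtain $\sum_{j}\binom{i}{j-1}\binom{n-i}{k-j}=\binom{n}{k-1}$, confirming that $\pi_k(z_i=\cdot)$ is indeed a valid probability mass function.
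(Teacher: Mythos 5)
Your proof is correct, and it is worth noting its relationship to the paper: it rigorously formalizes the one-line counting heuristic stated in the main text just before the proposition, but it differs from the paper's formal proof in the appendix. There, the marginal is obtained indirectly, by decomposing $\pi_k(z_i=j)$ over the four configurations of the neighboring states $(z_{i-1},z_{i+1})\in\{j-1,j\}\times\{j,j+1\}$, counting each joint event as $\binom{i-1}{m-1}\binom{n-i-1}{k-l}\big/\binom{n}{k-1}$, and then collapsing the four terms via two applications of Pascal's rule $\binom{m}{l}=\binom{m-1}{l}+\binom{m-1}{l-1}$. Your route is more direct and more self-contained: the explicit bijection between sequences in $\Omega_{n,k}$ and $(k-1)$-subsets of the $n$ gaps (valid because $z_0=1$ is fixed and each change increments the state by exactly one), together with the observation that $z_i=j$ holds iff exactly $j-1$ changes land in the first $i$ gaps, yields the product $\binom{i}{j-1}\binom{n-i}{k-j}$ in one step, and it handles the support constraints and the edge cases $i=0$, $i=n$ automatically through vanishing binomial coefficients; your Vandermonde check $\sum_j \binom{i}{j-1}\binom{n-i}{k-j}=\binom{n}{k-1}$ is a nice added sanity verification. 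What the paper's neighbor-decomposition buys instead is a template for the joint computations $\pi_k(z_i=j,z_{i\pm1}=\cdot)$ that reappear when deriving the transition probabilities of Proposition 2, so the two proofs trade brevity for continuity with the subsequent development; as a standalone proof of this proposition, yours is the cleaner argument.
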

An example of these discrete time marginals is represented as the dotted lines in Figure \ref{fig:marginalsNdisctrans} (\textit{Left}). 
While this may be interesting, it is not immediately useful since the joint distribution of $\bm{z}$ is not a product of marginals.  But, what is clear from the definition of change point process, is that each state variable only depends on the previous state and thus the joint distribution can be factored as a product of transition probabilities. To compute the transition probabilities, start by noting that $z_{i+1}=1$ implies $z_i=1$ by the definition of change point process. Then since the transition probability is the joint distribution divided by the marginal, we have $\pi_k(z_{i+1}=1|z_{i}=1) = \pi_k(z_{i+1}=1)/\pi_k(z_{i}=1)$. In a similar fashion we can write,
\[\pi_k(z_{i+1}=2) = \big(1-\pi_k(z_{i+1}=1|z_i=1)\big)\pi_k(z_i=1) + \pi_k(z_{i+1}=2|z_i=2)\pi_k(z_i=2)\]
And solve for $\pi_k(z_{i+1}=2|z_i=2)$. Proceeding recursively, and representing all transitions as functions of marginals, we arrive at the noninformative discrete time transition probabilities for the retrospective model,
\begin{proposition}
\label{prop:trans}
The noninformative transition probabilities of the state space variables $\{z_i\}_{i=0}^n$ in discrete time are functions of the noninformative marginals,
\begin{align*}
\pi_k(z_{i} = j | z_{i-1}= j) &= \frac{\sum_{l=1}^j \pi_k(z_i = l) - \sum_{l=1}^{j-1} \pi_k(z_{i-1} = l)}{\pi_k(z_{i-1} = j)}.
\end{align*}
\end{proposition}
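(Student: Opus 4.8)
The plan is to derive the transition probabilities directly from the marginals of Proposition~\ref{prop:hypergeo} by exploiting the defining constraint of the change point process: from state $j$ at time $i-1$ the chain either stays at $j$ or advances to $j+1$, so the event $z_i = j$ can be reached only from $z_{i-1} = j$ (a stay) or from $z_{i-1} = j-1$ (an advance). Writing $p_{i,j} \coloneqq \pi_k(z_i = j \mid z_{i-1} = j)$ for the stay probability, the law of total probability gives the marginal recursion
\begin{align*}
\pi_k(z_i = j) = p_{i,j}\,\pi_k(z_{i-1} = j) + \bigl(1 - p_{i,j-1}\bigr)\pi_k(z_{i-1} = j-1),
\end{align*}
with the convention $\pi_k(z_{i-1} = 0) = 0$ so that the advance term vanishes at $j = 1$. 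This single relation ties each unknown transition probability to the known marginals and to the transition one state below it, exactly as in the recursion sketched in the text preceding the statement.

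First I would dispatch the base case $j = 1$: since $z_i = 1$ forces $z_{i-1} = 1$, the stay probability is simply $p_{i,1} = \pi_k(z_i = 1)/\pi_k(z_{i-1} = 1)$, which already matches the claimed formula because the numerator's second sum is empty. One could then proceed by induction on $j$, substituting the formula for $p_{i,j-1}$ into the recursion and solving for $p_{i,j}$; but the cleaner route I would prefer is to sum the marginal recursion over all states $l = 1, \dots, j$ simultaneously. Introducing the cumulative marginals $F_{i,j} \coloneqq \sum_{l=1}^{j} \pi_k(z_i = l)$ and reindexing the advance contribution, the stay and advance parts for the interior states $l = 1,\dots,j-1$ combine as $p_{i,l}\pi_k(z_{i-1}=l) + (1-p_{i,l})\pi_k(z_{i-1}=l) = \pi_k(z_{i-1}=l)$, collapsing to $F_{i-1,j-1}$ and leaving only the single surviving term $p_{i,j}\pi_k(z_{i-1}=j)$. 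Hence
\begin{align*}
F_{i,j} = F_{i-1,j-1} + p_{i,j}\,\pi_k(z_{i-1} = j),
\end{align*}
and solving for $p_{i,j}$ yields precisely $\bigl(\sum_{l=1}^{j}\pi_k(z_i=l) - \sum_{l=1}^{j-1}\pi_k(z_{i-1}=l)\bigr)/\pi_k(z_{i-1}=j)$, the assertion of the proposition.

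The main obstacle is the bookkeeping in this summation step: one must reindex the advance contribution $\sum_{l=1}^{j}\bigl(1-p_{i,l-1}\bigr)\pi_k(z_{i-1}=l-1)$ as a sum over $l-1$, discard the vanishing $l=1$ boundary term via $\pi_k(z_{i-1}=0)=0$, and recognize that the remaining interior terms combine against the stay contributions so that only $F_{i-1,j-1}$ and $p_{i,j}\pi_k(z_{i-1}=j)$ persist. I would also flag the mild technical caveat that the formula is meaningful only where $\pi_k(z_{i-1}=j) > 0$, i.e., where state $j$ is feasible at time $i-1$; outside that range the conditioning event has zero probability and the transition is undefined. Everything else reduces to the routine algebra of substituting the hypergeometric marginals if an explicit closed form is desired.
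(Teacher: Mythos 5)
Your proof is correct, and it rests on exactly the same two ingredients as the paper's: the change point constraint (state $j$ at time $i$ is reachable only from $j$ or $j-1$ at time $i-1$), the resulting law-of-total-probability recursion $\pi_k(z_i=j) = p_{i,j}\,\pi_k(z_{i-1}=j) + (1-p_{i,j-1})\,\pi_k(z_{i-1}=j-1)$, and the base case $p_{i,1} = \pi_k(z_i=1)/\pi_k(z_{i-1}=1)$. Where you diverge is in how the recursion is closed. The paper substitutes state-by-state: it solves $j=1$, plugs that into the $j=2$ equation, solves, and then asserts "in general we find recursively" without carrying out the induction for arbitrary $j$. You instead sum the recursion over $l=1,\dots,j$, reindex the advance terms, and observe that the stay and advance contributions for interior states combine to $\pi_k(z_{i-1}=l)$, so that the cumulative marginals telescope to $F_{i,j} = F_{i-1,j-1} + p_{i,j}\,\pi_k(z_{i-1}=j)$, from which the claimed formula falls out for every $j$ in a single step. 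Your version is arguably tighter: it dispenses with induction entirely and delivers the general-$j$ identity in closed form, whereas the paper's sketch leaves the inductive step implicit. Your reindexing bookkeeping checks out (the $l=1$ boundary term vanishes under the convention $\pi_k(z_{i-1}=0)=0$), and your caveat that the formula is only meaningful where $\pi_k(z_{i-1}=j)>0$ is a point the paper does not state but should; note also that your formula, like the paper's, correctly returns $p_{i,k}=1$ at $j=k$, consistent with the paper's separate stipulation that $\pi_k(z_i=k \mid z_{i-1}=k)=1$.
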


\begin{figure}[t]
    \begin{subfigure}{.5\textwidth}
    \includegraphics[width=1.\linewidth]{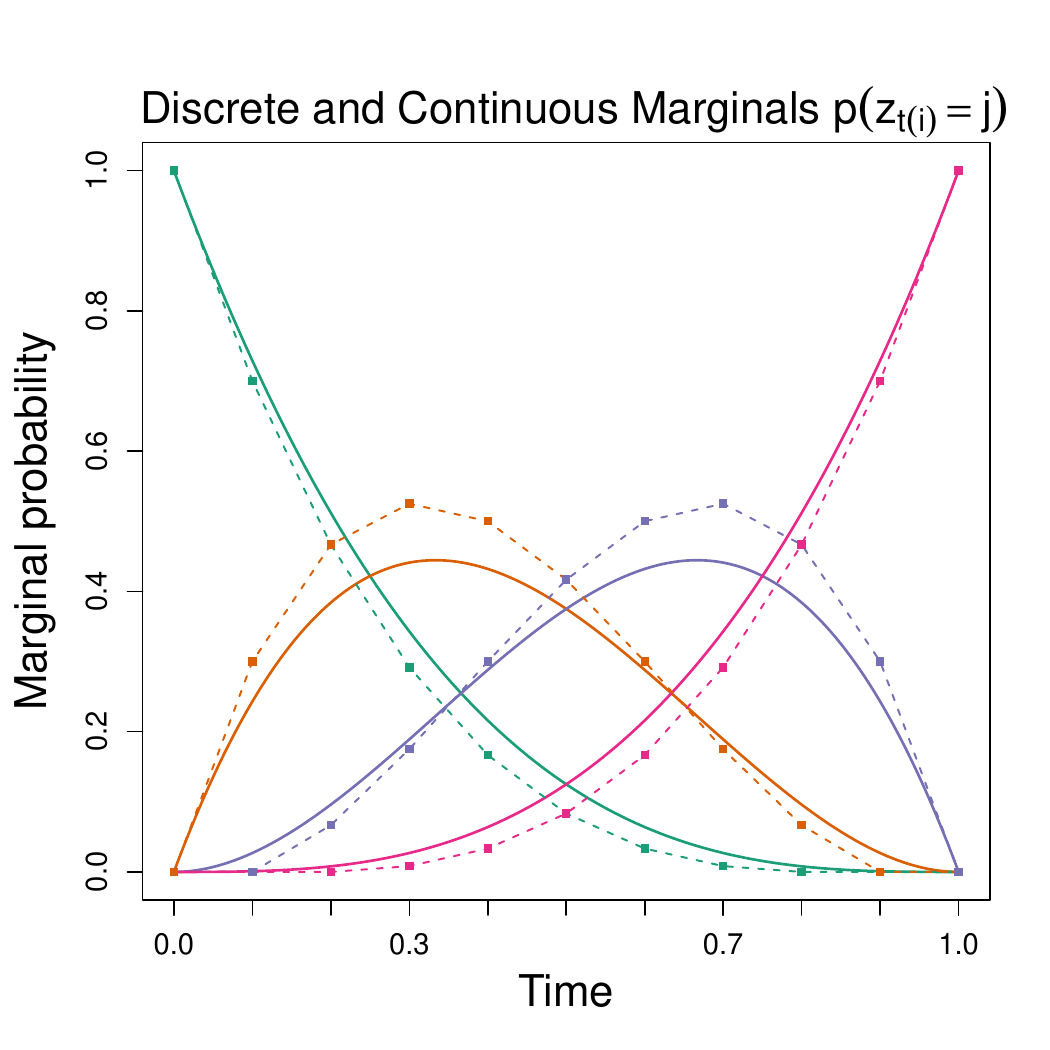}
    \end{subfigure}%
    \begin{subfigure}{.5\textwidth}
        \includegraphics[width=1.\linewidth]{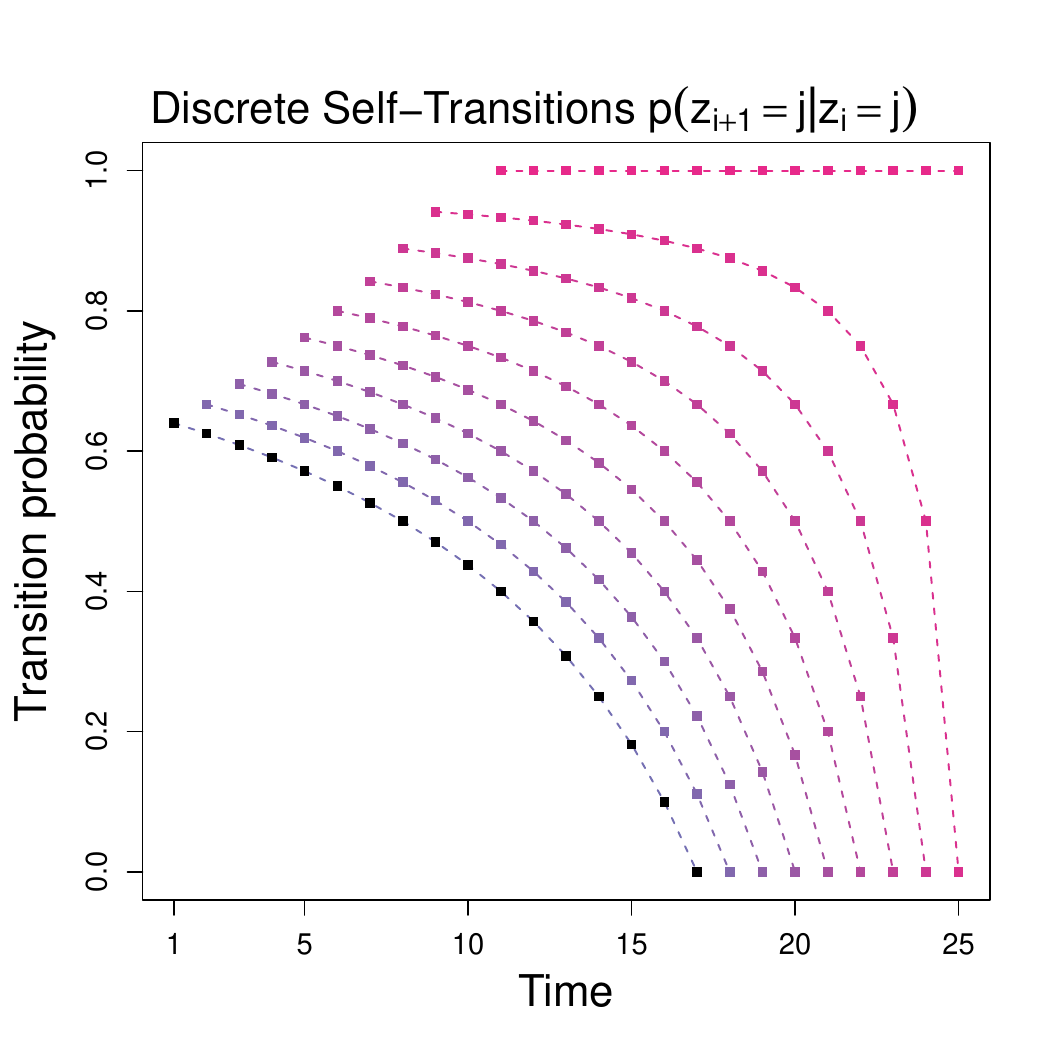}
    \end{subfigure}
\caption{(\textit{Left}) An example of noninformative discrete time hypergeometric state marginals(dotted), $n=10$ and $k=4$ with green as segment 1 and pink as segment 4. Noninformative continuous time state marginal are the corresponding solid lines. (\textit{Right}) Discrete time noninformative transition probabilities for $n=25$ and $k=10$ for illustration. The colors range from $\pi_{10}(z_{i+1}=1|z_i=1)$(purple) to  $\pi_{10}(z_{i+1}=10|z_i=10)$(pink).}
\label{fig:marginalsNdisctrans}
\end{figure}

The proposition shows that in discrete time the noninformative transition probabilities from $z_i$ to $z_{i+1}$ are a direct functional of the hypergeometric distributed marginals. Please see Figure \ref{fig:marginalsNdisctrans} (\textit{Right}) for a demonstration of these noninformative transition probabilities.  Note how each state only has non-zero probability under two conditions.  The first condition is that enough observations have been collected to identify that segment. For example, $z_2$ cannot equal $3$ since there have only been two observations. The second condition is that enough observations remain to exhaust all $k$ segments. For example, $z_{n-1}$ cannot equal $k-2$ since there is only one observation remaining and $z_n=k$ by definition.

\section{Relaxing to Continuous Time}
\label{sec:relaxCont}
To derive the continuous time Markov chain for the state variables $\{z_{t_i}\}_{i=0}^n$, it will be helpful to first derive the continuous time marginals $\pi_k(z_{t_i}=j)$ so that we can later evaluate the transitions via $\pi_k(z_{t_{i+1}}=h|z_{t_i}=j) = \pi_k(z_{t_{i+1}}=h,z_{t_i}=j)/\pi_k(z_{t_i}=j)$. We start with the discrete time, hypergeometric distributed marginals from Proposition \ref{prop:hypergeo} and derive their convergence in distribution as time becomes continuous.  To that end, define time in the interval $t\in[0,1]$.  To relate discrete time to continuous time, let continuous time $t$ be mapped to discrete time through the function $i(t) = \floor{tn}$. 
\begin{theorem}[Noninformative marginal convergence: Hypergeometric to Bernstein]
\label{thm:hgeobern}
Let $t\in[0,1]$. The limit of the marginal prior probability of state $j$ at discrete time $i(t) = \floor{tn}$, as $n\to \infty$, and after normalizing the index $i(t)/n = \floor{tn}/n$, converges to the Bernstein polynomial distribution,
\begin{align*}
\pi_k(z_{t}=j) = \binom{k-1}{j-1} t^{j-1}(1-t)^{k-j}
\end{align*}    
\end{theorem}
Please see Figure \ref{fig:marginalsNdisctrans}(\textit{Left}) for an example of these continuous-time marginals compared with the discrete time marginals from Proposition \ref{prop:hypergeo}.

\subsection{Relating state variables with segment lengths}
\label{subsec:represent}
Now that we have continuous time marginals, the final piece to solving the continuous time transition probabilities is evaluating $\pi_k(z_{t_{i+1}}=h,z_{t_i}=j)$. Recall from our discussion in Section \ref{sec:intro}, there is an equivalence relation between the two parameterizations, namely, $\bm{1}(z_{t_i} = j) = \bm{1}\Big(\sum_{l=1}^{j-1}\zeta_l \leq t_i < \sum_{l=1}^{j}\zeta_l\Big)$. Furthermore, the noninformative prior on segment lengths is $\bm{\zeta}\sim \textbf{Dir}(1_k)$ \citep{stephens1994bayesian}. Having distributional equivalence between $\bm{1}(z_{t_i} = j)$ and $\bm{1}\Big(\sum_{l=1}^{j-1}\zeta_l \leq t_i < \sum_{l=1}^{j}\zeta_l\Big)$ would provide a path for computing the joint probabilities since
\begin{equation}
\label{eq:disteq}
\pi_k(z_{t_{i+1}}=h,z_{t_i}=j) = \pi_k\bigg(\sum_{l=1}^{h-1}\zeta_l \leq t_{i+1} < \sum_{l=1}^{h}\zeta_l, \sum_{l=1}^{j-1}\zeta_l \leq t_{i} < \sum_{l=1}^{j}\zeta_l\bigg)   
\end{equation}
We prove this equivalence in the following theorem,
\begin{theorem}[Distributional equivalence in continuous time]
\label{thm:represent}
Let segment lengths $\bm{\zeta} \sim \mathit{Dir}(\bm{1}_k)$ and let $z_t$ be the random vector defined by the indicators $\bm{1}(z_{t}=j)\vcentcolon=\bm{1}(\sum_{l=1}^{j-1}\zeta_l \leq t < \sum_{l=1}^{j}\zeta_l)$ for $j=1,...,k$. Then the marginals of $z_t$ are Bernstein polynomial distributed, $\pi_k(z_{t}=j) = \binom{k-1}{j-1} (1-t)^{k-j} t^{j-1}$.
\end{theorem}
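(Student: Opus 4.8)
The plan is to pass from the segment lengths to the change-point locations $\tau_j \coloneqq \sum_{l=1}^{j}\zeta_l$, with the conventions $\tau_0 = 0$ and $\tau_k = 1$ (the latter because $\bm{\zeta}$ lives on the simplex, so $\sum_{l=1}^{k}\zeta_l = 1$). The defining indicator then reads $\bm{1}(z_t = j) = \bm{1}(\tau_{j-1} \leq t < \tau_j)$, and since $\tau_{j-1} \leq \tau_j$ almost surely the events nest, $\{\tau_j \leq t\} \subseteq \{\tau_{j-1}\leq t\}$, giving
\begin{equation*}
\pi_k(z_t = j) = \Pr(\tau_{j-1} \leq t) - \Pr(\tau_j \leq t).
\end{equation*}
Everything thus reduces to the marginal laws of the partial sums $\tau_j$.

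The cleanest route I would take uses the classical spacings representation of $\mathrm{Dir}(\bm{1}_k)$: if $U_1, \dots, U_{k-1}$ are i.i.d.\ $\mathrm{Uniform}(0,1)$ with order statistics $U_{(1)} < \cdots < U_{(k-1)}$, then the consecutive gaps $(U_{(1)},\, U_{(2)}-U_{(1)},\, \dots,\, 1 - U_{(k-1)})$ are exactly $\mathrm{Dir}(\bm{1}_k)$ distributed. Under this coupling $\tau_j = U_{(j)}$, so the event $\{z_t = j\} = \{U_{(j-1)} \leq t < U_{(j)}\}$ is precisely the event that exactly $j-1$ of the $k-1$ uniforms fall in $[0,t]$. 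The count $N_t \coloneqq \#\{l : U_l \leq t\}$ is $\mathrm{Binomial}(k-1, t)$, whence
\begin{equation*}
\pi_k(z_t = j) = \Pr(N_t = j-1) = \binom{k-1}{j-1} t^{j-1}(1-t)^{k-j},
\end{equation*}
which is the claimed Bernstein form; the boundary cases $j=1$ and $j=k$ fall out automatically from $\tau_0 = 0$ and $\tau_k = 1$.

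An equivalent analytic route avoids quoting the spacings lemma: by the Dirichlet aggregation property, $(\tau_j, 1 - \tau_j) \sim \mathrm{Dir}(j, k-j)$, i.e.\ $\tau_j \sim \mathrm{Beta}(j, k-j)$, so $\Pr(\tau_j \leq t) = I_t(j, k-j)$, the regularized incomplete beta function. Substituting the integer-parameter identity $I_t(a,b) = \sum_{m=a}^{a+b-1}\binom{a+b-1}{m} t^m (1-t)^{a+b-1-m}$ (with $a+b-1 = k-1$ for both $\tau_{j-1}$ and $\tau_j$) into the difference above makes the two partial sums telescope, leaving only the single $m = j-1$ term. I expect the main obstacle to be a matter of rigor rather than ideas: one must justify the distributional identity for the partial sums (either the spacings representation or the aggregation step) and treat the boundary conventions $\tau_0=0$, $\tau_k=1$ carefully, since $I_t(0,k)$ is not defined through the beta integral. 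Once these are in place, the Bernstein form emerges immediately from either a one-line binomial counting argument or a one-line telescoping cancellation.
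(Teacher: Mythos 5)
Your proof is correct, and your primary route is genuinely different from the paper's. Both arguments begin the same way: your nesting observation $\{\tau_j \leq t\} \subseteq \{\tau_{j-1} \leq t\}$ yields $\pi_k(z_t=j) = \Pr(\tau_{j-1}\leq t) - \Pr(\tau_j \leq t)$, which is exactly the paper's Lemma~\ref{lemma:represent} (the paper derives it through a conditional-probability manipulation that implicitly requires $\Pr(\tau_{j-1}\leq t)>0$; your set-difference argument is cleaner and unconditional). From there the paper invokes the Dirichlet aggregation property to get $\tau_{j-1}\sim \mathrm{Beta}(j-1,\,k-j+1)$ and $\tau_j\sim \mathrm{Beta}(j,\,k-j)$, but rather than evaluating the incomplete beta integrals it differentiates: it checks agreement at $t=0$ and shows, via the fundamental theorem of calculus and the product rule, that $\tfrac{d}{dt}\,d_j(t) = \tfrac{d}{dt}\,b_j(t)$, concluding equality. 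Your spacings argument bypasses the calculus entirely: coupling $\mathrm{Dir}(\bm{1}_k)$ to the spacings of uniform order statistics turns $\{z_t=j\}$ into the event $\{N_t = j-1\}$ for a $\mathrm{Binomial}(k-1,t)$ count, and the Bernstein form is read off in one line. This buys a purely probabilistic, computation-free proof and makes transparent \emph{why} the marginals are Bernstein polynomials --- they are binomial probabilities --- while handling the boundary cases $j=1$ and $j=k$ uniformly through $\tau_0=0$ and $\tau_k=1$. Your secondary route coincides with the paper's up to the aggregation step but replaces derivative matching with the integer-parameter expansion of $I_t(a,b)$ and a telescoping cancellation, an algebraic shortcut the paper could equally have used. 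The rigor points you flag (justifying the spacings representation, and the fact that $I_t(0,k)$ must be handled by the convention $\tau_0=0$ rather than a beta integral) are both standard and correctly anticipated; note also that ties among the uniforms, or a uniform landing exactly at $t$, occur with probability zero, so the strict versus non-strict inequality conventions in the defining indicator are immaterial.
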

This theorem connects the duality between state variables $\{z_{t_i}\}_{i=0}^n$ and segment lengths $\{\zeta_j\}_{j=1}^k$ in the offline setting and provides the tools to evaluate the joint probability $\pi_k(z_{t_{i+1}}=h,z_{t_i}=j)$ needed for the continuous time transitions.

\section{Continuous Time Change Point Processes}
\label{sec:conttime}
We are now in a position to derive the continuous time Markov chain $\pi_k(z_{t}=h|z_{s}=j)$ for $0\leq s < t\leq 1$ and $h\geq j$. Since we have the marginals from Theorem \ref{thm:hgeobern}, we only need to evaluate the joint probability $\pi_k(z_{t}=h,z_{s}=j)$. By distributional equivalence in Theorem \ref{thm:represent}, we have the segment lengths in Equation \ref{eq:disteq} are distributed Dirichlet with parameter $1_k$. Putting these tools together to evaluate $\pi_k\big(\sum_{l=1}^{h-1}\zeta_l \leq t_{i+1} < \sum_{l=1}^{h}\zeta_l, \sum_{l=1}^{j-1}\zeta_l \leq t_{i} < \sum_{l=1}^{j}\zeta_l\big)$, we have the following,

\begin{theorem}
\label{thm:conttime}
For times $0 \leq s < t \leq 1$ and states $j = 1, \ldots, k$ and $h = j, \ldots, k$, we have the following transition probabilities $P_{jh}(s, t) \coloneq \pi_k(z_t = h \,|\, z_s = j)$:
\[
P_{jh}(s, t) = \binom{k-j}{h-j}\Bigg(1 - \frac{1 - t}{1-s}\Bigg)^{h-j} \Bigg(\frac{1-t}{1-s}\Bigg)^{k-h}
= b_{h-j, k-j}\Bigg(\frac{t-s}{1-s}\Bigg)
\]
where $b_{\nu, n}(x) = \binom{n}{\nu}x^\nu (1 - x)^{n - \nu}$ is the $\nu$-index $n$-degree Bernstein polynomial. Furthermore, these transition probabilities satisfy the Kolmogorov equations, $P_{jh}(s, t) = \sum_{l = j}^h P_{jl}(s, r)P_{lh}(r, t)$ for $0 \leq s < r < t \leq 1$.
\end{theorem}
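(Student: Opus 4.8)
The plan is to compute the joint probability $\pi_k(z_t = h, z_s = j)$ using the Dirichlet representation from Theorem \ref{thm:represent} and then divide by the marginal $\pi_k(z_s = j)$ already supplied there. The key fact I would exploit is that a $\mathit{Dir}(\bm{1}_k)$ vector of segment lengths is, in distribution, the vector of gaps induced by $k-1$ points thrown independently and uniformly on $[0,1]$; equivalently, the partial sums $S_m = \sum_{l=1}^m \zeta_l$ are the order statistics of $k-1$ i.i.d.\ $\mathrm{Unif}[0,1]$ variables. Under this picture the event $\{z_s = j\} = \{S_{j-1} \le s < S_j\}$ is precisely the event that exactly $j-1$ of the $k-1$ points fall in $[0,s]$, and likewise $\{z_t = h\}$ means exactly $h-1$ points fall in $[0,t]$.

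Since $s < t$, I would decompose $[0,1]$ into the three cells $[0,s]$, $(s,t]$, $(t,1]$. The joint event $\{z_s = j, z_t = h\}$ then says exactly $j-1$ points land in the first cell, exactly $h-j$ in the second, and the remaining $k-h$ in the third. Because the $k-1$ points are i.i.d.\ uniform, the cell counts are multinomial with cell probabilities $s$, $t-s$, $1-t$, giving
\[
\pi_k(z_t = h, z_s = j) = \frac{(k-1)!}{(j-1)!\,(h-j)!\,(k-h)!}\, s^{j-1}(t-s)^{h-j}(1-t)^{k-h}.
\]
Dividing by $\pi_k(z_s = j) = \binom{k-1}{j-1}s^{j-1}(1-s)^{k-j}$, the $s^{j-1}$ factors cancel and the factorials collapse to $\binom{k-j}{h-j}$; writing $(1-s)^{k-j} = (1-s)^{h-j}(1-s)^{k-h}$ and using $\frac{t-s}{1-s} = 1 - \frac{1-t}{1-s}$ then yields the claimed Bernstein form $b_{h-j,\,k-j}\!\big(\tfrac{t-s}{1-s}\big)$.

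For the Chapman--Kolmogorov identity I would proceed algebraically. Fixing $0 \le s < r < t \le 1$ and substituting the formula into $\sum_{l=j}^h P_{jl}(s,r)P_{lh}(r,t)$, the powers of $(1-r)$ cancel completely, the powers of $(1-s)$ combine to $(1-s)^{-(k-j)}$ independently of $l$, and $(1-t)^{k-h}$ factors out. After reindexing with $m = l-j$, $N = k-j$, $M = h-j$, the identity reduces to
\[
\sum_{m=0}^{M} \binom{N}{m}\binom{N-m}{M-m}(r-s)^m (t-r)^{M-m} = \binom{N}{M}(t-s)^M.
\]
The subset-of-a-subset identity $\binom{N}{m}\binom{N-m}{M-m} = \binom{N}{M}\binom{M}{m}$ pulls $\binom{N}{M}$ out of the sum, and what remains is the binomial expansion of $\big((r-s)+(t-r)\big)^M = (t-s)^M$.

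I expect the main obstacle to be bookkeeping rather than substance: the genuine work lies in recognizing the uniform-order-statistics representation that turns the joint event into a clean multinomial count, after which both the transition formula and the semigroup identity fall out of standard binomial identities. One could alternatively read the Chapman--Kolmogorov relation off the Markov property itself, which holds because, conditional on $j-1$ points lying in $[0,s]$, the remaining $k-j$ points are i.i.d.\ uniform on $(s,1]$, so the evolution after time $s$ depends only on the current state $j$ and the elapsed position $s$.
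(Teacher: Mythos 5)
Your proposal is correct, and for the main computation it takes a genuinely different and slicker route than the paper. The paper establishes the joint probability $\pi_k(z_t=h, z_s=j)$ by integrating the constant $\mathit{Dir}(\bm{1}_k)$ density $B^{-1}(\bm{1}_k)$ over the simplex region directly, in three separate cases ($h=j$, $h=j+1$, $h>j+1$), each requiring a long chain of nested iterated integrals that build up the factorial terms one variable at a time; all three cases culminate in exactly the expression you obtain, $\frac{s^{j-1}}{(j-1)!}\frac{(t-s)^{h-j}}{(h-j)!}\frac{(1-t)^{k-h}}{(k-h)!}(k-1)!$, before dividing by the marginal. You instead invoke the classical fact that $\mathit{Dir}(\bm{1}_k)$ segment lengths are the spacings of $k-1$ i.i.d.\ uniform points, so the partial sums are uniform order statistics and the joint event $\{z_s=j, z_t=h\}$ becomes a multinomial count over the cells $[0,s]$, $(s,t]$, $(t,1]$ — this yields the joint law in a single stroke, with no casework (the boundary conventions $\le$ versus $<$ are immaterial since the law is continuous). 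What the paper's approach buys is self-containment: it needs nothing beyond the Dirichlet density and Fubini. What yours buys is brevity, a unified treatment of all $h \ge j$, and a conceptual explanation of the Markov property itself (conditional on $j-1$ points in $[0,s]$, the remaining $k-j$ points are i.i.d.\ uniform on $(s,1]$), which the paper never articulates — it only verifies Chapman--Kolmogorov after the fact. On that second part your argument coincides with the paper's almost line for line: both cancel the $(1-r)$ powers, pull out $(1-s)^{-(k-j)}$ and $(1-t)^{k-h}$, apply the subset-of-a-subset identity $\binom{k-j}{l-j}\binom{k-l}{h-l} = \binom{k-j}{h-j}\binom{h-j}{l-j}$, and finish with the binomial expansion of $\big((r-s)+(t-r)\big)^{h-j} = (t-s)^{h-j}$.
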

The proof is in the appendix. There are a number of interesting corollaries from Theorem~\ref{thm:conttime}. For instance, the continuous time marginals of $z_t$ from Theorem~\ref{thm:hgeobern} are verified by plugging in $p_j(t) \coloneq P_{1,j}(0, t) = \binom{k-1}{j-1}t^{j-1}(1-t)^{k-j}$. 
Also, in particular, self-transitions are given by $P_{jj}(s, t) = [(1-t)/(1-s)]^{k-j}$, which verifies that once state $k$ is reached, the chains stays in state $k$, $P_{kk}(s, t) = 1$. 

Note, one important difference between change point modeling in discrete time versus continuous time is that more than one change can occur between two consecutive observations.  For this reason, our prior specification includes transitions from state $j$ to any of $h = j,\dots,k$. The prior distribution for the vector $\bm{z}$ of continuous time state variables is, 
\begin{equation}\label{eq:BPP}
\pi_k(\bm{z}) = \prod_{i=1}^n \prod_{j=1}^k \prod_{h=j}^k \pi_k(z_{t_i} = h \,|\, z_{t_{i-1}} = j)^{1\{z_{t_i}=h\} 1\{z_{t_{i-1}}=j\}}    
\end{equation}
For the remainder of this work, we refer to this prior as the Bernstein polynomial process or \textbf{BPP} and change detection models that assume this prior \textbf{BPP} models.

\begin{figure}[t]
    \begin{subfigure}{.5\textwidth}
    \includegraphics[width=1.\linewidth]{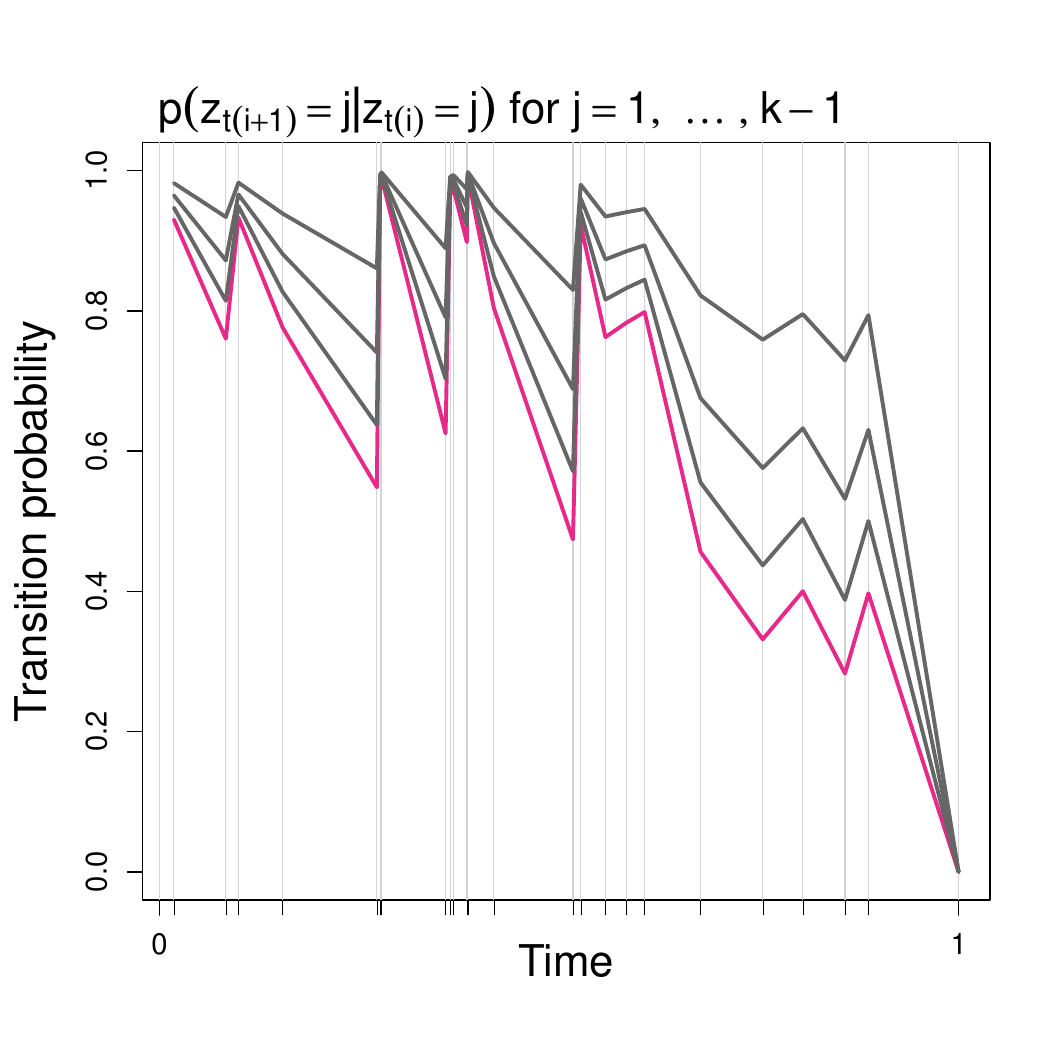}
    \end{subfigure}%
    \begin{subfigure}{.5\textwidth}
        \includegraphics[width=1.\linewidth]{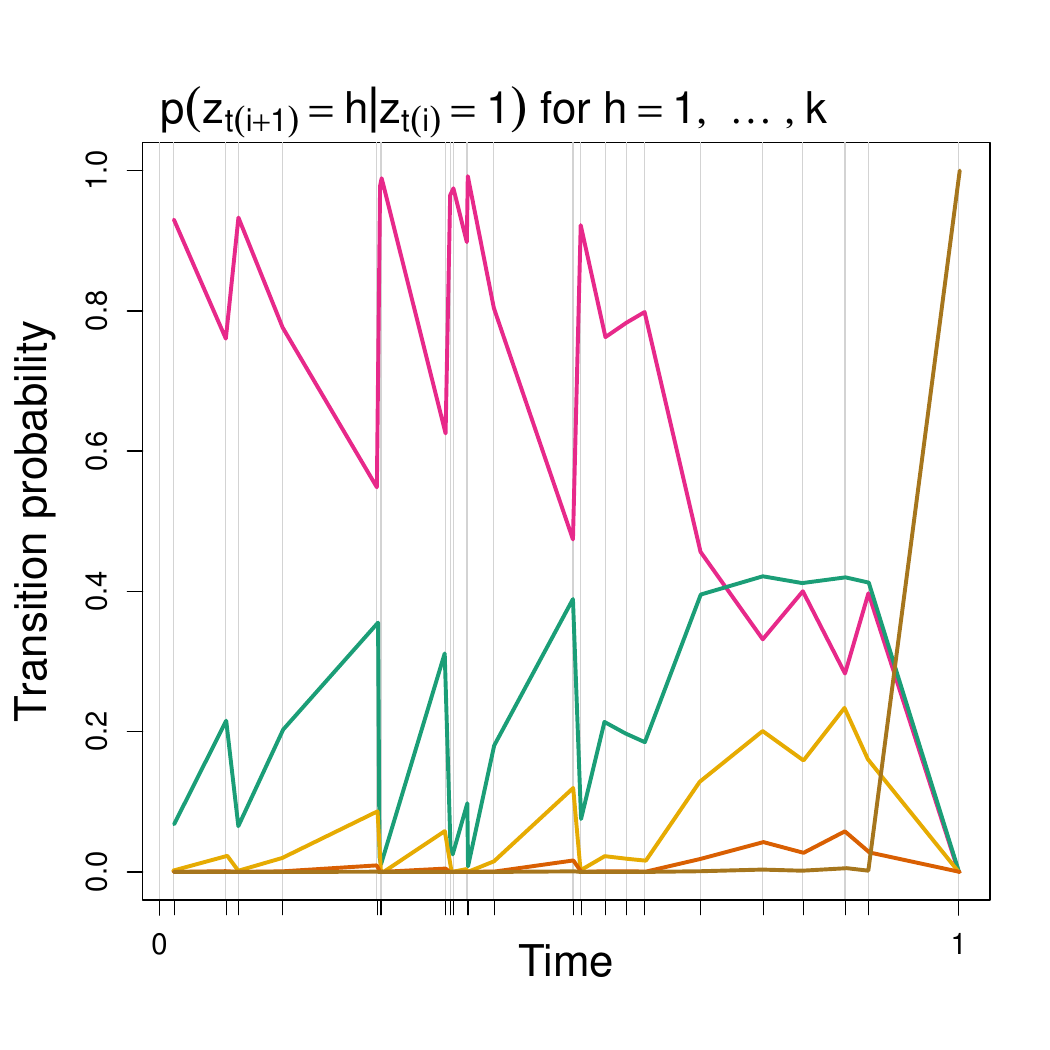}
    \end{subfigure}
\caption{An example of continuous time transitions on 25 uniformly distributed times and 5 segments. (\textit{Left}) Self-transitions $\pi_5(z_{t_{i+1}}=j|z_{t_{i}}=j)$. Pink is $j=1$, with the remaining in gray, following in increasing order of probability. (\textit{Right}) Transitions $\pi_5(z_{t_{i+1}}=h|z_{t_{i}}=1)$ for $h=1,\ldots,k-1$. Pink is $h=1$, followed by green, yellow, orange, and brown.}
\label{fig:conttime}
\end{figure}

Figure~\ref{fig:conttime} (\textit{Left}) captures important behaviors of the self-transitions $\pi_k(z_{t_{i+1}}=j|z_{t_{i}}=j)$ for an example of $25$ time points and $k=5$ segments. When observations are far apart in time, the probabilities of staying in the same state drop.  Whereas, when observations are very close in time, the probabilities jump close to 1. This behavior reflects our noninformative belief that a change is more likely to occur as more time elapses between observations. Using the same example, we can also study transitions $\pi_k(z_{t_{i+1}}=h|z_{t_{i}}=1)$ for $h=1,\ldots,k-1$ in the (\textit{Right}) of Figure \ref{fig:conttime}.  Notice the probability of staying in state $1$ dominates for the first half of time, but as time comes close to an end, the probabilities of transitioning to higher states take over.
\section{Methodology}
\label{sec:meth}
The main benefit of Theorem \ref{thm:conttime} is that the complete data likelihood and priors can now be expressed in terms of state variables $\bm{z}\sim \textbf{BPP}$ drawn from a Bernstein polynomial process, as opposed to segment lengths $\bm{\zeta}\sim \textbf{Dir}(1_k)$ which are hard to perform posterior inference on. With this state variable parameterization, efficient inference using EM is possible since the marginal and pairwise posterior expectations of $[\bm{z}|\bm{y},\Theta_k]$ can be determined using the forward-backward algorithm. Furthermore, posterior samples of the full vector $[\bm{z}|\bm{y},\Theta_k]$ can be simulated exactly within a broader Gibbs approach \citep{chib1996calculating} which does not require a Metropolis-Hastings step. Whereas, the posterior distribution $[\bm{\zeta}|\bm{y},\Theta_k]$ requires a component-wise MCMC on each $\zeta_j$ or a posterior approximation \citep{stephens1994bayesian,chib1998estimation}. We derive this simulation approach in the appendix and for remainder of the paper focus on an EM approach. 

In this section, we will characterize our model end-to-end, providing prior justifications on the change point locations, number of segments, and parameterizations. We will introduce an additional latent variable framework for modeling heavy tailed error distributions and finally propose a novel change detection loss function and derive its Bayes estimator in discrete and continuous time.
\subsection{Model}
For a fixed number of segments $k$, the model can be characterized by its prior distribution on the change point process $\bm{z}$, its likelihood distribution $f$, and its prior on the parameters $\Theta_k$,
\begin{align*}
    \prod_{i=0}^n \prod_{j=1}^k \bigg(f(y_{t_i}|\bm{\theta}_j)p(\bm{\theta}_j)\bigg)^{1\{z_{t_i}=j\}} \prod_{i=0}^{n-1} \prod_{j=1}^{k}\prod_{h=j}^k \pi_k(z_{t_{i+1}}=h|z_{t_{i}}=j)^{1\{z_{t_{i+1}}=h\} 1\{z_{t_i}=j\} }
\end{align*}
We assume the parameters are independent across segments $\bm{\theta}_j \ind \bm{\theta}_l$ for $j\ne l$, as well as conditional independence of the likelihood observations given the model $y_i \ind y_j | \bm{z},\Theta_k$. The choice of likelihood distribution $f$ also does not affect our main results; the EM and simulation procedures for the posterior distribution of $\bm{z}$ are analytically tractable regardless of the likelihood distribution. The maximization steps for $\Theta_k$ in the EM approach and the posterior distribution of $[\Theta_k|\bm{y},\bm{z}]$ in the simulation approach are the only steps for which the analytical tractability depends on the form of the likelihood. In this work, the likelihood distribution $f$ is assumed to be Gaussian.
\subsection{Robustness to outliers}
\label{sub:robust}
There is a tradeoff between outliers and change points. For example, roughly, if there is an outlier very far from its conditional expectation, there may be more evidence for placing two change points directly before and after the outlier, even though it is not a true change.  One way to address this problem is to assume a likelihood distribution with heavy tails. To that end, we introduce auxiliary variance scaling parameters drawn i.i.d. $q_{t_i} \sim \textbf{Ga}(\nu/2,\nu/2)$ such that $[y_{t_i}|\bm{\theta}_j,\sigma_k^2,q_{t_i}] \sim \mathcal{N}(\bm{\theta}_j,\sigma_k^2/q_{t_i})$ and the marginal distribution $[y_{t_i}|\bm{\theta}_j,\sigma_k^2]\sim \textbf{lst}(\bm{\theta}_j,\sigma_k^2,\nu)$ is location-scale t-distributed with $\nu$ degrees of freedom. This approach extends that of \cite{little2019statistical} to the regression and change point settings.

There are two major methodological benefits to introducing these auxiliary parameters.  The first is that within EM or a Gibbs sampling framework, using a Gaussian likelihood, the conditional posterior distribution $[q_{t_i}|y_{t_i},\bm{\theta}_j,\sigma_k^2]$ is gamma distributed making expectations and simulation straightforward. Furthermore, the maximization steps for $\bm{\theta}_j$ and $\sigma_k^2$ remain analytically tractable since the conditional likelihood is Gaussian.  The second major benefit is the marginal likelihood is t-distributed, enabling analytically tractable inference of posterior moments of $\bm{z}$ using the forward-backward algorithm. These benefits are detailed in the subsection \ref{subsec:EM}.
\subsection{Prior on number of segments}
\label{sub:pk}
We would like to discuss two different assumptions that lead to two different priors, respectively, on the number of segments. Typically, researchers choose the prior on the number of segments proportional to the volume of the space of change point sequences associated with that number of segments \citep{chib1998estimation,fearnhead2006exact,peluso2019semiparametric}. For example, in the geometric online setting described in those works, the implied prior probability on the number of segments is binomial distributed. In the offline/retrospective setting, all change point sequences are equally likely \textit{a priori}, and thus, under that reasoning, would lead to a prior on the number of segments that is proportional to their volume of change point sequences. In the following, we challenge this assumption, noting that just because we assume sequences are equally likely \textit{within} each number of segments $k$, this does not behoove us to carry that assumption \textit{across} the number of segments $k$, as is typically assumed. 

\subsubsection{Argument for using noninformative inverse volume}
On the one hand, we could continue to assume all change point sequences are equally likely \textit{across} $k=1\dots,K$ where $K$ is the maximum number of segments, but this would lead to a combinatorially increasing prior probability with respect to the number of segments, which may not be believable. For example, in the discrete time offline setting, this would lead to $\pi(k)\propto \binom{n}{k-1}$, that is, the normalizing constant found in Proposition \ref{prop:hypergeo}. In the continuous time setting, note from Theorem \ref{thm:conttime}, after removing all terms that depend on $j$ or $h$ from the transition probability $\pi_k(z_{t_{i}}=h|z_{t_{i-1}}=j)$, the remaining constant is $\big((1-t_{i})/(1-t_{i-1})\big)^k$, and thus the normalizing constant is the inverse of that value. As such, we would have
\[\pi_0(k)\propto \prod_{i=1}^n \big((1-t_{i})/(1-t_{i-1})\big)^{-k}\]
For $k=1,\dots,K$ under the assumption of equally likely change point sequences \textit{across} $k=1\dots,K$.

On the other hand, we may assume change point sequences are only equally likely \textit{within} each $k=1,\dots,K$ but that \textit{the prior on the number of segments should be noninformative} with respect to its volume of change point sequences. In this case, the prior probability of each $k$ should be inversely proportional to its volume of sequences. In the discrete time setting, this amounts to $\pi(k)\propto \binom{n}{k-1}^{-1}$ and in the continuous time setting 
\[\pi(k)\propto \prod_{i=1}^n \big((1-t_{i})/(1-t_{i-1})\big)^{k}\]
For $k=1,\dots,K$. This prior is more attractive, for example, in remote sensing applications where we expect a small number of changes on the ground. 

\subsubsection{Argument for incorporating parameter space volume}
We also extend this reasoning to the volume of the parameter space associated with each number of segments $k$. For example, suppose we are modeling changes in the mean parameters of a regression with constant variance across segments. If we assume a Gaussian prior for the mean parameters $\bm{\theta}_j|\sigma_k^2 \sim \mathcal{N}(\bm{0},\sigma_k^2 \Phi)$ and an improper prior for the variance, $p(\sigma_k^2)\propto \frac{1}{\sigma_k^2}$ on some reasonable closed interval for $\sigma_k^2$, their joint distribution is,
\[p(\Theta_k,\sigma_k^2) = \prod_{j=1}^k (2\pi\sigma_k^2)^{-\frac{p}{2}} |\Phi^{-1}|^{\frac{1}{2}} \exp{\frac{-1}{\sigma_k^2} \bm{\theta}_j^T\Phi^{-1}\bm{\theta}_j} \frac{1/\sigma_k^2}{C_{\sigma_k^2}}\]
Where $\text{dim}(\bm{\theta}_j) = p\times 1$. Removing all terms that do not depend on $\Theta_k,\sigma_k^2$, we have the volume of the parameter space is $(2\pi)^{\frac{pk}{2}}|\Phi^{-1}|^{-\frac{k}{2}}C_{\sigma_k^2}$. As $(\Theta_k,\sigma_k^2) \ind \bm{z}$ \textit{a priori}, the volume of their joint distribution is the product of their volumes. 

Under the assumption that change point sequences are equally likely across $k = 1,\dots,K$,
\begin{equation}
\label{eq:impliedpk}
\pi_0(k)\propto (2\pi)^{\frac{pk}{2}}|\Phi^{-1}|^{\frac{-k}{2}} \prod_{i=1}^n \big((1-t_{i})/(1-t_{i-1})\big)^{-k}     
\end{equation}
Whereas, under the assumption that the number of segments is noninformative with respect to their corresponding volume, we invert the normalizing constant and obtain,
\begin{equation}
\label{eq:pk}
\pi(k)\propto (2\pi)^{-\frac{pk}{2}}|\Phi^{-1}|^{\frac{k}{2}} \prod_{i=1}^n \big((1-t_{i})/(1-t_{i-1})\big)^{k}     
\end{equation}
We compare these priors from Equations \ref{eq:impliedpk} and \ref{eq:pk} in Figure \ref{fig:prior_num_changes} across $2000$ samples of time from a uniform distribution for an intercept only model. 
Furthermore, we examine the performance of both priors in the case study and find the noninformative prior on number of segments from Equation \ref{eq:pk} has largely better performance.
\begin{figure}[t]
    \centering
    \includegraphics[width=0.5\linewidth]{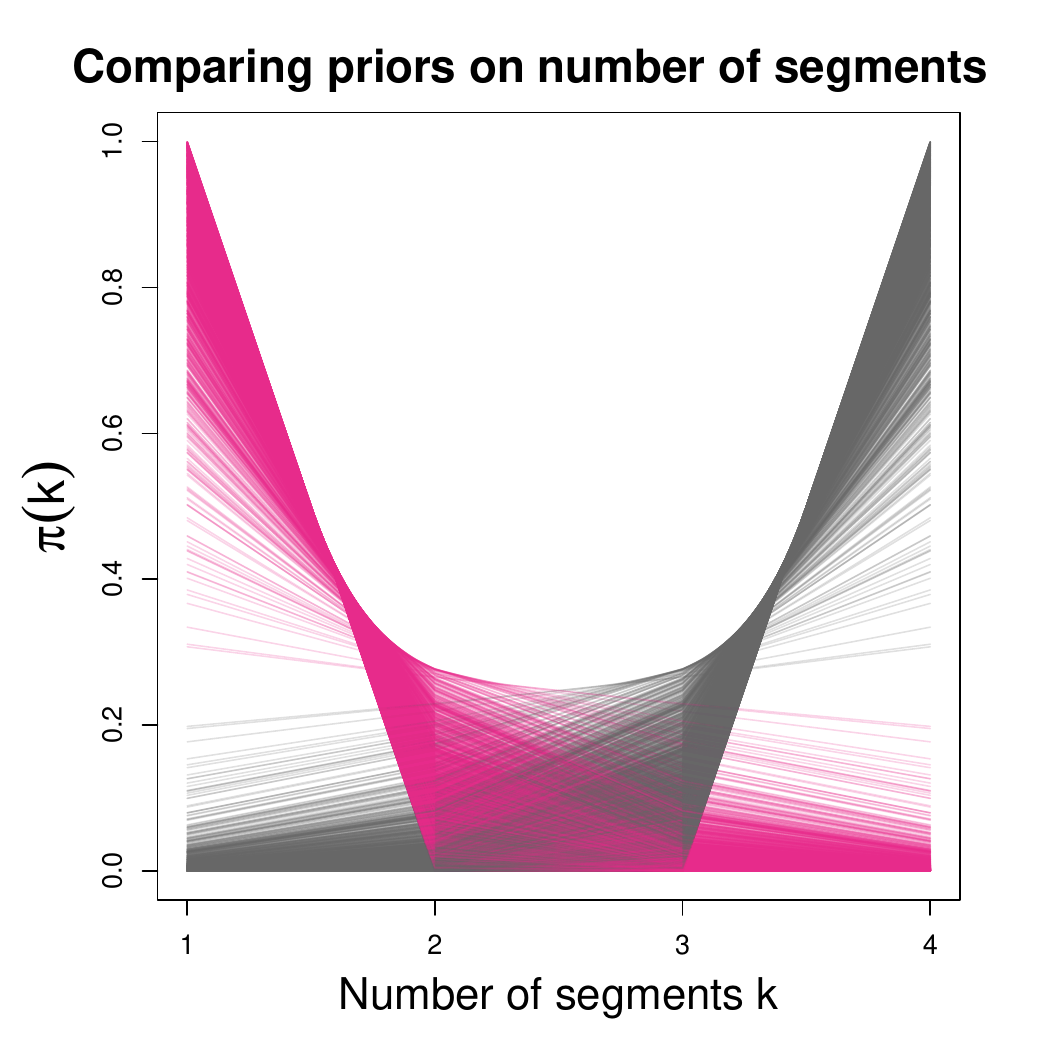}
    \caption{Two priors on number of segments are compared. Twenty time points are simulated from a uniform distribution on $[0,1]$, and $\pi(k)$ is plotted for each. The prior on $k$ assuming equally likely change point sequences across $k=1,\dots,4$ is in gray reaching maximum probability at $k=4$. The prior on $k$ assuming $k$ is noninformative with respect to the volume of its model is in pink, having maximum probability at $k=1$.}
    \label{fig:prior_num_changes}
\end{figure}
\subsection{Expectation Maximization}
\label{subsec:EM}
In many applications of change detection models, particularly in remote sensing data with trillions of time series to analyze, efficient estimation procedures are necessary. The forward backward algorithm within an Expectation Maximization (EM) framework is an efficient inference algorithm for hidden Markov models. The algorithm uses dynamic programming to evaluate the posterior moments of $\bm{z}$ conditioned on maximum \textit{a posteriori} point estimates of the parameters \citep{bishop2006pattern,dempster1977maximum}. Details for the EM algorithm are available in a wide variety of sources \citep{little2019statistical}. As noted earlier, the main contribution of this paper is the continuous time Markov chain from Theorem \ref{thm:conttime}, enabling efficient and exact inference for this model, whereas MCMC or approximate methods were required before.

Define the $Q$ function as the expectation of the log complete data likelihood with respect to the posterior distribution $[\bm{z},\bm{q}|\bm{y},\Theta_k^{(s)},\sigma_k^{2(s)}]$ for the $s$th iteration of the algorithm. Plugging in a Gaussian likelihood for the function $f$ and removing any terms that are not a factor of $\Theta_k$ or $\sigma_k^2$, we arrive at,
\begin{align*}
    Q(\Theta_k|\Theta_k^{(s)}) \overset{(c)}{=} \mathbb{E}_{\bm{q},\bm{z}|\bm{y},X,\Theta_k^{(s)}}\bigg[\sum_{i=0}^n\sum_{j=1}^k 1\{z_{t_{i}}=j\}\bigg( -\log(\sigma_k) - \frac{q_{t_i}}{2\sigma_k^2}(y_{t_i} - x_{t_i}^T\bm{\theta}_j)^2\bigg) + \log\big(p(\Theta_k,\sigma_k^2)\big)\bigg]
\end{align*}
Where we assume a Gaussian prior for the mean parameters $\bm{\theta}_j|\sigma_k^2 \sim \mathcal{N}(\bm{0},\sigma_k^2 \Phi)$ to represent our prior belief that the mean parameters are not far from zero. We assume an improper prior for the variance, $p(\sigma_k^2)\propto \frac{1}{\sigma_k^2}$.

The first step of the EM algorithm is to evaluate the posterior expectations of the relevant terms in the $Q$ function. In this case, we evaluate the posterior expectations of $\big(1\{z_{t_{i}}=j\}q_{t_i}\big)$ and $1\{z_{t_{i}}=j\}$. The first of these can be evaluated as a product of a conditional expectation and a marginal distribution,
\begin{align*}
\mathbb{E}_{z_{t_i},q_i|\bm{y},X,\Theta_k^{(s)}}\bigg[1\{z_{t_{i}}=j\}q_{t_i}\bigg] = \mathbb{E}_{q_{t_i}|z_{t_i},\bm{y},X,\Theta_k^{(s)}}\bigg[q_{t_i}\bigg|1\{z_{t_{i}}=j\}\bigg]\mathbb{E}_{z_{t_i}|,\bm{y},X,\Theta_k^{(s)}}\bigg[1\{z_{t_{i}}=j\}\bigg]
\end{align*}
The conditional expectation can be evaluated using the posterior distribution,
\[
[q_{t_i}|z_{t_i}=j,y_{t_i},X,\bm{\theta}_j^{(s)},\sigma_k^{2(s)}] \overset{(d)}{=} Ga(\frac{\nu+1}{2}, (\frac{\nu}{2} + \frac{(y_{t_i} - x_{t_i}^T\theta_j^{(s)})^2}{2\sigma_k^{2(s)}}))
\]
The expectations of $1\{z_{t_{i}}=j\}$ can be evaluated using the forward-backward algorithm. After these expectations are evaluated, the $Q$ function can be optimized with respect to the parameters $\Theta_k$ and $\sigma_k^2$. The M-steps for the mean parameters can be evaluated analytically since the likelihood is Gaussian, 
\[\bm{\theta}^{(s+1)}_j = \big(X^TW^{(s)}_jX + \Phi^{-1}\big)^{-1}X^T W^{(s)}_j\bm{y}\] 
Where $W^{(s)}_j$ is a diagonal matrix with entries $\mathbb{E}[1\{z_{t_i}=j\}q_{t_i}|y,\Theta_k^{(s)},\sigma_k^{2(s)}]$. The M-step for the variance $\sigma_k^2$ can also be evaluated analytically, 
\[ \sigma_k^{2(s+1)} = \frac{\sum_{i=0}^n \sum_{j=1}^k \mathbb{E} \bigg[1\{z_{t_i}=j\}q_{t_i} \big| \bm{y},\Theta^{(s)},\sigma_k^{2(s)}\bigg] \bigg(y_{t_i} -x_{t_i}^T\bm{\theta}^{(s+1)}_j\bigg)^2 + \sum_{j=1}^k\bm{\theta}^{(s+1)}_j\Phi^{-1}\bm{\theta}^{(s+1)}_j}{n + pk + 2}
\] 
Where $p$ is the dimension of $\bm{\theta}_j$ for all $j$. After the M-step is complete, the E-step is then repeated conditioned on the updated parameters.  The algorithm is repeated until convergence of the $Q$ function. Additional details are in the appendix.

\subsection{Marginal posterior distribution on number of segments}
\label{sub:pk_y}
A major benefit of reparameterizing the change point problem using state variables $\bm{z}\sim \textbf{BPP}$ within a hidden Markov model is the marginal likelihood can be computed using results from the forward backward algorithm. The forward recursions for this model represent the joint probability $a_j(i) = p(y_{t_0},\dots,y_{t_i},z_{t_i}=j|\Theta_k,\sigma_k^2)$ and take the form,
\begin{align*}
    a_j(0) &= \begin{cases}
        f(y_{t_0}|\bm{\theta}_1,\sigma_k^2) \text{ if } j=1\\
        0 \text{ else }
    \end{cases}\\
    a_j(i+1) &=  f(y_{t_{i+1}}|\bm{\theta}_j,\sigma_k^2) \sum_{l=1}^j a_l(i) \pi_k(z_{t_{i+1}}=j|z_{t_i}=l) 
\end{align*}
Where $f$ is the location-scale t-distributed likelihood described in subsection \ref{sub:robust}. Note, however, marginal likelihoods can be obtained using this approach for general conditional likelihood distributions.  The marginal likelihood is given by $f(\bm{y}|\Theta_k,\sigma_k^2) = \sum_{j=1}^k a_j(n)$. 

Since the integral of $f(\bm{y}|\Theta_k,\sigma_k^2)p(\Theta_k,\sigma_k^2)$ over $\Theta_k$ and $\sigma_k^2$ is intractable, we use a Laplace approximation keeping only the terms associated with the Bayesian information criterion for computational purposes \citep{schwarz1978estimating,tierney1986accurate,konishi2008information,killick2012optimal}. As such, the log marginal posterior distribution on the number of segments is approximated up to a normalizing constant,
\begin{equation}\label{eq:pk_y}
\log p(k|\bm{y}) \overset{(c)}{\approx} \log f(\bm{y}|\hat{\Theta}_k,\hat{\sigma_k}^2) - \frac{p_k}{2}\log(n) + \log p(k)  
\end{equation}
Where $p_k = \text{dim}(\Theta_k) + 1$ and the prior $p(k)$ is from Equation \ref{eq:pk} established in subsection \ref{sub:pk}.

\subsection{Loss function for change point locations}
\label{subsec:loss}
In this section, we introduce a loss function on change point locations and derive a Bayes estimator for that loss function in both discrete and continuous time. Define $\tau_j = \sum_{l=1}^j \zeta_j$ for $j=1,\dots,k-1$ as change point location parameters. To avoid identifiability issues, we restrict these change point locations to be at observed times $t_i\in [0,1]$ for continuous time or $t_i\in \{0,\dots,n\}$ for discrete time. For a specified number of changes $k-1$, a natural loss function for comparing two change point configurations is
the absolute loss between the $\tau_j$ locations, $L(\bm{\tau},\bm{\tau}^*)\doteq\sum_{j=1}^{k-1} |\tau_j - \tau_j^*|$. See \cite{truong2020selective} for a review of other loss functions. This absolute loss in turn induces a weighted Hamming loss between change point state sequences $\bm{z}$ and $\bm{z}^*$,
\begin{align*}
L(\bm{\tau},\bm{\tau}^*) &\doteq \sum_{j=1}^{k-1} |\tau_j - \tau_j^*| =
\sum_{i=1}^n \sum_{j=1}^{k-1}
1\big\{\min\{\tau_j, \tau_j^*\} < t_i \leq \max\{\tau_j, \tau_j^*\}\big\}(t_i - t_{i-1}) \\
&= \sum_{i=1}^n |z_{t_i} - z_{t_i}^*|(t_i - t_{i-1}) = H(\bm{z}, \bm{z}^*).
\end{align*}
The second equality holds since the difference $|\tau_j-\tau_j^*|$ is the sum of time increments within that window. The third equality holds since the indicator function of an observed time $t_i\in [\min\{\tau_j, \tau_j^*\},\max\{\tau_j, \tau_j^*\}\big\}]$ can occur for multiple segments $j$ and thus the interval $(t_i - t_{i-1})$ should be summed $|z_{t_i} - z_{t_i}^*|$ times. This final step yields a doubly weighted Hamming distance.  Note in discrete time each of the intervals $(t_i - t_{i-1}) = 1$ and so the distance reduces to summing over 
$|z_{t_i} - z_{t_{i-1}}|$.

Note, the weighted Hamming loss does not depend on the number of change points in the configurations and is thus more general. We choose to find the Bayes estimator for $H(\bm{z}, \bm{z}^*)$ so we can infer change point locations and number of change points simultaneously.
\begin{theorem}
\label{thm:bayesEst}
    The Bayes estimator for the weighted Hamming loss $H(\bm{z}, \bm{z}^*)$ between change point process realizations $\{z_{t_i}\}_{i=0}^n$ is for each $z_{t_i}$
    \[
        \hat{z}_{t_i} = \underset{j}{\text{min}}\bigg( \sum_{k = 1}^K\sum_{l=1}^j \pi(z_{t_i}=l|k,\bm{y})\pi(k|\bm{y}) \geq 0.5 \bigg)
    \]
That is, the Bayes estimator for $\bm{z}$ is the component-wise medians.  Furthermore, the Bayes estimator $\{\hat{z}_{t_i}\}_{i=1}^n$ is a change point process.  This result holds in both the discrete and continuous time settings.
\end{theorem}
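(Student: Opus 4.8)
The plan is to realize the Bayes estimator as the minimizer of the posterior expected loss $\E[H(\bm z, \bm z^*)\mid \bm y]$ and to exploit the additive, positively weighted structure of the weighted Hamming loss. Writing
\[
\E[H(\bm z, \bm z^*)\mid \bm y] = \sum_{i=1}^n (t_i - t_{i-1})\,\E\big[\,|z_{t_i} - z_{t_i}^*|\,\big|\,\bm y\,\big],
\]
each weight $(t_i - t_{i-1})$ is a positive constant independent of $\bm z^*$, so if we first drop the change-point-process constraint and minimize over all label vectors, the objective decouples and we may minimize each summand $\E[\,|z_{t_i} - z_{t_i}^*|\mid \bm y\,]$ separately.

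Next I would invoke the standard fact that, for an integer-valued random variable, the expected absolute deviation $\E|X - a|$ is minimized over $a$ at any median of $X$, and that the smallest such minimizer is $\min\{j : P(X\le j)\ge 0.5\}$. Applying this to $X = [z_{t_i}\mid \bm y]$ and expanding the posterior c.d.f.\ by the law of total probability over the number of segments, $P(z_{t_i}\le j\mid\bm y) = \sum_{k=1}^K\sum_{l=1}^j \pi(z_{t_i}=l\mid k,\bm y)\,\pi(k\mid\bm y)$, reproduces the stated component-wise posterior median. This establishes the formula; what remains is to verify that the vector of these medians is feasible, that is, a genuine change point process, because only then does the unconstrained minimizer coincide with the constrained Bayes estimator.

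The feasibility argument is where the real work lies. The key observation is that every realization under the posterior satisfies $z_{t_i}\le z_{t_{i+1}}$ almost surely (monotonicity of the change point process is inherited by the posterior since the likelihood respects the state ordering), which gives the set inclusion $\{z_{t_{i+1}}\le j\}\subseteq\{z_{t_i}\le j\}$ and hence $P(z_{t_{i+1}}\le j\mid\bm y)\le P(z_{t_i}\le j\mid\bm y)$ for every $j$. This stochastic dominance is preserved under the mixture over $k$ since each conditional term obeys the same inequality, and it forces $\hat z_{t_i}\le\hat z_{t_{i+1}}$; together with $z_{t_0}=1$ almost surely it yields a nondecreasing sequence starting at $1$, with terminal value equal to the posterior median of the number of segments. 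In discrete time I would additionally use the increment bound $z_{t_{i+1}}\le z_{t_i}+1$: taking $m=\hat z_{t_i}$, the inclusion $\{z_{t_i}\le m\}\subseteq\{z_{t_{i+1}}\le m+1\}$ gives $P(z_{t_{i+1}}\le m+1\mid\bm y)\ge P(z_{t_i}\le m\mid\bm y)\ge 0.5$, so $\hat z_{t_{i+1}}\le m+1$ and the medians step up by at most one, exactly as required of a discrete-time change point process.

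Finally I would close the loop: since the component-wise medians form a valid change point process, the unconstrained minimum of the decoupled posterior expected loss is attained within the feasible set, so it must also be the constrained Bayes estimator. The main obstacle is the monotonicity-of-medians step, namely verifying that the pointwise minimizers assemble into a feasible process rather than an arbitrary label vector, since the separation argument alone does not guarantee feasibility; the stochastic-dominance consequence of almost-sure monotonicity (and the increment bound in discrete time), together with its stability under mixing over $k$, is what makes the argument go through.
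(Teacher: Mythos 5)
Your proposal is correct, and its skeleton matches the paper's: both first drop the feasibility constraint, use positivity of the weights $(t_i-t_{i-1})$ to decouple the posterior expected loss across $i$, invoke the median as the minimizer of expected absolute deviation (with the smallest median $\min\{j: P(z_{t_i}\le j\mid \bm{y})\ge 0.5\}$ recovering the stated formula after mixing over $k$), and then show the componentwise medians are themselves a change point process so the unconstrained minimizer is feasible. Where you differ is in how the feasibility step is organized. The paper argues directly from the median-defining inequalities $p(z_{t_i}\le \hat z_{t_i}\mid\bm{y})\ge 0.5$ and $p(z_{t_i}\ge \hat z_{t_i}\mid\bm{y})\ge 0.5$, pushes them forward through set inclusions, and then runs an explicit three-case analysis on whether $p(z_{t_{i+1}}\le \hat z_{t_i}\mid\bm{y})$ exceeds, falls below, or equals $0.5$, separately in discrete and continuous time. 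You instead isolate two clean facts: pathwise monotonicity $z_{t_i}\le z_{t_{i+1}}$ a.s.\ yields first-order stochastic dominance $P(z_{t_{i+1}}\le j\mid\bm{y})\le P(z_{t_i}\le j\mid\bm{y})$ for all $j$, which forces $\hat z_{t_i}\le \hat z_{t_{i+1}}$ by monotonicity of minimal medians; and, in discrete time only, the increment bound gives $\{z_{t_i}\le m\}\subseteq\{z_{t_{i+1}}\le m+1\}$ and hence $\hat z_{t_{i+1}}\le \hat z_{t_i}+1$. Your version buys uniformity and transparency: the continuous-time case needs only the dominance half, the discrete-time case adds one inclusion, and both halves survive the mixture over $k$ termwise, whereas the paper's case analysis is more self-contained but repeats itself across the two settings. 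One small repair: your parenthetical justification for a.s.\ monotonicity under the posterior (``the likelihood respects the state ordering'') is not quite the right reason --- the likelihood plays no role; rather, the posterior is absolutely continuous with respect to the prior, so events null under the prior (non-monotone paths) remain null under the posterior, which is exactly how the paper frames it via the support set $\Omega$. This is a phrasing fix, not a gap.
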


Using our EM inference procedure, we estimate this Bayes estimator as follows. The $\pi(k|\bm{y})$ term is estimated using Equation \ref{eq:pk_y} and the $\pi(z_{t_i}=l|k,\bm{y})$ terms are estimated using the marginal expectations $\mathbb{E}[z_{t_i}=l|k,\Theta^{(s)},\bm{y}]$ computed in the last $s$th iteration of the forward backward algorithm.

\section{Simulation Study}
\label{sec:sim}
\begin{figure}[t]
    \centering
    \includegraphics[width=1.\linewidth]{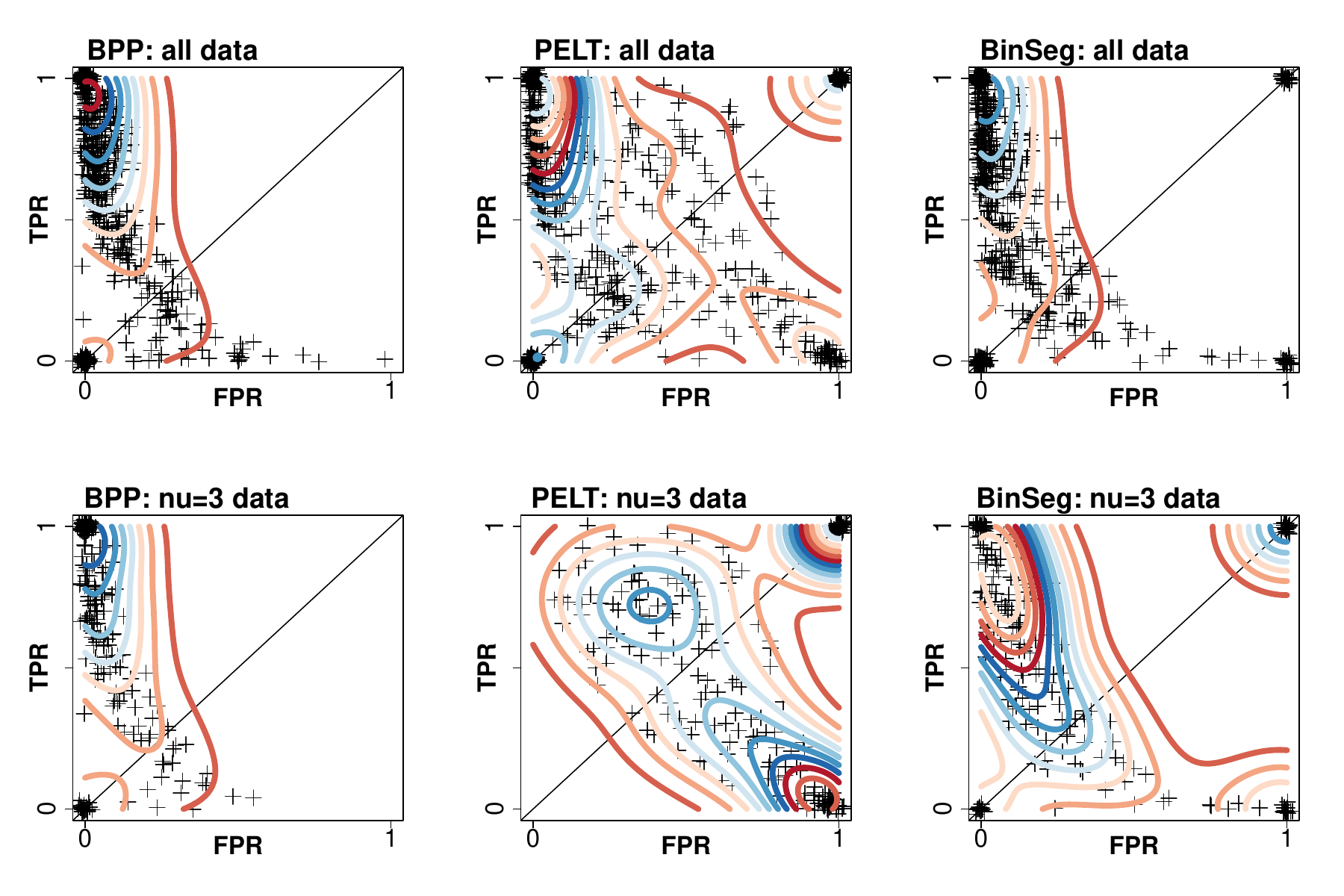}
    \caption{True positive rate against false positive rate on a synthetic data set with varying sizes of change, time distribution, outlier magnitude, and number of segments. The first row compares \textbf{BPP},\textbf{PELT},and \textbf{BinSeg} on all of the data from the factorial study and the second row compares them on a subset when $\nu=3$.}
    \label{fig:main_study}
\end{figure}
\begin{figure}[t]
    \centering
    \includegraphics[width=1.\linewidth]{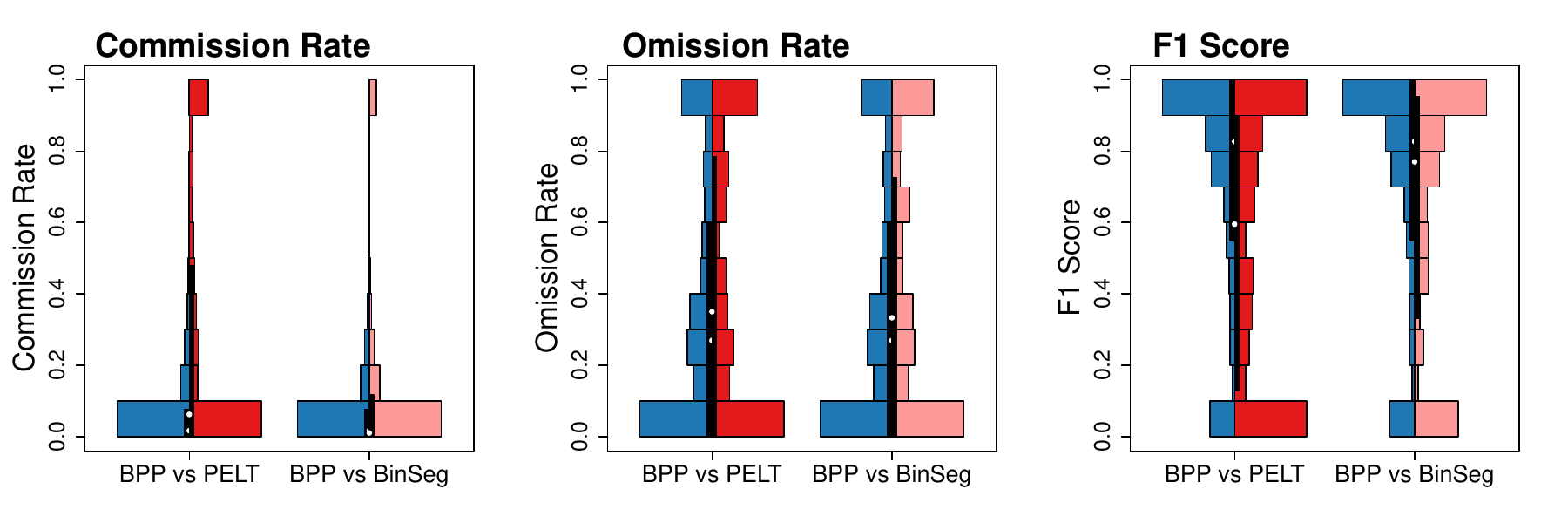}
    \caption{Breakdown of commission rate, omission rate, and F1 score for the \textbf{BPP}, \textbf{PELT}, and \textbf{BinSeg models}.}
    \label{fig:comomf1}
\end{figure}
Our simulation study is aimed at characterizing the performance of different change point models across a variety of conditions within a factorial setup. Each combination in the factorial study has $100$ replicates.  All simulated datasets are intercept only models with constant variance across segments and scaled t-distributed error as described below. Here are the settings that make up the factorial study,
\begin{itemize}
    \item \textbf{Time and change point distributions}: $1.)$ Uniformly spaced discrete time with uniformly distributed change points. This is the time and change point distribution assumed in \cite{killick2012optimal}. $2.)$ $\textbf{Beta}(0.5,0.5)$ distributed time with \textbf{BPP} distributed change points, $3.)$ $\textbf{Beta}(2,2)$ distributed time with \textbf{BPP} distributed change points.
    \item \textbf{Error variance}: $\sigma^2 \in (0.1,0.2,0.3)$
    \item \textbf{Robustness parameter}: $\nu \in (3,10,100)$
    \item \textbf{Size of change in intercept}: $(0.1,0.3,0.5,0.7,0.9,1.1)$
    \item \textbf{Number of segments}: $k = 1,\dots,4$
\end{itemize}
There are six models being compared in this study. All models explore up to $K=6$ segments except for the \textbf{PELT} model which does not support a maximum number of a segments argument.
\begin{itemize}
    \item \textbf{BPP}: This is our main model with transition probabilities according to Theorem \ref{thm:conttime}, location-scale t-distributed likelihood with $\nu=3$ according to subsection \ref{sub:robust} and noninformative prior on number of segments from Equation \ref{eq:pk}.
    \item \textbf{PELT} and \textbf{BinSeg}: These are two popular models used for change detection \citep{killick2012optimal,scottnknott}.  Both models are available in the \textit{changepoint} R package \citep{killick2014changepoint}. These models assume observations are from uniformly spaced discrete time intervals. We used the default setting for the \textit{cpt.mean} function which assumes a Normal cost function and a Modified BIC penalty \citep{zhang2007modified}. Readers are encouraged to learn more about these models in the references above.
    \item \textbf{BPP nonrobust}: This model is the same as \textbf{BPP} except it assumes a Gaussian likelihood without robustness to outliers.
    \item \textbf{Noninformative discrete time model}: This is our noninformative discrete time model from Propositions \ref{prop:hypergeo} and \ref{prop:trans} with location-scale t-distributed likelihood from subsection \ref{sub:robust} and noninformative prior on number of segments from Equation \ref{eq:pk}.
    \item \textbf{BPPE}: This is the \textbf{BPP} model but with prior on number of segments that assumes equally likely sequences across number of segments $k$ from Equation \ref{eq:impliedpk}.
\end{itemize}
The results for the last three models are in the appendix. In total, there are $3^3 \cdot 6 \cdot 4 \cdot 100 = 64800$ total datasets that are modeled. For each of the $648$ different factorial settings, the $100$ replicates were used to calculate the omission and commission rates for each model.  In order to capture settings similar to our case study, each data set is simulated with $n=500$ observations over a $20$ year period. Detected changes are considered true if they are within a $3$-month window of the true change which reduces to a $0.0225$ window after time is mapped to $[0,1]$ \citep{zhu2014continuous,zhu2020continuous,cohen2017similar}. If two changes are detected within the window of a true change, the closer one is considered a true change and the other is considered a false positive \citep{killick2012optimal}. Those 648 results were then plotted as points with a kernel density estimator with bandwidth .5 for visualization aid.  

The performance of the \textbf{BPP} is notable in Figure \ref{fig:main_study} and Figure \ref{fig:comomf1}. It appears that the combination of continuous time state space modeling, in addition to a noninformative prior on the number of segments and a robust likelihood lead to better performance than the other models. The results for all $6$ models are broken down further by time distribution, error variance, robustness, and number of segments in the appendix.  Those results confirm that each of the additional models, ($\textbf{BPP}$ without robustness, without a continuous time prior, or without a noninformative prior on the number of segments), each perform worse than the full $\textbf{BPP}$ model. Notably, Figures \ref{fig:os_ss_nseg} and \ref{fig:ms_ss_nseg} respectively show the discrete time noninformative model has poorer performance on the $k=1$ datasets, whereas, the \textbf{BPP} model does very well in the $k=1$ setting. As the only difference between those two models is the assumption of discrete versus continuous time, this demonstrates the need for a continuous-time change detection model to avoid additional false positives in the continuous time setting.  

Note in Figure \ref{fig:ms_ss_robust} in the appendix, the main performance disparities between \textbf{PELT}, \textbf{BinSeg}, and \textbf{BPP} appear in the datasets with $\nu=3$, that is, the datasets with the largest heavy-tailed error distribution, and otherwise their performance is comparable. In Figure \ref{fig:os_ss_robust}, the \textbf{BPP nonrobust} model also does well on the $\nu=3$ subset of datasets compared to \textbf{PELT} and \textbf{BinSeg}, demonstrating that, even though it assumes a Gaussian likelihood, its \textbf{BPP} continuous time prior achieves robustness to outliers by noninformatively down weighting the probability of change in cases when an outlier occurs shortly after the previous observation.

While our \textbf{BPP} model and methodology offer orders of magnitude of computational improvement compared to MCMC methods, \textbf{PELT} and \textbf{BinSeg} have much lower computational cost. In our simulated study, the \textbf{BPP} model runs in $3\mathrm{e}{-1}$ seconds per dataset, whereas the \textbf{PELT} and \textbf{BinSeg} models run in $7\mathrm{e}{-4}$ and $9\mathrm{e}{-4}$ seconds per dataset, respectively.

This computational disparity lies in the difference of the model being assumed. \textbf{PELT} and \textbf{BinSeg} do not allow the penalty (the log prior in our setting) to depend on the number or location of change points \citep{killick2012optimal}. Whereas, the \textbf{BPP} model assumes a prior on both the number and location of change points due to its latent Markov chain. Future research may explore if pruning can be used for inference in the \textbf{BPP} model to enjoy similar computational benefits enjoyed by \textbf{PELT}. Otherwise, a part of this computational gap may be closable by reimplementing our code in C, which we save for future work.

Finally, we derive a full Bayesian approach for the \textbf{BPP} model using exact simulation for the conditional posterior of the continuous time state variables following \cite{chib1996calculating} and test its performance on the synthetic data in the appendix as well. Code for running our proposed models can be found \url{https://github.com/daniel-s-cunha/BPP/}.

\section{Phenological Modeling with Multiple Change Points}
\label{sec:pheno}
Phenology is the study of the timing of biological activity over the course of a year, particularly in relation to climate. Phenological modeling of vegetation is often carried out using imagery data from Earth observation satellites. Spectral reflectances of Earth's surface are commonly combined to create vegetation indices, the most widely used of which is called the Normalized Difference Vegetative Index (NDVI), which are then used to monitor seasonal changes in vegetation. Specifically, the NDVI exploits the fact that healthy leaves are highly reflective in near infrared wavelengths and highly absorptive of light in the red wavelengths.  By taking the normalized difference of these two measurements, the NDVI provides an excellent surrogate measure for the amount of green leaf area on the ground: $\text{NDVI} = \frac{\text{NIR} - \text{Red}}{\text{NIR} + \text{Red}}$. Our goal is to model and detect changes within this vegetation index over time, so that land cover changes due to environmental factors are not mistakenly modeled as phenological signal.

\subsection{Harmonic regression}
\label{subsec:interannual}
Let $\bf{y}$ be an observed time series of NDVI from a single pixel of satellite imagery. Let time be standardized such that ${t_i}\in[0,1]$ by subtracting the minimum time and dividing by the total time interval. Let $T$ be the time interval of the study with units in days. For an observation $y_{t_i}$, define the harmonic regression model as,
\begin{equation}\label{eq:GausLik}
    y_{t_i}|\bm{\theta},\sigma^2,q_{t_i} \stackrel{\text{ind}}{\sim} N\Bigg(\alpha+\beta t_i + \sum_{h=1}^H \gamma_{h} \sin(h\omega t_i) + \delta_h \cos(h\omega t_i),\, \frac{\sigma^2}{q_{t_i}}\Bigg)
\end{equation}
where $\bm{\theta} = (\alpha,\beta,\{\gamma_h,\delta_h\}_{h=1}^H)^T$ are the mean model parameters, $\omega=2\pi T/365$ is the harmonic frequency, $q_{t_i}\sim \textbf{Ga}(\nu/2,\nu/2)$ is the robustness latent variable from Subsection \ref{sub:robust}, and $H$ is the number of harmonics in the model. Define a design matrix $X$ such that $\bm{x}_{t_i}^T\bm{\theta} = \alpha+\beta t_i + \sum_{h=1}^H \gamma_{h} \sin(h\omega t_i) + \delta_h \cos(h\omega t_i)$. This model is popular in the remote sensing community because the decomposition of phenological dynamics into an intercept, slope, and harmonics enables researchers to make inferences about seasonality as well as long term trends \citep{zhu2014continuous}.

\subsection{Interannually varying harmonics}
One limitation of the above model is it assumes the mean function follows the same seasonality pattern each year. Change point detection algorithms that use the above model may exhibit higher false positive rates, since seasonal anomalies are not captured by the model and thus may be falsely detected as change points. To address this limitation, we introduce harmonic contrasts to the model, giving it flexibility to capture interannual variation. Let $l(t)$ be the year at time $t$ and consider a harmonic contrast for each year as follows,
\begin{equation*}
    y_{t_i}|\bm{\theta},\bm{\phi},\sigma^2,q_{t_i} \stackrel{\text{ind}}{\sim} N\Bigg(\bm{x}_{t_i}^T\bm{\theta} + \sum_{h=1}^H \gamma_{h,l(i)} \sin(h\omega t_i) + \delta_{h,l(i)} \cos(h\omega t_i),\, \frac{\sigma^2}{q_{t_i}}\Bigg),
\end{equation*}
where $\bm{\phi} = (\{\gamma_{h,l}\}_{h=1,l=1}^{H,J},\{\delta_{h,l}\}_{h=1,l=1}^{H,J})^T$ is the contrast parameter vector and $\bm{x}_{t_i}^T\bm{\theta}$ is the mean function without contrasts. Let $W$ be the design matrix for the harmonic contrasts designed such that $\bm{w}^{T}_{t_i}\bm{\phi} = \sum_{h=1}^H \gamma_{h,l(i)} \sin(h\omega t_i) + \delta_{h,l(i)} \cos(h\omega t_i)$. Our model can then be summarized in terms of the mean components and contrast components as,
\begin{equation}\label{eq:interanLik}
    y_{t_i}|\bm{\theta},\bm{\phi},\sigma^2,q_{t_i} \stackrel{\text{ind}}{\sim} N\Bigg(\bm{x}_{t_i}^T\bm{\theta} + \bm{w}^{T}_{t_i}\bm{\phi},\, \frac{\sigma^2}{q_{t_i}}\Bigg),
\end{equation}

\subsection{Continuity constraints on the mean function and its derivative}
Since we are interested in detecting changes in phenological signal, it is important that the mean $\bm{x}_{t}^T\bm{\theta} + \bm{w}^{T}_{t}\bm{\phi}$ and its first derivative are continuous for all $t\in[0,1]$ so there are no discontinuities that can be mistaken for changes in the intercept or slope. Thus, we introduce the following constraints,
\begin{proposition}
\label{prop:IAVH}
Placing continuity constraints, with respect to time, on the mean function and its first derivative in~\eqref{eq:interanLik} yields the following linear constraints on the contrast harmonic parameters for each $l$th year,
\[
\gamma_{H,l} = -\sum_{h=1}^{H-1} \gamma_{h,l} \quad \text{and} \quad
\delta_{H,l} = -\sum_{h=1}^{H-1} \delta_{h,l}.
\]
\end{proposition}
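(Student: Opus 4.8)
The plan is to reduce the continuity requirement on the full mean $\bm{x}_t^T\bm{\theta} + \bm{w}_t^T\bm{\phi}$ in~\eqref{eq:interanLik} to a condition on the interannual contrast alone. The base term $\bm{x}_t^T\bm{\theta} = \alpha + \beta t + \sum_{h=1}^H \gamma_h\sin(h\omega t) + \delta_h\cos(h\omega t)$ uses a single parameter vector for every $t$ and is therefore $C^\infty$ on $[0,1]$; likewise each year's contrast $c_l(t) = \sum_{h=1}^H \gamma_{h,l}\sin(h\omega t) + \delta_{h,l}\cos(h\omega t)$ is a finite trigonometric sum and hence smooth on the interior of year $l$. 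Since $l(t)$ is piecewise constant, the only places where $\bm{w}_t^T\bm{\phi}$ --- and thus the full mean --- can fail to be $C^1$ are the year boundaries, i.e.\ the times $t_b$ at which $l(t)$ increments. I would therefore localize the entire argument to a single boundary $t_b$ separating year $l$ from year $l+1$.

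The structural fact I would exploit is that every frequency $h\omega$ is an integer multiple of the annual frequency $\omega$, so each $c_l$ is annually periodic and the year boundaries occur precisely where the phase $\omega t_b$ is an integer multiple of $2\pi$. At such $t_b$ we have $\sin(h\omega t_b)=0$ and $\cos(h\omega t_b)=1$ for every $h$. I would then impose that at each boundary the contrast and its first derivative vanish, which is exactly the requirement that crossing a year boundary introduce no spurious jump in intercept or slope (the full mean reduces to the smooth base curve there, with unperturbed level and slope). Substituting the boundary phase values makes the value condition collapse to its cosine coefficients, giving $\sum_{h=1}^H \delta_{h,l}=0$ and hence $\delta_{H,l}=-\sum_{h=1}^{H-1}\delta_{h,l}$; the derivative condition (differentiate first, then substitute, so that only the cosine-derivative terms carrying the surviving $\cos=1$ factor remain) collapses to a linear relation among the sine coefficients $\gamma_{h,l}$, which I would then solve for $\gamma_{H,l}$ to obtain the stated constraint.

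The main obstacle I anticipate is the first-derivative condition. The value condition is immediate once the boundary trigonometric values are in hand, but for the derivative one must differentiate \emph{before} substituting $\omega t_b \in 2\pi\mathbb{Z}$, carefully track which terms survive together with their frequency prefactors, and verify that the surviving linear combination of the $\gamma_{h,l}$ reduces to the claimed form $\gamma_{H,l}=-\sum_{h=1}^{H-1}\gamma_{h,l}$; this reduction is the only place where genuine care is required, the remaining steps being routine. A secondary point worth stating explicitly is the justification that no interior point of a year can be a discontinuity, which follows from the smoothness of finite harmonic sums and is what legitimizes checking only the boundaries.
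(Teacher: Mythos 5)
Your route is the same one the paper takes: localize the continuity requirement at the year boundaries, where $h\omega t_b \in 2\pi\mathbb{Z}$ so that $\sin(h\omega t_b)=0$ and $\cos(h\omega t_b)=1$ for every $h$; the value condition then isolates the cosine coefficients and gives $\sum_{h=1}^H \delta_{h,l}=0$, while the derivative condition is meant to constrain the $\gamma_{h,l}$. One minor remark first: you \emph{impose} that the contrast and its derivative vanish at each boundary, but continuity per se only forces the boundary sums to agree across consecutive years; the paper closes this gap by a recursion anchored at the contrast-free first year (continuity at the first boundary gives $\sum_h \delta_{h,1}=0$, and then $\sum_h \delta_{h,l}=\sum_h \delta_{h,l+1}$ propagates the vanishing to every year). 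Your shortcut is equivalent once that induction is stated, but it should be stated, since each year's contrast is annually periodic and takes the same trigonometric values at both of its endpoints.

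The more important point is that the step you correctly flagged as the crux would in fact fail as planned. Differentiating before substituting, the surviving terms at a boundary are $\frac{d}{dt}\big(\sum_{h=1}^H \gamma_{h,l}\sin(h\omega t)+\delta_{h,l}\cos(h\omega t)\big)\big|_{t=t_b} = \omega\sum_{h=1}^H h\,\gamma_{h,l}$, because the chain rule attaches the frequency prefactor $h\omega$ to each surviving cosine. The derivative condition therefore collapses to $\sum_{h=1}^H h\,\gamma_{h,l}=0$, i.e.\ $\gamma_{H,l} = -\sum_{h=1}^{H-1}(h/H)\,\gamma_{h,l}$, which is \emph{not} the unweighted relation claimed in the proposition (the two agree only when $H=1$, where both reduce to $\gamma_{1,l}=0$). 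The paper's own proof does not resolve this either: it derives the $\delta$ constraint carefully and then asserts the $\gamma$ constraint ``using a similar argument,'' but the argument cannot be literally similar precisely because of the prefactors you identified --- the value condition has no such weights, the derivative condition does. So your plan reproduces the paper's proof of the $\delta$ half exactly, and your deferred verification for the $\gamma$ half is well-founded skepticism rather than a routine check: carried out honestly it yields the $h$-weighted constraint, indicating that the stated $\gamma_{H,l}=-\sum_{h=1}^{H-1}\gamma_{h,l}$ needs either the weights $h/H$ or a reparameterization of the sine coefficients that silently absorbs the factor $h$, neither of which appears in Equation~\eqref{eq:interanLik}.
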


These continuity constraints also have implications for how we design the prior for the contrast parameters. Specifically, if the vector $\bm{\phi}$ of all contrasts including the $H$th harmonic parameters has the following distribution $\bm{\phi} \sim \mathcal{N}(\bm{0},\Phi^{(0)})$, then we need to adjust the prior by conditioning on the continuity constraints, 
\[ \bigg(\bm{\phi} \bigg| \big\{\gamma_{H,l} = -\sum_{h=1}^{H-1} \gamma_{h,l}\big\}_{l=1}^L,
\big\{\delta_{H,l} = -\sum_{h=1}^{H-1} \delta_{h,l}\big\}_{l=1}^L \bigg) \sim \mathcal{N}(\bm{0},\Phi^{(1)})\]
Where the new covariance matrix $\Phi^{(1)}$ is derived in the appendix.
\section{Case Study}
\label{sec:casestudy}
Our case study aims to demonstrate robust continuous-time change point detection for three remote sensing examples.  We chose canonical examples of how change detection is or can be used in this broad field of research using data collected from Earth observation satellites. Data from the Landsat satellites are used in all three studies \citep{friedl2022medium}.  These imagery have a spatial resolution of 30 meters and a repeat frequency of 8 to 16 days, not accounting for missing data from clouds. To provide independent reference data allowing us to identify changes on the ground, we use temporally-sparse high-spatial resolution imagery in Google Earth, and high-quality continuous precipitation data such as the standardized precipitation-evapotranspiration index and drought data \citep{yu2012google,begueria2014standardized,owens2007unl}.  Note that while these data sources are highly informative, they are not sufficiently comprehensive to determine all changes. However, they are sufficient for the purpose of demonstrating the robustness of our method. The first study applies our method to the challenge of detecting deforestation in the Rondonia region of the Amazon rainforest.  The second study  applies our method to the problem of detecting changes in land management in an agricultural field in the San Joaquin Valley of California, and the third study applies our method to detecting responses of semi-arid vegetation in Texas to drought and year-to-year variation in precipitation.

\subsection{Case study model}
The phenological model from Equation \ref{eq:interanLik} is used with $H = 2$ harmonics, $K=6$ maximum number of segments, and a parameter prior covariance that accounts for continuity constraints in the mean function and its first derivative as detailed in the appendix. We assume a robustness parameter of $\nu=3$.

Changes are searched for in the mean parameters $\bm{\theta} = \{\alpha,\beta,\{\gamma_h,\delta_h\}_{h=1}^H\}$ but not in the interannual harmonics $\bm{\phi} = (\{\gamma_{h,l}\}_{h=1,l=1}^{H,J},\{\delta_{h,l}\}_{h=1,l=1}^{H,J})$ nor the error variance. For example, a change in the intercept $\alpha$ could represent deforestation or other changes where a land cover is removed or added. A change in trend $\beta$ can represent a growth pattern, a decline, or a stabilization.  Changes in the mean harmonics $\{\gamma_h,\delta_h\}_{h=1}^H$ can capture events such as crop changes or land cover changes in general.

While the interannual harmonics $\bm{\phi} = (\{\gamma_{h,l}\}_{h=1,l=1}^{H,J},\{\delta_{h,l}\}_{h=1,l=1}^{H,J})$ are necessary for capturing inter-seasonal variation in the phenological signature, we do not search for changes in these parameters since they are temporally local parameters introduced for each year. We assume a single variance $\sigma^2$ across all segments to reflect our belief that measurement error is independent of phenological signal.

\subsection{Deforestation in Rondonia}
\begin{figure}[t]
\includegraphics[width=1.\linewidth]{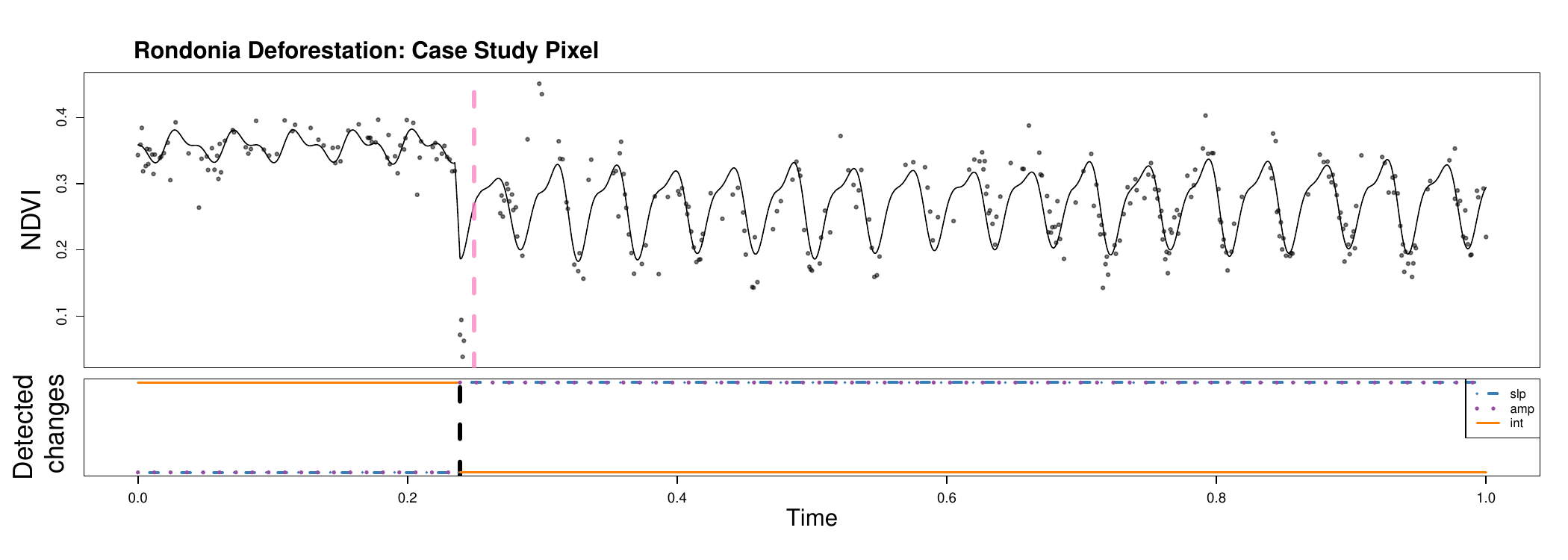}  
\caption{NDVI time series for a single pixel in the Amazon rainforest of Rondonia. A deforestation event (\color{pink} pink dotted line\color{black}) occurs in 2005 as confirmed using high resoultion imagery in Google Earth Pro as well as MapBiomas Brazil \citep{rs12172735}. The model detected change is shown in the lower panel.}
\label{fig:rondonia}
\end{figure}
Monitoring and limiting deforestation is of paramount importance towards slowing anthropogenically driven climate change. This can be difficult to do, especially in remote regions such as Amazonia, which are difficult to navigate and observe on the ground. The Rondonia region of the Amazon rainforest has experienced some of the highest rates of deforestation on the planet over the last 50 years \citep{pedlowski1997patterns,pedlowski2005conservation,butt2011evidence}. Here we present results from a single pixel located at -11.89 latitude and -63.59 longitude for a study period extending from 2000 to the end of 2022.

From a data perspective, detecting forest cover changes in tropical rainforests can be difficult because these regions have persistent cloud cover throughout much of the year.  This leads to high frequencies of missing data and non-uniform spacing of cloud free observations.  Hence, discrete time change detection models will be biased if they do not account for the missingness properly. We used an externally generated forest change data source (the MapBiomas Brazil project \citep{rs12172735}) to confirm the location and timing of deforestation in this case study. The model results are shown in Figure \ref{fig:rondonia}. 

\subsection{Crop rotation in the San Joaquin Valley}
Identifying and monitoring land management in agricultural regions from remote sensing data is an important task for a wide array of applications such as harvest and food supply projections \citep{li2024automated,Boryan01082011}. We chose an agriculture plot in the San Joaquin Valley of California with latitude 35.03 and longitude -118.91. Monitoring agricultural land management can be a difficult task because of crop rotations and other management decisions made by growers. For this reason, researchers often apply classifications at annual time steps that are independent of other years in the time series.  This strategy loses the benefits of longer time series that are available from remote sensing in many locations. Change detection methods can help solve this problem as they can be used to determine when phenological changes happen (i.e., that are diagnostic of specific crops or management practices), and thus classification can be done on each change segment of data as opposed to each year.

Using Google Earth Imagery and CropScape \citep{li2024automated} we annotated three changes in land management that were clearly visible in high-resolution imagery.  The high resolution imagery showed stable crops from 2000 until August 2006, at which point there is a fallow period with no crops. In August of 2012, the field is re-planted, and in October 2016 the geometric patterns related to crop type  visibly change in the high resolution imagery. The model detects each of these three changes as well as an additional change at the beginning of the time series during a period when we do not have high resolution imagery available for confirmation. The interpreted changes from the high-resolution imagery agree well with the changes detected automatically in the low-temporal and medium spatial resolution Landsat imagery (Figure \ref{fig:cropRotation}.).

\begin{figure}[t]
\includegraphics[width=1.\linewidth]{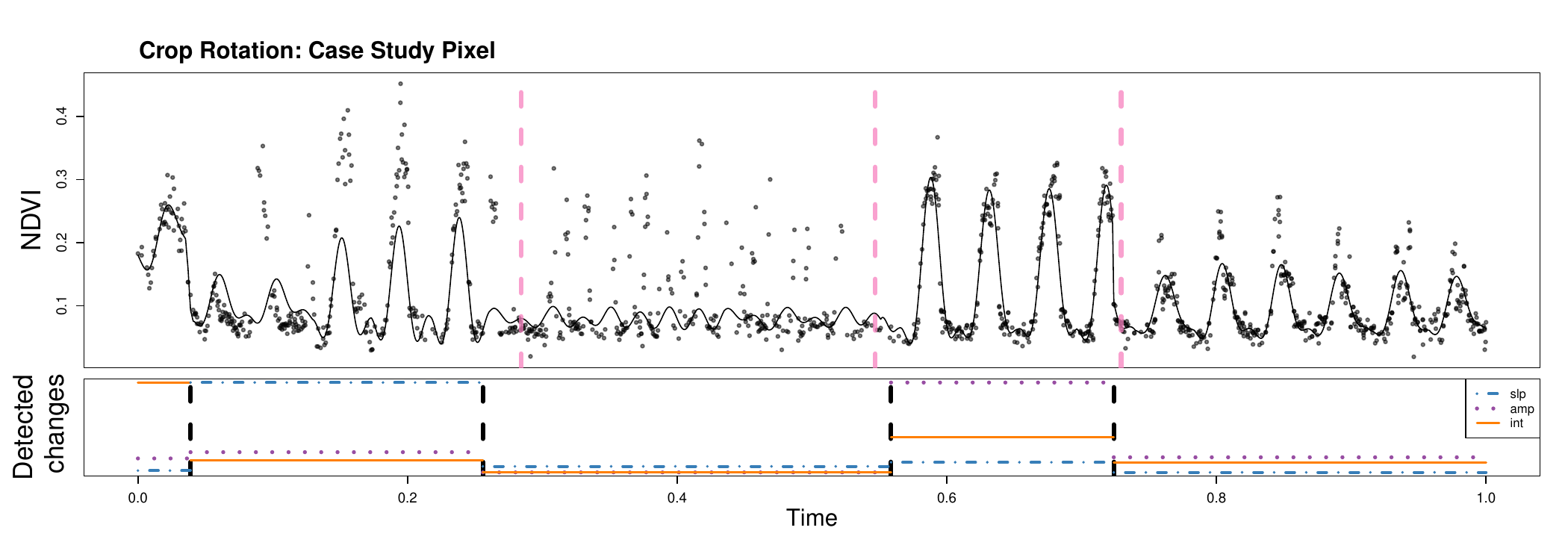} 
\caption{NDVI time series for a pixel located in an agriculture field in California. Three crop rotation events are annotated by the (\color{pink} pink dotted line\color{black}). The model detects an additional change towards the beginning of the time series when high resolution imagery is not available.}
\label{fig:cropRotation}
\end{figure}

\subsection{Semi-arid vegetation responses to drought}
\begin{figure}[t]
\includegraphics[width=1.\linewidth]{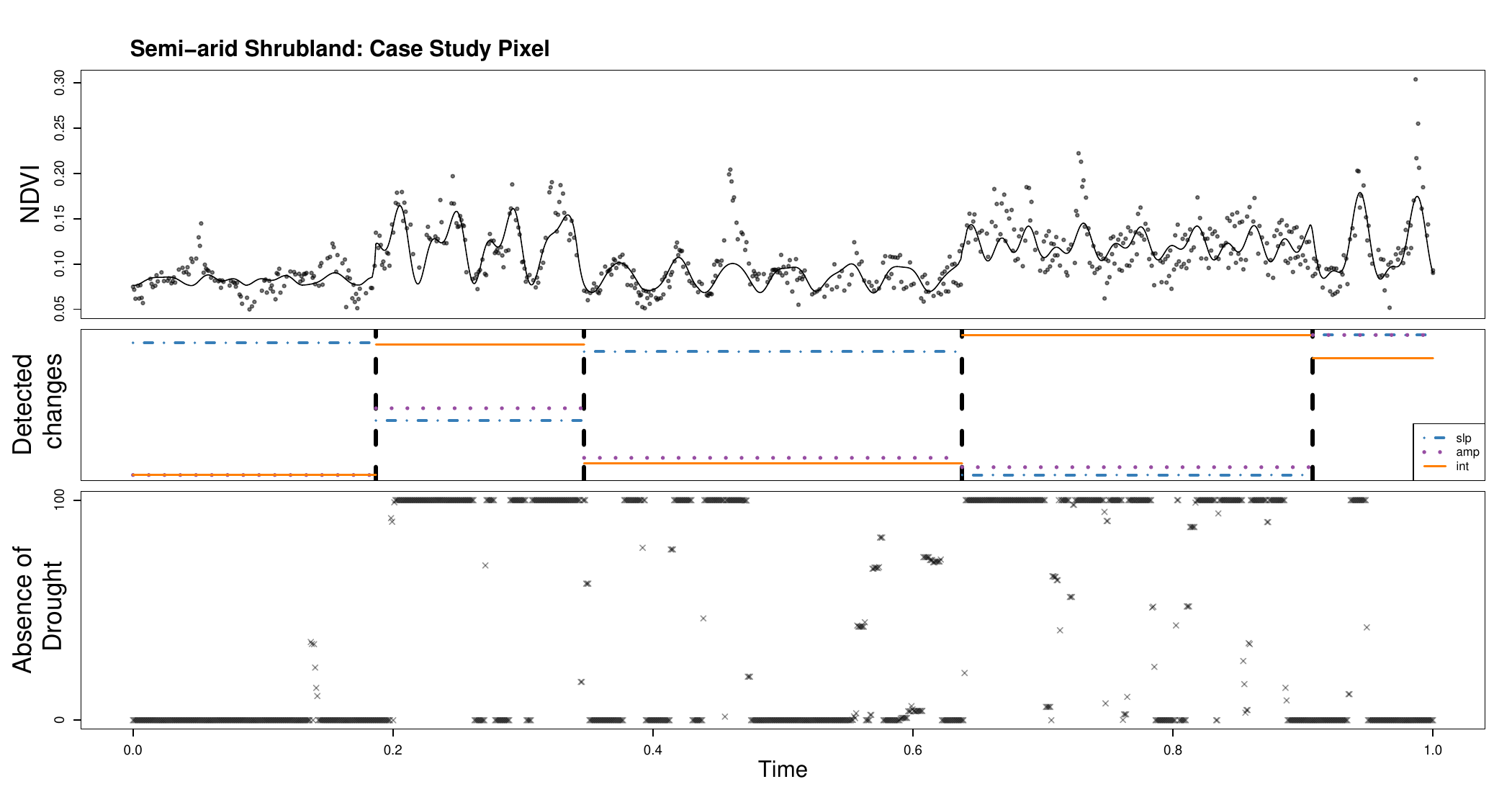} 
\caption{NDVI time series for a pixel located in a semi-arid region of Texas. Drought data from the National Drought Mitigation Center are plotted in the third panel. The model detects four major changes in the NDVI that appear to be related to major precipitation and drought events captured in the drought data.}
\label{fig:permianBasin}
\end{figure}
Climate change is affecting precipitation regimes in many parts of the world, leading to, for example, faster drought onsets \citep{mukherjee2018climate,Shenoy}.  Detecting changes in phenological signatures due to drought is thus an important and open question that can have implications for climate modeling and other tasks such as land management.  Semi-arid regions are particularly affected by drought and understanding the resilience of vegetation to stress from climate change in these areas is important.  Our study area for this case study is located in a semi-arid region of Texas at latitude 31.87 and longitude -103.64. We used high resolution imagery from Google Earth Pro to find a location with stable shrub and grass land cover (i.e., no land use)  in order to isolate the effects of drought. 

Drought index data are provided by the National Drought Mitigation Center, University of Nebraska-Lincoln \citep{owens2007unl}. We use these data to compare drought events to changes detected in NDVI at the same location.  Specifically, they provide a \textit{No Drought} measurement which scales from $0$ (i.e. full drought) to $100$ (i.e. no drought). We denote this measurement as \textit{Absence of Drought} in the third panel of Figure \ref{fig:permianBasin}. Our model of NDVI detects four changes in the phenological signature that co-occur with  significant precipitation anomalies and drought events as captured by the drought monitor. The results are in Figure \ref{fig:permianBasin}. Note that when we run our model without interannually varying harmonics, the model does not detect any of the drought/precipitation events. Those additional results can be found in the appendix.

\section{Discussion}
\label{sec:discuss}
In this work, we offer an end to end solution for continuous time change detection.  The change detection problem is first reframed in terms of a state space model where efficient and exact inference on the state variables is possible using the forward backward algorithm.  We then derived noninformative priors on change point processes and their corresponding transition probabilities in both discrete and continuous time and showed the continuous time priors have equivalent moments to $\textbf{Dir}(1_k)$. The continuous time transition probabilities are particularly notable, forming a class of Bernstein polynomial processes that adjust to the spacing of time measurements for the data at hand. These priors confirm our intuition that two consecutive observations that are closer in time are less likely to change than two consecutive observations far apart in time.

The prior on the number of segments is also tackled in this work.  We provide a discourse on measuring model space volumes in order to construct noninformative prior mass on the number of segments. Our reasoning is confirmed in synthetic studies where the \textbf{BPP} model competes with current state of the art methods and out performs them in the heavy tailed error distribution cases. This performance benefit is also owed to our development of a robust likelihood that can be inferred efficiently within the forward backward algorithm framework used to infer change points. 

Our case study addresses three canonical examples of change detection commonly used in remote sensing literature.  We developed a new semiparametric model that capture interannual variability due to weather while also maintaining interpretable parameters such as intercept and slope for which we'd like to infer changes. This new likelihood model out performed its commonly used harmonic regression predecessor as demonstrated in the appendix.

Future work may consider extending this continuous time model to the spatial domain.  There are also interesting parallels between our findings and theory regarding random partitions that did not fall within the scope of this work. Finally, it would be interesting to see how the continuous time transition probabilities can be parameterized to accommodate additional prior information or for use within an empirical Bayes approach.

\bibliography{Refs.bib}
\bibliographystyle{apalike}
\section{Appendix A: Theoretical Results}
\subsection{Proofs}
\begin{proof}[Proof of Proposition \ref{prop:hypergeo}]
We remove the prior designation of $p$ for notational simplicity. Recall the binomial coefficient property $\binom{m}{l} = \binom{m-1}{l} + \binom{m-1}{l-1}$.  Note there are four possibilities for position $i$ being in state $j$.  
\begin{align*}
p(z_i=j) = \sum_{l=j}^{j+1}\sum_{m=j-1}^{j} p(z_i=j,z_{i-1}=m,z_{i+1}=l)
\end{align*}    
Similarly to how the denominator was counted, note that when $z_i=j$, there are $\binom{i-1}{m-1}$ ways to choose the initial $m-1$ changes in the first $i$ time points, and $\binom{n-i-1}{k-l}$ ways to choose the final $k-l$ changes in the final $n-i$ time points. Following with distributive property of multiplication and the binomial coefficient property,
\begin{align*}
p(z_i=j) &= \sum_{l=j}^{j+1}\sum_{m=j-1}^{j} \frac{\binom{i-1}{m-1}\binom{n-i-1}{k-l}}{\binom{n}{k-1}}\\
&=\frac{\binom{i-1}{j-2}\binom{n-i-1}{k-j}}{\binom{n}{k-1}} + \frac{\binom{i-1}{j-1}\binom{n-i-1}{k-j}}{\binom{n}{k-1}} + \frac{\binom{i-1}{j-2}\binom{n-i-1}{k-j-1}}{\binom{n}{k-1}} + \frac{\binom{i-1}{j-1}\binom{n-i-1}{k-j-1}}{\binom{n}{k-1}}\\
&=\frac{\binom{i}{j-1}\binom{n-i-1}{k-j}}{\binom{n}{k-1}} + \frac{\binom{i}{j-1}\binom{n-i-1}{k-j-1}}{\binom{n}{k-1}}\\
&=\frac{\binom{i}{j-1}\binom{n-i}{k-j}}{\binom{n}{k-1}}
\end{align*}    
\end{proof}
\begin{proof}[Proof of Proposition \ref{prop:trans}]
First note that in order to be in segment $1$ at time $i$, then the data process must have been in segment $1$ at time $i-1$,
\begin{align*}
    p(z_i = 1) = p(z_i = 1 | z_{i-1}=1)p(z_{i-1} = 1)
\end{align*}
Which implies,
\begin{align*}
    p(z_{i} = 1 | z_{i-1}=1) = \frac{p(z_{i} = 1)}{p(z_{i-1} = 1)}
\end{align*}
Moving on to the next segment, 
\begin{align*}
    p(z_{i} = 2) &= p(z_{i} = 2 | z_{i-1}=2)p(z_{i-1} = 2) + p(z_{i} = 2 | z_{i-1}=1)p(z_{i-1} = 1)\\
    &= p(z_{i} = 2 | z_{i-1}=2)p(z_{i-1} = 2) + (1- p(z_{i} = 1 | z_{i-1}=1))p(z_{i-1} = 1)\\
    &= p(z_{i} = 2 | z_{i-1}=2)p(z_{i-1} = 2) + (1- \frac{p(z_{i} = 1)}{p(z_{i-1} = 1)})p(z_{i-1} = 1)\\
    &= p(z_{i} = 2 | z_{i-1}=2)p(z_{i-1} = 2) + (p(z_{i-1}= 1) - p(z_{i} = 1))
\end{align*}
Which implies,
\begin{align*}
    p(z_{i} = 2 | z_{i-1}=2) &= \frac{\sum_{l=1}^2 p(z_{i} = l) - \sum_{l=1}^1 p(z_{i-1} = l)}{p(z_{i-1} = 2)}
\end{align*}
And in general we find recursively for all $1<j<k$,
\begin{align*}
    p(z_{i} = j | z_{i-1}= j) &= \frac{\sum_{l=1}^j p(z_{i} = l) - \sum_{l=1}^{j-1} p(z_{i-1} = l)}{p(z_{i-1} = j)}
\end{align*}
With $p(z_{i} = k | z_{i-1}= k) = 1$ by assumption.
\end{proof}

\begin{proof}[Proof of Theorem \ref{thm:hgeobern}]
Let $t\in[0,1]$ and define corresponding discrete time $i=\floor{tn}$. We prove the statement in terms of $i$ noting that $\lim_{n\to\infty} i/n = \lim_{n\to\infty} \floor{tn}/n = t$. Define the continuous time marginal $p(z_{t}=j) \coloneq \lim_{n\to\infty} p(z_{i}=j)$. Begin by evaluating the binomial coefficients,

\begin{align*}
p(z_{i}=j) &= \frac{\binom{n-i}{k-j} \binom{i}{j-1}}{\binom{n}{k-1}}\\
&= \frac{\frac{(n-i)!}{(k-j)!(n-i-k+j)!}\frac{i!}{(j-1)!(i-j+1)!}}{\frac{n!}{(k-1)!(n-k+1)!}}&&\text{(expand)}\\
&= \binom{k-1}{j-1}\frac{(n-i)!}{(n-i-k+j)!}\cdot \frac{i!}{(i-j+1)!}\cdot \frac{(n-k+1)!}{n!}&&\text{(rearrange)}\\
&= \binom{k-1}{j-1}\frac{n^{k-1}}{n^{k-j}n^{j-1}}\cdot \frac{(n-i)!}{(n-i-k+j)!}\cdot \frac{i!}{(i-j+1)!}\cdot \frac{(n-k+1)!}{n!}&&\text{(multiply by 1)}\\
&= \binom{k-1}{j-1}\frac{(n-i)!}{n^{k-j}(n-i-k+j)!}\cdot \frac{i!}{n^{j-1}(i-j+1)!}\cdot \frac{n^{k-1}(n-k+1)!}{n!}&&\text{(rearrange)}
\end{align*}

We will inspect the limit of each of the three fractions separately as $n\to\infty$,

\begin{align*}
\lim_{n\to \infty} \frac{1}{n^{k-j}}\frac{(n-i)!}{(n-i-k+j)!} &= \lim_{\substack{i\to \infty\\n\to \infty}} \frac{(n-i)\cdot \dots \cdot (n-i-k+j+1)}{n^{k-j}}\\
&= \lim_{n\to \infty} \frac{(n-i)}{n}\cdot \dots \cdot \frac{(n-i-k+j+1)}{n}&&\text{(there are $k-j$ terms)}\\
&= \lim_{n\to \infty} (1-t)\cdot \dots \cdot (1-t-\frac{k}{n}+\frac{j}{n}+\frac{1}{n})&&\text{(every term is $\frac{n-i+\text{ const.}}{n}$)}\\
&= (1-t)^{k-j}
\end{align*}

\begin{align*}
\lim_{n\to \infty} \frac{1}{n^{j-1}}\frac{i!}{(i-j+1)!} &= \lim_{n\to \infty} \frac{i\cdot \dots \cdot (i-j+2)}{n^{j-1}}\\
&= \lim_{n\to \infty} \frac{i}{n}\cdot \dots \cdot \frac{(i-j+2)}{n}&&\text{(there are $j-1$ terms)}\\
&= \lim_{n\to \infty} t\cdot \dots \cdot (t-\frac{j}{n}+\frac{2}{n})&&\text{(every term is $\frac{i+\text{ const.}}{n}$)}\\
&= t^{j-1}
\end{align*}

\begin{align*}
\lim_{n\to \infty} n^{k-1}\frac{(n-k+1)!}{n!} &= \lim_{n\to \infty} \frac{n^{k-1}}{n\cdot \dots \cdot (n-k+2)}\\
&= \lim_{n\to \infty} \frac{n}{n}\cdot \dots \cdot \frac{n}{n-k+2}&&\text{(there are $k-1$ terms)}\\
&=1
\end{align*}

Finally, multiply these three limits together, since the limit of products is the product of limits when each limit is convergent,
\begin{align*}
p(z_{t}=j) &= \binom{k-1}{j-1} (1-t)^{k-j} t^{j-1}
\end{align*}
\end{proof}
\begin{proposition}
\label{prop:represent}
    For state variables $\{z_{t_i}\}_{i=0}^n$ and segment lengths $\{\zeta_j\}_{j=1}^k$ the following equivalence representation holds:
    \begin{align*}
        \bm{1}(z_{t_i} = j) = \bm{1}\Bigg(\sum_{l=1}^{j-1}\zeta_l \leq t < \sum_{l=1}^{j}\zeta_l\Bigg). 
    \end{align*}
\end{proposition}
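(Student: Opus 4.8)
The plan is to prove the indicator identity by writing the state value $z_t$ as one plus the number of change points that have occurred up to time $t$, and then translating that count into membership in the half-open interval $[\tau_{j-1}, \tau_j)$, where the change point locations are the partial sums $\tau_j = \sum_{l=1}^j \zeta_l$ and $\tau_0 = 0$.

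First I would recall the two structural facts carried by the definition of a change point process: the trajectory $t \mapsto z_t$ is non-decreasing and starts at $z_0 = 1$, and the $l$th segment length $\zeta_l$ is the duration over which the process occupies state $l$, so that the $l$th change (the transition out of state $l$) occurs precisely at $\tau_l = \sum_{m=1}^l \zeta_m$. Because the process is non-decreasing and increments the state by one at each such location, the state at time $t$ is the initial state plus the number of changes already realized,
\[
z_t = 1 + \sum_{l=1}^{k-1} \bm{1}(\tau_l \leq t).
\]
Next I would read off from this identity that $z_t = j$ holds exactly when precisely $j - 1$ of the change locations satisfy $\tau_l \leq t$; since the $\tau_l$ are strictly increasing almost surely, this is in turn equivalent to $\tau_{j-1} \leq t < \tau_j$. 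Substituting $\tau_{j-1} = \sum_{l=1}^{j-1}\zeta_l$ and $\tau_j = \sum_{l=1}^{j}\zeta_l$ then gives the claimed equality of indicators, with the empty-sum convention $\sum_{l=1}^{0}\zeta_l = 0$ covering the base case $j = 1$.

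An equivalent route, which I would present instead if the counting identity reads as too terse, is a short induction on $j$. The base case uses $z_0 = 1$ together with the fact that the process stays in state $1$ until $\tau_1 = \zeta_1$; the inductive step uses that the process enters state $j$ exactly at $\tau_{j-1}$ and remains there for duration $\zeta_j$, hence leaves at $\tau_{j-1} + \zeta_j = \tau_j$, giving $z_t = j \iff \tau_{j-1} \leq t < \tau_j$.

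The hard part will be the boundary bookkeeping rather than any deep idea. I must fix the half-open convention consistently so that each change location $\tau_l$ is assigned to the later segment, verify that monotonicity forces each state to occupy a single contiguous interval so that the intervals genuinely partition $[0,1)$, and reconcile the terminal convention $z_1 = k$ with the degenerate possibility that some $\zeta_l = 0$. The latter is an event of probability zero under the $\textbf{Dir}(\bm{1}_k)$ prior, so almost surely all $k$ segments are nondegenerate and the representation holds pointwise in $t$.
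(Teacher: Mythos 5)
Your proposal is correct and takes essentially the same route as the paper's proof, which simply observes that $\sum_{l=1}^{j}\zeta_l$ is the first-occurrence time of state $j+1$, so $t$ lying between the first times of states $j$ and $j+1$ forces $z_t = j$ by the definition of a change point process. Your counting identity $z_t = 1 + \sum_{l=1}^{k-1}\bm{1}(\tau_l \leq t)$ and the attention to the half-open convention and degenerate segments are a more careful formalization of the same one-line argument, not a different idea.
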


\begin{proof}[Proof of Proposition \ref{prop:represent}]
    Note the definition of $\zeta_j$ is the length of time between the first occurrence of state $j$ and state $j+1$.  Then $\sum_{l=1}^j \zeta_l$ is equal to the time of the first occurrence of state $j+1$.  As such, if $t_i$ is between the first time of state $j$ and the first time of state $j+1$, then by the definition of change point process $z_{t_i}=j$.
\end{proof}
\noindent We will need the following lemma to prove Theorem \ref{thm:represent}.
\begin{lemma}
\label{lemma:represent}
Let $t\in[0,1]$ and suppose $\{\zeta_j\}_{j=1}^k$ are the continuous time segment lengths that sum to $1$, then, 
    \begin{align*}
        \mathbb{P}(\sum_{l=1}^{j-1}\zeta_l \leq t < \sum_{l=1}^{j} \zeta_l) = \mathbb{P}(\sum_{l=1}^{j-1}\zeta_l \leq t) - \mathbb{P}(\sum_{l=1}^{j} \zeta_l \leq t )
    \end{align*}
\end{lemma}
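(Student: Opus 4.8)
The plan is to prove the identity purely set-theoretically by exploiting the monotonicity of partial sums, then pass to probabilities via finite additivity. Write $S_m \coloneq \sum_{l=1}^m \zeta_l$ for the partial sums, with the convention $S_0 = 0$. Since the $\zeta_l$ are continuous time segment lengths (and, under the prior, the components of a $\mathit{Dir}(\bm 1_k)$ vector), they are nonnegative, so the partial sums are monotone nondecreasing: $S_{j-1} \le S_j$. This single monotonicity fact is the entire engine of the argument.

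First I would record the event nesting that monotonicity forces: because $S_{j-1} \le S_j$, whenever $S_j \le t$ we automatically have $S_{j-1} \le t$, so $\{S_j \le t\} \subseteq \{S_{j-1} \le t\}$. This lets me partition the larger event according to whether $S_j$ has also crossed the level $t$,
\[
\{S_{j-1} \le t\} = \{S_{j-1} \le t,\, S_j \le t\} \;\sqcup\; \{S_{j-1} \le t,\, S_j > t\} = \{S_j \le t\} \;\sqcup\; \{S_{j-1} \le t < S_j\}.
\]
Here the first piece collapses to $\{S_j \le t\}$ precisely by the nesting just established, and the two pieces are disjoint because one demands $S_j \le t$ while the other demands $S_j > t$. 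Note that $\{S_{j-1} \le t < S_j\} = \{\sum_{l=1}^{j-1}\zeta_l \le t < \sum_{l=1}^{j}\zeta_l\}$ is exactly the event appearing on the left-hand side of the claim.

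Finally I would apply finite additivity of $\mathbb{P}$ to this disjoint union and rearrange,
\[
\mathbb{P}(S_{j-1} \le t) = \mathbb{P}(S_j \le t) + \mathbb{P}(S_{j-1} \le t < S_j),
\]
which, after moving $\mathbb{P}(S_j \le t)$ to the other side, is the stated identity. I do not anticipate a genuine obstacle: the only point warranting a word of care is the handling of the boundary strict-versus-weak inequalities, but since $\bm\zeta \sim \mathit{Dir}(\bm 1_k)$ is absolutely continuous we have $\mathbb{P}(S_j = t) = 0$, so the decomposition is exact no matter how the endpoints are apportioned. I would remark on this to keep the argument airtight, while observing that the set identity itself holds pointwise and requires no continuity assumption at all.
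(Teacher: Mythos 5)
Your proof is correct, and it reaches the identity by a slightly different route than the paper. Both arguments turn on the same single fact — nonnegativity of the $\zeta_l$ forces the nesting $\{\sum_{l=1}^{j}\zeta_l \le t\} \subseteq \{\sum_{l=1}^{j-1}\zeta_l \le t\}$ — but the paper deploys it inside a conditional-probability computation: it factors $\mathbb{P}(A \cap B) = \mathbb{P}(B \mid A)\,\mathbb{P}(A)$, passes to the complement, expands the conditional as a ratio, and then uses the nesting to collapse the intersection in the numerator before the marginal probabilities cancel. You instead use the nesting directly to write $\{\sum_{l=1}^{j-1}\zeta_l \le t\}$ as the disjoint union of $\{\sum_{l=1}^{j}\zeta_l \le t\}$ and the event of interest, and finish with finite additivity. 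Your version is more elementary and marginally more robust: the paper's conditioning step implicitly divides by $\mathbb{P}(\sum_{l=1}^{j-1}\zeta_l \le t)$, which vanishes in edge cases (e.g.\ $t = 0$ with $j \ge 2$, since under $\mathit{Dir}(\bm{1}_k)$ the partial sums are a.s.\ positive), whereas your set identity holds pointwise for every $t$ and requires no positivity or continuity assumption at all — as you correctly observe, the absolute-continuity remark about the boundary is reassuring but not actually needed.
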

\begin{proof}[Proof of Lemma \ref{lemma:represent}]
    \begin{align*}
    \mathbb{P}(\sum_{l=1}^{j-1}\zeta_l \leq t < \sum_{l=1}^{j} \zeta_l) &= \mathbb{P}((\sum_{l=1}^{j-1} \zeta_l \leq t) \cap (t < \sum_{l=1}^{j} \zeta_l)) \\
    &= \mathbb{P}(t < \sum_{l=1}^{j} \zeta_l |\sum_{l=1}^{j-1} \zeta_l \leq t)\mathbb{P}(\sum_{l=1}^{j-1} \zeta_l \leq t) \\
    &= (1 - \mathbb{P}(\sum_{l=1}^{j} \zeta_l \leq t | \sum_{l=1}^{j-1} \zeta_l \leq t))\mathbb{P}(\sum_{l=1}^{j-1} \zeta_l \leq t) \\
    &= (1 - \frac{\mathbb{P}(\sum_{l=1}^{j} \zeta_l \leq t \cap \sum_{l=1}^{j-1} \zeta_l \leq t)}{\mathbb{P}(\sum_{l=1}^{j-1}\zeta_l \leq t)} )\mathbb{P}(\sum_{l=1}^{j-1} \zeta_l \leq t) \\
    &= (1 - \frac{\mathbb{P}(\sum_{l=1}^{j} \zeta_l \leq t )}{\mathbb{P}(\sum_{l=1}^{j-1}\zeta_l \leq t)} )\mathbb{P}(\sum_{l=1}^{j-1} \zeta_l \leq t) &&(\sum_{l=1}^{j}\zeta_l\leq t) \to (\sum_{l=1}^{j-1}\zeta_l\leq t)\\
    &= \mathbb{P}(\sum_{l=1}^{j-1}\zeta_l \leq t) - \mathbb{P}(\sum_{l=1}^{j} \zeta_l \leq t )
\end{align*}
\end{proof}

\begin{proof}[Proof of Theorem \ref{thm:represent}]
Define the Bernstein polynomial $b_j(t)\coloneqq \binom{k-1}{j-1} (1-t)^{k-j} t^{j-1}$ and the distribution of the Dirichlet indicator $d_{j}(t) \coloneqq p(\sum_{l=1}^{j-1}\zeta_l \leq t < \sum_{l=1}^{j}\zeta_l)$. We wish to prove $d_j(t) = b_j(t)$. The proof proceeds as follows.  The first step is to evaluate $d_j(t)$ using the aggregation property of the Dirichlet distribution.  The next step is to argue that since neither function has an additive constant, it is sufficient to prove $\frac{d b_j(t)}{dt} = \frac{d d_j(t)}{dt}$. Or note $b_j(0)=d_j(0)$. Finally, we establish equality of the two derivatives. 

We start with $d_{j}(t)$. We have by Lemma \ref{lemma:represent}, $d_{j}(t) = (p(\sum_{l=1}^{j-1}\zeta_l \leq t) - p(\sum_{l=1}^{j}\zeta_l \leq t))$
These two cumulative distributions can be evaluated using the aggregation property of the Dirichlet. As such, we have $(\sum_{l=1}^{j-1}\zeta_l,\sum_{l=j}^{k}\zeta_l)' \sim \mathit{Dir}(j-1,k-j+1)$ and $(\sum_{l=1}^{j}\zeta_l,\sum_{l=j+1}^{k}\zeta_l)' \sim \mathit{Dir}(j,k-j)$. Thus, the cumulative distribution for the $(j-1)$th case is,
\begin{align*}
    p(\sum_{l=1}^{j-1}\zeta_l \leq t) = \frac{1}{B(j-1,k-j+1)}\int_0^t u^{(j-1)-1} (1 - u)^{(k-j+1)-1} du
\end{align*}
And similarly for the $j$th case.  The derivative of the Bernstein polynomial follows from the product and chain rules,
\begin{align}
\frac{db_j(t)}{dt} = \binom{k-1}{j-1}[(j-1)t^{j-2}(1-t)^{k-j} - (k-j)t^{j-1}(1-t)^{k-j-1}]
\end{align}
And the derivative of the Dirichlet probability interval follows from the fundamental theorem of calculus, 
\begin{align*}
    \frac{d p(\sum_{l=1}^{j-1}\zeta_l \leq t)}{dt} = \frac{1}{B(j-1,k-j+1)}t^{(j-1)-1} (1 - t)^{(k-j+1)-1}
\end{align*}
Which holds similarly for the $(j)$th case. As such the derivative of (1) is,
\begin{align}
    \frac{d (p(\sum_{l=1}^{j-1}\zeta_l \leq t) - p(\sum_{l=1}^{j}\zeta_l \leq t))}{dt} = \frac{1}{B(j-1,k-j+1)}t^{(j-1)-1} (1 - t)^{(k-j+1)-1} \nonumber \\ 
    - \frac{1}{B(j,k-j)}t^{j-1} (1 - t)^{(k-j)-1}
\end{align}
And now we can now show (2)=(3) as desired.
\begin{align*}
\frac{dd_j(t)}{dt} &= \frac{1}{B(j-1,k-j+1)} t^{(j-2)} (1 - t)^{(k-j)} \\
&\ \ \ \ \ \ \ \ - \frac{1}{B(j,k-j)}t^{j-1} (1 - t)^{(k-j)-1} \\
&= \frac{\Gamma(k)}{\Gamma(j-1)\Gamma(k-j+1)} t^{(j-2)} (1 - t)^{(k-j)} \\ 
&\ \ \ \ \ \ \ \ - \frac{\Gamma(k)}{\Gamma(j)\Gamma(k-j)}t^{(j-1)} (1 - t)^{(k-j-1)} \\
&= \frac{(k-1)!}{(j-2)!(k-j)!} t^{(j-2)} (1 - t)^{(k-j)} \\ 
&\ \ \ \ \ \ \ \ - \frac{(k-1)!}{(j-1)!(k-j-1)!}t^{(j-1)} (1 - t)^{(k-j-1)}\\
&= \binom{k-1}{j-1} (j-1) t^{(j-2)} (1 - t)^{(k-j)} \\ 
&\ \ \ \ \ \ \ \ - \binom{k-1}{j-1} (k-j) t^{(j-1)} (1 - t)^{(k-j-1)}\\
&= \frac{db_j(t)}{dt}
\end{align*}    
\end{proof}

\begin{proof}[Proof of Theorem \ref{thm:conttime}]
Using Theorem \ref{thm:represent}, note that the probability $p(z_{t} = h \,|\, z_{s} = j)$ is equivalent to the probability $p(\sum_{l=1}^{h-1}\zeta_l\leq t < \sum_{l=1}^h \zeta_l | \sum_{l=1}^{j-1}\zeta_l\leq s < \sum_{l=1}^j \zeta_l)$. As such we evaluate the joint probability of these events, and then divide by the marginal.\\
\textbf{Case 1: $h=j$}\\
    \begin{align*}
        &p(\sum_{l=1}^{j-1}\zeta_l\le s, t<\sum_{l=1}^{j}\zeta_l)\\ 
        &=\int_0^{s} \int_0^{s-\zeta_1} \dots \int_0^{s - \sum_{l=1}^{j-2}\zeta_l}\int_{t-\sum_{l=1}^{j-1}\zeta_l}^{1 - \sum_{l=1}^{j-1}\zeta_l}\int_0^{1-\sum_{l=1}^j \zeta_l}\dots\int_0^{1 - \sum_{l=1}^{k-2}\zeta_l} B^{-1}(1_k) \partial\zeta_{k-1}\dots\partial\zeta_1\\
        &=\int_0^{s} \int_0^{s-\zeta_1} \dots \int_0^{s - \sum_{l=1}^{j-2}\zeta_l}\int_{t-\sum_{l=1}^{j-1}\zeta_l}^{1 - \sum_{l=1}^{j-1}\zeta_l}\int_0^{1-\sum_{l=1}^j \zeta_l}\dots\int_0^{1 - \sum_{l=1}^{k-3}\zeta_l} B^{-1}(1_k)(1-\sum_{l=1}^{k-2}\zeta_l) \partial\zeta_{k-2}\dots\partial\zeta_1\\
        &=\int_0^{s} \int_0^{s-\zeta_1} \dots \int_0^{s - \sum_{l=1}^{j-2}\zeta_l}\int_{t-\sum_{l=1}^{j-1}\zeta_l}^{1 - \sum_{l=1}^{j-1}\zeta_l}\int_0^{1-\sum_{l=1}^j \zeta_l}\dots\int_0^{1 - \sum_{l=1}^{k-4}\zeta_l} B^{-1}(1_k)\frac{(1-\sum_{l=1}^{k-3}\zeta_l)^2}{2!} \partial\zeta_{k-3}\dots\partial\zeta_1\\
        &=\int_0^{s} \int_0^{s-\zeta_1} \dots \int_0^{s - \sum_{l=1}^{j-2}\zeta_l}\int_{t-\sum_{l=1}^{j-1}\zeta_l}^{1 - \sum_{l=1}^{j-1}\zeta_l}\int_0^{1-\sum_{l=1}^j \zeta_l}\dots\int_0^{1 - \sum_{l=1}^{k-5}\zeta_l} B^{-1}(1_k)\frac{(1-\sum_{l=1}^{k-4}\zeta_l)^3}{3!} \partial\zeta_{k-4}\dots\partial\zeta_1\\
        &=\int_0^{s} \int_0^{s-\zeta_1} \dots \int_0^{s - \sum_{l=1}^{j-2}\zeta_l}\int_{t-\sum_{l=1}^{j-1}\zeta_l}^{1 - \sum_{l=1}^{j-1}\zeta_l} B^{-1}(1_k)\frac{(1-\sum_{l=1}^{j}\zeta_l)^{k-j-1}}{(k-j-1)!} \partial\zeta_{j}\dots\partial\zeta_1\\
        &=\int_0^{s} \int_0^{s-\zeta_1} \dots \int_0^{s - \sum_{l=1}^{j-2}\zeta_l}B^{-1}(1_k)\frac{(1-t)^{k-j}}{(k-j)!} \partial\zeta_{j-1}\dots\partial\zeta_1\\
        &=\int_0^{s} \int_0^{s-\zeta_1} \dots \int_0^{s - \sum_{l=1}^{j-3}\zeta_l} (s - \sum_{l=1}^{j-2}\zeta_l) B^{-1}(1_k)\frac{(1-t)^{k-j}}{(k-j)!} \partial\zeta_{j-2}\dots\partial\zeta_1\\
        &=\int_0^{s} \int_0^{s-\zeta_1} \dots \int_0^{s - \sum_{l=1}^{j-4}\zeta_l} \frac{(s - \sum_{l=1}^{j-3}\zeta_l)^2}{2!} B^{-1}(1_k)\frac{(1-t)^{k-j}}{(k-j)!} \partial\zeta_{j-3}\dots\partial\zeta_1\\
        &=\int_0^{s} \int_0^{s-\zeta_1} \frac{(s - \sum_{l=1}^{2}\zeta_l)^{j-3}}{(j-3)!} B^{-1}(1_k)\frac{(1-t)^{k-j}}{(k-j)!} \partial\zeta_{2}\partial\zeta_1\\
        &=\int_0^{s} \frac{(s - \zeta_1)^{j-2}}{(j-2)!} B^{-1}(1_k)\frac{(1-t)^{k-j}}{(k-j)!} \partial\zeta_1\\
        &=B^{-1}(1_k)\frac{s^{j-1}}{(j-1)!}\frac{(1-t)^{k-j}}{(k-j)!}\\
        &=\binom{k-1}{j-1} s^{j-1}(1-t)^{k-j}
    \end{align*}
Then using this as the numerator of $p(z_{t}=j |z_{s}=j)$, and using the continuous time marginal of $z_{s}$ from Theorem \ref{thm:hgeobern},
\begin{align*}
    p(z_{t}=j |z_{s}=j) &=\frac{\binom{k-1}{j-1} s^{(j-1)}(1-t)^{(k-j)}}{\binom{k-1}{j-1} s^{(j-1)}(1-s)^{(k-j)}}\\
    &=\left(\frac{1-t}{1-s}\right)^{(k-j)}\\
    &=\binom{k-j}{j-j}\left(1-\frac{1-t}{1-s}\right)^{(j-j)}\left(\frac{1-t}{1-s}\right)^{(k-j)}
\end{align*}
\textbf{Case 2: $h=j+1$}\\
Now we derive the transition from $j$ to $j+1$.  Start with the integrand in the joint numerator,
\begin{align*}
        &p(\sum_{l=1}^{j-1}\zeta_l \le s <\sum_{l=1}^{j}\zeta_l \le t <\sum_{l=1}^{j+1}\zeta_l)\\ 
        &=\int_0^{s} \int_0^{s-\zeta_1} \dots \int_0^{s - \sum_{l=1}^{j-2}\zeta_l}\int_{s-\sum_{l=1}^{j-1}\zeta_l}^{t - \sum_{l=1}^{j-1}\zeta_l}\int_{t-\sum_{l=1}^j \zeta_l}^{1-\sum_{l=1}^j \zeta_l}\int_0^{1 - \sum_{l=1}^{j+1}\zeta_l}\dots\int_0^{1-\sum_{l=1}^{k-2}\zeta_l} B^{-1}(1_k) \partial\zeta_{k-1}\dots\partial\zeta_1\\
        &=\int_0^{s} \int_0^{s-\zeta_1} \dots \int_0^{s - \sum_{l=1}^{j-2}\zeta_l}\int_{s-\sum_{l=1}^{j-1}\zeta_l}^{t - \sum_{l=1}^{j-1}\zeta_l}\int_{t-\sum_{l=1}^j \zeta_l}^{1-\sum_{l=1}^j \zeta_l} \frac{(1-\sum_{l=1}^{j+1}\zeta_l)^{k-j-2}}{(k-j-2)!}B^{-1}(1_k) \partial\zeta_{j+1}\dots\partial\zeta_1\\
        &=\int_0^{s} \int_0^{s-\zeta_1} \dots \int_0^{s - \sum_{l=1}^{j-2}\zeta_l}\int_{s-\sum_{l=1}^{j-1}\zeta_l}^{t - \sum_{l=1}^{j-1}\zeta_l} \frac{(1-t)^{k-j-1}}{(k-j-1)!}B^{-1}(1_k) \partial\zeta_{j}\dots\partial\zeta_1\\
        &=\int_0^{s} \int_0^{s-\zeta_1} \dots \int_0^{s - \sum_{l=1}^{j-2}\zeta_l} (t-s)\frac{(1-t)^{k-j-1}}{(k-j-1)!}B^{-1}(1_k) \partial\zeta_{j-1}\dots\partial\zeta_1\\
        &=\frac{s^{j-1}}{(j-1)!}(t-s)\frac{(1-t)^{k-j-1}}{(k-j-1)!}B^{-1}(1_k)\\
        &=\binom{k-1}{j-1}s^{j-1}(t-s)(1-t)^{k-j-1}(k-j)
\end{align*}
Then the transition probability is given by dividing the marginal probability of $z_s=j$,
\begin{align*}
    p(z_{t}=j+1|z_{s}=j)&= \frac{\binom{k-1}{j-1}s^{j-1}(t-s)(1-t)^{k-j-1}(k-j)}{\binom{k-1}{j-1}s^{j-1}(1-s)^{k-j}}\\
    &=(k-j)\frac{t-s}{1-s}\left(\frac{1-t}{1-s}\right)^{k-j-1}\\
    &=\binom{k-j}{j+1-j}\Bigg(1-\frac{1-t}{1-s}\Bigg)^{(j+1-j)}\left(\frac{1-t}{1-s}\right)^{k-j-1}
\end{align*}
\textbf{Case 3: $h>j+1$}
\begin{align*}
    &p(\sum_{l=1}^{j-1}\zeta_l\leq s < \sum_{l=1}^j \zeta_l \leq \sum_{l=1}^{h-1}\zeta_l\leq t < \sum_{l=1}^h \zeta_l)\\
    &=\int_0^s\int_0^{s-\zeta_1}\dots\int_0^{s - \sum_{l=1}^{j-2}\zeta_l}\int_{s-\sum_{l=1}^{j-1}\zeta_l}^{t - \sum_{l=1}^{j-1}\zeta_l}\int_0^{t - \sum_{l=1}^j \zeta_l}\dots\int_0^{t - \sum_{l=1}^{h-2}\zeta_l}\int_{t-\sum_{l=1}^{h-1}\zeta_l}^{1-\sum_{l=1}^{h-1}\zeta_l}\int_{0}^{1-\sum_{l=1}^{h}\zeta_l}\dots \int_0^{1-\sum_{l=1}^{k-2}\zeta_l}\\&\quad B^{-1}(1_k)\partial\zeta_{k-1}\dots\partial\zeta_{h+1}\partial\zeta_h\partial\zeta_{h-1}\dots\partial\zeta_{j+1}\partial\zeta_j\partial\zeta_{j-1}\dots\partial\zeta_2\partial\zeta_1\\
    &=\int_0^s\int_0^{s-\zeta_1}\dots\int_0^{s - \sum_{l=1}^{j-2}\zeta_l}\int_{s-\sum_{l=1}^{j-1}\zeta_l}^{t - \sum_{l=1}^{j-1}\zeta_l}\int_0^{t - \sum_{l=1}^j \zeta_l}\dots\int_0^{t - \sum_{l=1}^{h-2}\zeta_l}\int_{t-\sum_{l=1}^{h-1}\zeta_l}^{1-\sum_{l=1}^{h-1}\zeta_l}\\ &\quad \frac{(1-\sum_{l=1}^h\zeta_l)^{k-h-1}}{(k-h-1)!} B^{-1}(1_k)\partial\zeta_h\partial\zeta_{h-1}\dots\partial\zeta_{j+1}\partial\zeta_j\partial\zeta_{j-1}\dots\partial\zeta_2\partial\zeta_1\\
    &=\int_0^s\int_0^{s-\zeta_1}\dots\int_0^{s - \sum_{l=1}^{j-2}\zeta_l}\int_{s-\sum_{l=1}^{j-1}\zeta_l}^{t - \sum_{l=1}^{j-1}\zeta_l}\int_0^{t - \sum_{l=1}^j \zeta_l}\dots\int_0^{t - \sum_{l=1}^{h-2}\zeta_l}\\ &\quad \frac{(1-t)^{k-h}}{(k-h)!} B^{-1}(1_k)\partial\zeta_{h-1}\dots\partial\zeta_{j+1}\partial\zeta_j\partial\zeta_{j-1}\dots\partial\zeta_2\partial\zeta_1\\
    &=\int_0^s\int_0^{s-\zeta_1}\dots\int_0^{s - \sum_{l=1}^{j-2}\zeta_l}\int_{s-\sum_{l=1}^{j-1}\zeta_l}^{t - \sum_{l=1}^{j-1}\zeta_l} \frac{(t - \sum_{l=1}^j \zeta_l)^{h-j-1}}{(h-j-1)!}\frac{(1-t)^{k-h}}{(k-h)!} B^{-1}(1_k)\partial\zeta_j\partial\zeta_{j-1}\dots\partial\zeta_2\partial\zeta_1\\
    &=\int_0^s\int_0^{s-\zeta_1}\dots\int_0^{s - \sum_{l=1}^{j-2}\zeta_l} \frac{(t - s)^{h-j}}{(h-j)!}\frac{(1-t)^{k-h}}{(k-h)!} B^{-1}(1_k)\partial\zeta_{j-1}\dots\partial\zeta_2\partial\zeta_1\\
    &=\frac{s^{j-1}}{(j-1)!} \frac{(t - s)^{h-j}}{(h-j)!}\frac{(1-t)^{k-h}}{(k-h)!} B^{-1}(1_k)
\end{align*}
Now divide by the marginal $z_s=j$ to get the transition probability,
\begin{align*}
    p(z_t=h|z_s=j) &=\frac{\frac{s^{j-1}}{(j-1)!} \frac{(t - s)^{h-j}}{(h-j)!}\frac{(1-t)^{k-h}}{(k-h)!}(k-1)!}{\frac{(k-1)!}{(j-1)!(k-j)!}s^{j-1}(1-s)^{k-j}}\\
    &=\frac{\frac{(t - s)^{h-j}}{(h-j)!}\frac{(1-t)^{k-h}}{(k-h)!}}{\frac{1}{(k-j)!}(1-s)^{k-j}}\\
    &=\binom{k-j}{h-j}\frac{(t - s)^{h-j}(1-t)^{k-h}}{(1-s)^{k-j}}\\
    &=\binom{k-j}{h-j}\Bigg(\frac{t - s}{1-s}\Bigg)^{h-j}\Bigg(\frac{1-t}{1-s}\Bigg)^{k-h}\\
    &=\binom{k-j}{h-j}\Bigg(1-\frac{1-t}{1-s}\Bigg)^{h-j}\Bigg(\frac{1-t}{1-s}\Bigg)^{k-h}
\end{align*}
\end{proof}
\begin{proof}[Proof of Theorem \ref{thm:conttime} Kolmogorov Equations]
\begin{align*}
    \sum_{l = j}^h P_{jl}(s, r)P_{lh}(r, t)
    &= \sum_{l = j}^h \binom{k-j}{l-j}\Bigg(1-\frac{1-r}{1-s}\Bigg)^{l-j}\Bigg(\frac{1-r}{1-s}\Bigg)^{k-l} \binom{k-l}{h-l}\Bigg(1-\frac{1-t}{1-r}\Bigg)^{h-l}\Bigg(\frac{1-t}{1-r}\Bigg)^{k-h}\\
    &= \Bigg(\frac{1-t}{1-s}\Bigg)^{k-h}\sum_{l = j}^h \binom{k-j}{l-j}\binom{k-l}{h-l}\Bigg(\frac{r-s}{1-s}\Bigg)^{l-j}\Bigg(\frac{1-r}{1-s}\Bigg)^{h-l} \Bigg(\frac{t-r}{1-r}\Bigg)^{h-l}\\
    &= \Bigg(\frac{1-t}{1-s}\Bigg)^{k-h}\sum_{l = j}^h \binom{k-j}{l-j}\binom{k-l}{h-l}\Bigg(\frac{r-s}{1-s}\Bigg)^{l-j}\Bigg(\frac{t-r}{1-s}\Bigg)^{h-l}\\
    &= \Bigg(\frac{1-t}{1-s}\Bigg)^{k-h}\Bigg(\frac{1}{1-s}\Bigg)^{h-j}\sum_{l = j}^h \binom{k-j}{l-j}\binom{k-l}{h-l}(r-s)^{l-j}(t-r)^{h-l}\\
    &= \Bigg(\frac{1-t}{1-s}\Bigg)^{k-h}\Bigg(\frac{1}{1-s}\Bigg)^{h-j}\sum_{l = j}^h \frac{(k-j)!}{(l-j)!(k-l)!} \frac{(k-l)!}{(h-l)!(k-h)!}(r-s)^{l-j}(t-r)^{h-l}\\
    &= \binom{k-j}{h-j}\Bigg(\frac{1-t}{1-s}\Bigg)^{k-h}\Bigg(\frac{1}{1-s}\Bigg)^{h-j}\sum_{l = j}^h \binom{h-j}{l-j}(r-s)^{l-j}(t-r)^{h-l}\\
    &= \binom{k-j}{h-j}\Bigg(\frac{1-t}{1-s}\Bigg)^{k-h}\Bigg(\frac{t-s}{1-s}\Bigg)^{h-j}\\
    &=P_{jh}(s,t)
\end{align*}
\end{proof}
\begin{proof}[Proof of Theorem \ref{thm:bayesEst}]
This is a constrained optimization problem since we need to find the configuration $\bm{z}$ that minimizes the expected loss subject to being a change point process.  We first derive the Bayes estimator in the unconstrained space(which contains the constrained space). We then show, despite that we found the estimator in the bigger unconstrained space, the estimator yields a change point process almost surely, satisfying the constraint. 

\textit{Estimator for the unconstrained space}\\
Since the loss is a sum over $i=1,\dots,n$, and since we are operating in the unconstrained space, the problem reduces to finding the Bayes estimator separately for each $z_{t_i}$,
\[
    \underset{z_{t_i}}{\arg\min}\ \mathbb{E}_{\bm{z}^*|\bm{y}}\bigg[|z_{t_i} - z_{t_i}^*|(t_i - t_{i-1})\bigg] = \underset{z_{t_i}}{\arg\min}\ \mathbb{E}_{\bm{z}^*|\bm{y}}\bigg[|z_{t_i} - z_{t_i}^*|\bigg]
\]
Since $(t_i - t_{i-1})$ is a constant.  It is well known the Bayes estimator for absolute loss is the median. Thus, the Bayes estimator for the unconstrained problem is 
\[
    \hat{z}_{t_i} = \underset{j}{\text{min}}\bigg( \sum_{k = 1}^K\sum_{l=1}^j p(z_{t_i}=l|k,y)p(k|y) \geq 0.5 \bigg)
\]

\textit{Show this estimator is a change point process: discrete time case}\\
Now we show this estimator is a change point process with probability 1.  The proof strategy is to show for arbitrary median $\hat{z}_{t_i}$, that the median $ \hat{z}_{t_{i+1}} \in \{\hat{z}_{t_i},\hat{z}_{t_i}+1\}$ with probability 1 under the posterior measure of interest. To that end, let $\Omega$ be the set of all change point process sample points $\omega$ with positive support under the prior on $\bm{z}$. These configurations represent a superset of the configurations with positive support under the posterior measure.  Let $\hat{z}_{t_i} = \text{median}(z_{t_i})$ under the posterior measure be arbitrary. By definition of median,
\begin{align*}
p(z_{t_i}(\omega)\leq \hat{z}_{t_i}|\bm{y})\geq0.5\\
\text{  and  }\\
p(z_{t_i}(\omega)\geq \hat{z}_{t_i}|\bm{y})\geq0.5
\end{align*}
Since $\omega$ is a change point process, 
\[\omega \in \Omega\big(z_{t_i}=\hat{z}_{t_i}\big) \implies \omega\in \Omega\big(z_{t_{i+1}}\in\{\hat{z}_{t_i},\hat{z}_{t_i}+1\}\big)\]
with probability 1, where we define $\Omega(A)$ as the subset of the sample space $\Omega$ where the condition $A$ is true. The above then also implies, 
\[\omega \in \Omega\big(z_{t_i}\leq \hat{z}_{t_i}\big) \implies \omega\in \Omega\big(z_{t_{i+1}}\leq \hat{z}_{t_i} \text{ or } z_{t_{i+1}}\leq \hat{z}_{t_i}+1\big)\] with probability 1.
Plugging these implications back into the probability inequalities that define the median, and using the fact that $A\subset B$ implies $p(A)\leq p(B)$,
\begin{align*}
p(z_{t_{i+1}}(\omega)\leq \hat{z}_{t_i}\text{ or }z_{t_{i+1}}(\omega)\leq \hat{z}_{t_i}+1|\bm{y})\geq0.5\\
\text{ and }\\
p(z_{t_{i+1}}(\omega)\geq \hat{z}_{t_i}\text{ or }z_{t_{i+1}}(\omega)\geq \hat{z}_{t_i}+1|\bm{y})\geq0.5
\end{align*}
The "or"-events in these probabilities can be reduced to mutual exclusivity by removing their intersection as follows,
\begin{align*}
p(z_{t_{i+1}}(\omega)\leq \hat{z}_{t_i}|\bm{y})+p(z_{t_{i+1}}(\omega)= \hat{z}_{t_i}+1|\bm{y})\geq0.5\\
\text{ and }\\
p(z_{t_{i+1}}(\omega)= \hat{z}_{t_i}|\bm{y})+p(z_{t_{i+1}}(\omega)\geq \hat{z}_{t_i}+1|\bm{y})\geq0.5
\end{align*}
We are now in a position to determine the median of $z_{t_{i+1}}$. The first case is when $p(z_{t_{i+1}}(\omega)\leq \hat{z}_{t_i}|\bm{y})>0.5$, which implies $p(z_{t_{i+1}}(\omega)\geq \hat{z}_{t_i}+1|\bm{y})<0.5$ since the two probabilities sum to 1. In this case, we have $\hat{z}_{t_{i+1}} = \hat{z}_{t_{i}}$ since, 
\begin{align*}
p(z_{t_{i+1}}(\omega)\leq \hat{z}_{t_i}|\bm{y})\geq0.5\\
\text{ and }\\
p(z_{t_{i+1}}(\omega)= \hat{z}_{t_i}|\bm{y})+p(z_{t_{i+1}}(\omega)\geq \hat{z}_{t_i}+1|\bm{y})\geq0.5
\end{align*}

The second case is when $p(z_{t_{i+1}}(\omega)\leq \hat{z}_{t_i}|\bm{y})<0.5$, which implies $p(z_{t_{i+1}}(\omega)\geq \hat{z}_{t_i}+1|\bm{y})>0.5$ since the two probabilities sum to 1. In this case we have $\hat{z}_{t_{i+1}} = \hat{z}_{t_{i}}+1$ since, 
\begin{align*}
p(z_{t_{i+1}}(\omega)\leq \hat{z}_{t_i}|\bm{y})+p(z_{t_{i+1}}(\omega)= \hat{z}_{t_i}+1|\bm{y})\geq0.5\\
\text{ and }\\
p(z_{t_{i+1}}(\omega)\geq \hat{z}_{t_i}+1|\bm{y})\geq0.5
\end{align*}
The last case, when $p(z_{t_{i+1}}(\omega)\leq \hat{z}_{t_i}|\bm{y})=0.5$, follows similarly. Thus, in all cases, the median $\hat{z}_{t_{i+1}}$ is either  $\hat{z}_{t_{i}}$ or $\hat{z}_{t_{i}}+1$ with probability 1, and the resulting estimator is the Bayes estimator for the weighted Hamming loss in the constrained space of change point processes in discrete time. 

\textit{Show this estimator is a change point process: continuous time case}\\
In continuous time, more than one change point can occur between two consecutive observations, so the proof changes slightly. Suppose the median at time $t_i$ is $\hat{z}_{t_i}$. Let $\omega$ be a continuous time change point process such that $\omega\in\Omega(z_{t_i}=\hat{z}_{t_i})$. This implies $\omega\in\Omega(z_{t_{i+1}}\in\{\hat{z}_{t_i},\dots,K\})$. Furthermore, extending these statements with inequalities, we have, $\omega\in\Omega(z_{t_i}\leq\hat{z}_{t_i})$ implies $\omega\in\Omega(z_{t_{i+1}}\leq \hat{z}_{t_i}\text{ or }z_{t_{i+1}}\in\{\hat{z}_{t_i}+1,\dots,K\})$ and similarly $\omega\in\Omega(z_{t_i}\geq\hat{z}_{t_i})$ implies $\omega\in\Omega(z_{t_{i+1}}\in\{\hat{z}_{t_i},\dots,K\})$. Using the fact that $A\subset B$ implies $p(A)\leq p(B)$, applying these results to the definition of median,
\begin{align*}
p(z_{t_{i+1}}(\omega)\leq \hat{z}_{t_i}|\bm{y})+ \sum_{j = \hat{z}_{t_i}+1}^K p(z_{t_{i+1}}(\omega)= j|\bm{y})\geq0.5\\
\text{ and }\\
p(z_{t_{i+1}}(\omega)= \hat{z}_{t_i}|\bm{y}) + \sum_{j = \hat{z}_{t_i}+1}^K p(z_{t_{i+1}}(\omega)= j|\bm{y})\geq0.5
\end{align*}
The first of those inequalities is trivial since the probabilities sum to one, but is also constructive for the proof. 
Now proceed ruling out each possibility as we did in the discrete case. If $p(z_{t_{i+1}}(\omega)\leq \hat{z}_{t_i}|\bm{y})>0.5$ then the second summation in the second equation is less than 0.5, and the median is $\hat{z}_{t_{i+1}} = \hat{z}_{t_i}$. 

Proceeding iteratively, now suppose $p(z_{t_{i+1}}(\omega)\leq \hat{z}_{t_i}|\bm{y})<0.5$. Then, by the first equation, $\sum_{j = \hat{z}_{t_i}+1}^K p(z_{t_{i+1}}(\omega)= j|\bm{y})\geq0.5$. But since $p(z_{t_{i+1}}(\omega)\leq \hat{z}_{t_i}|\bm{y})<0.5$ then by the sum of probability to 1,  $\sum_{j = \hat{z}_{t_i}+1}^K p(z_{t_{i+1}}(\omega)= j|\bm{y})\geq0.5$ and the median $\hat{z}_{t_{i+1}} = \hat{z}_{t_i}+1$. 

This process iterates and we conclude that $\hat{z}_{t_{i+1}} \in \{\hat{z}_{t_i},\dots,K\}$ with probability 1.
\end{proof}

\begin{proof}[Proof of Proposition \ref{prop:IAVH}]
WLOG, let $t^*=365/T$ be the starting time of the second year. Enforcing continuity of the mean function requires setting its left limit equal to its right limit at time $t^*$. Limiting from the left, the harmonic contrast coefficients are zero since there is no contrast in the first year.  We also have that the $\sin$ terms are zero and $\cos$ terms are 1 at $t^*$, thus,
\begin{align*}
    &= \lim_{t\to t^{*-}} \mu(t) + \sum_{h=1}^H \gamma_{h,j(t)} \sin(h\omega t) + \delta_{h,j(t)} \cos(h\omega t) && \text{(left hand limit)}\\
    &= \lim_{t\to t^{*-}} \mu(t)&& \text{(contrasts are 0 in first year)}\\
    &= \alpha + \beta t^* + \sum_{h=1}^H \delta_{h} && \text{($\sin$ terms 0, $\cos$ terms 1 at $t^*$)}
\end{align*}

From the right, the contrast coefficients for the second year may not be 0. We have all $\sin$ terms are zero at the limit and $\cos$ terms are 1,
\begin{align*}
    &= \lim_{t\to t^{*+}} \mu(t) + \sum_{h=1}^H \gamma_{h,j(t)} \sin(h\omega t) + \delta_{h,j(t)} \cos(h\omega t) \\
    &= \alpha + \beta t^* + \sum_{h=1}^H \delta_{h} + \sum_{h=1}^H \delta_{h,1}
\end{align*}

Setting these two limits equal we arrive at the first result, 
\begin{align}\label{cl1e1}
    \sum_{h=1}^H \delta_{h,1} = 0 
\end{align}
Now let $t^*$ be the starting time of the second year. Using similar arguments above, we arrive at the equality,
\begin{align*}
    \sum_{h=1}^H \delta_{h,1} &= \sum_{h=1}^H \delta_{h,2}\\
    0 &= \sum_{h=1}^H \delta_{h,2} &&\text{(from Equation \ref{cl1e1})}
\end{align*}
Thus, the continuity constraint at the starting time of each $j$th year of the mean function leads to the constraints,
\begin{align*}
    \delta_{H,j} = -\sum_{h=1}^{H-1} \delta_{h,j}
\end{align*}
Using a similar argument for enforcing continuity of the derivative of the mean function, we have,
\begin{align*}
    \gamma_{H,j} = -\sum_{h=1}^{H-1} \gamma_{h,j}
\end{align*}
\end{proof}

\section{Appendix B: Noninformative Segment Lengths in Discrete and Continuous Time}

We wish to derive the noninformative distribution of $\{\zeta_j\}_{j=1}^k$ in the discrete time case. 
Whereas, in the hypergeometric distribution, the number of samples until success is considered fixed and the number of successes at that time is considered random,  we wish to relate this distribution to one where the segment lengths are random-- that is, the number of samples until a specified number of successes is random. With this in mind, define the Inverse Hypergeometric Distribution as the distribution on the number of samples $i$ until the $j$th success, with population size $n$ and $k$ total successes in the population.  We first derive the $(n,J,1)$-Inverse Hypergeometric Distribution, that is, the distribution of the length until the first success.
\begin{proposition}[$(n,J,1)$-Inverse Hypergeometric Distribution]
\label{prop:IHG1}
Suppose there is an urn with population $n$ and $J$ total successes. The distribution of the length until first success is,
\begin{align*}
p(\zeta_1=i) = \frac{J}{n-(i-1)}\cdot \frac{\binom{n-J}{i-1}}{\binom{n}{i-1}}
\end{align*}
\end{proposition}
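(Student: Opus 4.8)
The plan is to treat $\zeta_1$ as the waiting time to the first success when drawing without replacement from an urn containing $n$ items of which $J$ are successes, and to evaluate $p(\zeta_1 = i)$ directly from first principles. The event $\{\zeta_1 = i\}$ is precisely the event that the first $i-1$ draws are all failures and the $i$th draw is a success. I would factor this probability as a product of (i) the probability that the first $i-1$ draws are failures and (ii) the conditional probability that the $i$th draw succeeds given those failures, and then simplify to the binomial-coefficient form in the statement.

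First I would evaluate $P(\text{first } i-1 \text{ draws are failures})$. Because sampling without replacement is exchangeable, this event depends only on the unordered set of the first $i-1$ items drawn, so its probability is the hypergeometric probability of obtaining zero successes in a sample of size $i-1$, namely $\binom{n-J}{i-1}\binom{J}{0}\big/\binom{n}{i-1} = \binom{n-J}{i-1}\big/\binom{n}{i-1}$. Next I would compute the conditional factor: conditioned on the first $i-1$ draws being failures, the urn retains $n-(i-1)$ items with all $J$ successes still present, so the probability that the $i$th draw is a success is $J/(n-(i-1))$. Multiplying the two factors yields exactly the claimed expression.

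The computation is elementary; the only point requiring care is the justification of the first factor, specifically that the ``first $i-1$ are failures'' event may be evaluated as an unordered hypergeometric zero-success probability rather than by multiplying sequential draw-by-draw conditionals. I expect this exchangeability argument to be the main (though mild) obstacle. An alternative that sidesteps it entirely is to compute the sequential product $\prod_{m=1}^{i-1}\frac{n-J-(m-1)}{n-(m-1)}$ and recognize it as a ratio of falling factorials, $\frac{(n-J)!\,(n-i+1)!}{(n-J-i+1)!\,n!}$, which equals $\binom{n-J}{i-1}\big/\binom{n}{i-1}$; this reduces the whole proof to routine factorial algebra and avoids any appeal to exchangeability.
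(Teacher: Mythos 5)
Your proposal is correct and takes essentially the same route as the paper: both factor $\{\zeta_1=i\}$ into the hypergeometric zero-success probability $\binom{n-J}{i-1}\big/\binom{n}{i-1}$ for the first $i-1$ draws, times the conditional success probability $J/(n-(i-1))$ for the $i$th draw. Your added care about exchangeability (and the sequential falling-factorial alternative) merely fills in justification the paper's terse proof leaves implicit.
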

\begin{proof}[Proof of Proposition \ref{prop:IHG1}]
    Choose the first $i-1$ draws out of the possible $n-J$ failures. The denominator of those first $i-1$ draws is all the ways to choose $i-1$ draws from population $n$. Then, conditioned on the first $i-1$ failures, the probability that the $i$th draw is a success is $J/(n-(i-1))$.
\end{proof}

Using this distribution, derive the general case by conditioning on one success at a time,

\begin{theorem}[$(n,J,j)$-Inverse Hypergeometric Distribution]
\label{thm:IHG}
Suppose there is an urn with population $n$ and total successes $J$. Define $\zeta_0 = 0$. The distribution of $\{\zeta_l\}_{l=1}^{j}$, the first $j$ consecutive lengths-until-success, is the product, 
\begin{align*}
    p(\{\zeta_l=i^{(l)}\}_{l=1}^j)
    &=\prod_{l=1}^j \mathit{IHG}((n-\sum_{m=0}^{l-1}\zeta_m),(J-l+1),1)
\end{align*}
We say $\{\zeta_l=i^{(l)}\}_{l=1}^j$ is $\mathit{IHG}(n,J,j)$ distributed. In the special case of $J=k-1$ and $j = k-1$, we have,
\begin{align*}
    p(\{\zeta_l=i^{(l)}\}_{l=1}^{k-1})&=\prod_{l=1}^{k-1} \mathit{IHG}((n-\sum_{m=0}^{l-1}\zeta_m),(k-l),1)\\ 
    &=\frac{(k-1)!}{n(n-1)\dots (n-(k-2))}\\
    &=\frac{1}{\binom{n}{k-1}}
\end{align*}
\end{theorem}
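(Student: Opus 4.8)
The plan is to derive the product formula from the chain rule of probability combined with the memorylessness of sampling without replacement, and then to obtain the closed form in the special case $J = k-1$, $j = k-1$ by substituting Proposition~\ref{prop:IHG1} into each factor and telescoping the result.

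I would begin by factoring the joint mass of the first $j$ lengths-until-success as
\[
p\big(\{\zeta_l = i^{(l)}\}_{l=1}^j\big) = \prod_{l=1}^j p\big(\zeta_l = i^{(l)} \,\big|\, \zeta_1 = i^{(1)}, \ldots, \zeta_{l-1} = i^{(l-1)}\big).
\]
The crux of the argument is to identify each conditional factor as a single-success inverse hypergeometric mass. Conditioning on the first $l-1$ lengths pins down the positions of the first $l-1$ successes, and hence exactly which $\sum_{m=0}^{l-1}\zeta_m$ items have already been drawn; since draws without replacement are exchangeable, the law of the next length-until-success depends on this history only through the composition of what remains in the urn. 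That residual urn has population $n - \sum_{m=0}^{l-1}\zeta_m$ and $J - (l-1) = J - l + 1$ remaining successes, so Proposition~\ref{prop:IHG1} identifies the $l$th factor as $\mathit{IHG}\big(n - \sum_{m=0}^{l-1}\zeta_m,\, J - l + 1,\, 1\big)$, which is precisely the claimed product.

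For the special case I would substitute the explicit single-success mass into each of the $k-1$ factors, abbreviating the residual population by $N_l \coloneqq n - \sum_{m=0}^{l-1}\zeta_m$ (so $N_1 = n$) and the residual success count by $J_l \coloneqq k - l$. Expanding the binomial ratio in Proposition~\ref{prop:IHG1} recasts the $l$th factor as
\[
\frac{J_l\,(N_l - J_l)!\,(N_l - \zeta_l)!}{N_l!\,(N_l - J_l - \zeta_l + 1)!},
\]
and the two identities $N_l - \zeta_l = N_{l+1}$ and $N_l - J_l - \zeta_l + 1 = N_{l+1} - J_{l+1}$ make each factor's factorials cancel against those of its neighbour. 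The product then separates into $\prod_{l=1}^{k-1} J_l = (k-1)!$, the telescoping ratio $\prod_{l=1}^{k-1} N_{l+1}!/N_l! = N_k!/n!$, and the telescoping ratio $\prod_{l=1}^{k-1} (N_l - J_l)!/(N_{l+1} - J_{l+1})! = (n-k+1)!/N_k!$. The stray $N_k!$ cancels, leaving $(k-1)!\,(n-k+1)!/n! = 1/\binom{n}{k-1}$.

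The main obstacle is the memorylessness step rather than the algebra: one must argue carefully that conditioning on the entire configuration of the first $l-1$ successes is equivalent to conditioning only on the size and success-count of the residual urn, so that Proposition~\ref{prop:IHG1} applies verbatim to every factor. Once that reduction is secured, the telescoping in the special case is routine, driven entirely by the two residual-population identities above.
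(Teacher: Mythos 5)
Your proposal is correct and takes essentially the same route as the paper's proof: the chain rule plus the residual-urn observation (after the first $l-1$ successes, the urn has population $n-\sum_{m=0}^{l-1}\zeta_m$ with $J-l+1$ successes) identifies each conditional factor as a single-success inverse hypergeometric mass via Proposition~\ref{prop:IHG1}, and the special case follows by substituting that mass and telescoping. Your telescoping is a cleaner closed-form rendering — splitting the product into $\prod_l J_l = (k-1)!$ and the two telescoping factorial ratios via the identities $N_{l+1}=N_l-\zeta_l$ and $N_{l+1}-J_{l+1}=N_l-J_l-\zeta_l+1$ — of the same cancellation the paper carries out recursively through the $k=2,3$ examples and an inductively tracked middle term, so the underlying argument is identical.
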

\begin{proof}[Proof of Theorem \ref{thm:IHG}]
From Proposition \ref{prop:IHG1}, $p(\zeta_{1}=i^{(1)})$ is $(n,J,1)$-Inverse Hypergeometric Distributed. Note that conditioned on $\zeta_1=i^{(1)}$, the population is now $n-i^{(1)}$ and the remaining total number of successes is $J-1$, thus, \begin{align*}
    p(\zeta_2=i^{(2)}|\zeta_1=i^{(1)}) = \mathit{IHG}(n-i^{(1)},J-1,1)
\end{align*}
Note that in the general case, for $p(\zeta_l=i^{(l)}|\{\zeta_m=i^{(m)}\}_{m=1}^{l-1})$, a similar argument holds.  The population is reduced to $n - \sum_{m=0}^{l-1}\zeta_l$ and the number of successes is reduced to $J-l+1$.  Thus, using the law of conditional probability, the result is a product of inverse hypergeometric distributions as written in the statement. Now to prove the simplification of this statement, consider the cases of $k=2$ and $k=3$ as follows. 
Let $\{\zeta_j\}_{j=1}^{k-1}$ be $\mathit{IHG}(n,k-1,k-1)$ distributed. Suppose $k = 2$, then,
\begin{align*}
p(\zeta_1 = i_1) &= \frac{1}{n-i_1+1} \cdot  \frac{\binom{n-1}{i_1-1}}{\binom{n}{i_1-1}}\\
    &= \frac{1}{n-i_1+1} \cdot  \frac{(n-1)!(n-i_1+1)!}{(n-i_1)!n!}\\
    &= \frac{1!}{n}\\ 
\end{align*}
When $k=3$, observe the following telescopic cancellation,
\begin{align*}
    p(\zeta_1 = i_1, \zeta_2 = i_2) &= p(\zeta_1 = i_1)p(\zeta_2 = i_2|\zeta_1 = i_1)\\
    &= (\frac{2}{n-i_1+1} \frac{\binom{n-2}{i_1-1}}{\binom{n}{i_1-1}}) \cdot  (\frac{1}{n-i_1-i_2+1} \frac{\binom{n-i_1-1}{i_2-1}}{\binom{n-i_1}{i_2-1}})\\
    &= (\frac{2}{n-i_1+1}\frac{(n-2)!(n-i_1+1)!}{(n-i_1-1)!n!})\cdot (\frac{1}{n-i_1-i_2+1} \frac{(n-i_1-1)!(n-i_1-i_2+1)!}{(n-i_1-i_2)!(n-i_1)!})\\
    &= \frac{2(n-i_1)}{n(n-1)}\cdot \frac{1}{(n-i_1)}\\
    &= \frac{2!}{n(n-1)}\\
\end{align*}
For general $k$, using the same telescoping cancellation approach, notice,
\begin{align*}
    p(\{\zeta_j = i_j\}_{j=1}^{k-1}) &= \frac{k-1}{n-i_1+1}\frac{\binom{n-(k-1)}{i_1 - 1}}{\binom{n}{i_1-1}} \cdot  \frac{k-2}{n-i_1-i_2+1}\frac{\binom{n-i_1-(k-2)}{i_2 - 1}}{\binom{n-i_1}{i_2-1}}\cdot p(\{\zeta_j\}_{j=3}^{k-1}|\{\zeta_j\}_{j=1}^2)\\
    &=\frac{k-1}{n-i_1+1}\frac{(n-(k-1))!(n-i_1+1)!}{n!(n-k-i_1+2)!}\\
    &\ \ \ \ \cdot \frac{k-2}{n-i_1-i_2+1}\frac{(n-i_1-k+2)!(n-i_1-i_2+1)!}{(n-i_1)!(n-i_1-k-i_2+3)!}\cdot p(\{\zeta_j\}_{j=3}^{k-1}|\{\zeta_j\}_{j=1}^2)\\
    &=\frac{(k-1)(k-2)}{n(n-1)\dots (n-(k-2))}\cdot \frac{(n-i_1-i_2)!}{(n-i_1-k-i_2+3)!}\cdot p(\{\zeta_j\}_{j=3}^{k-1}|\{\zeta_j\}_{j=1}^2)\\
\end{align*}
There are three points to make here.  The first is that the numerator is recursively forming $(k-1)!$.  The second is that the first denominator is already equal to $(n(n-1)\dots (n-(k-2)))^{-1}$.  Finally, the term in the middle can be rewritten in terms of $j$, in order to understand how it changes during recursion,
\begin{align*}
    \frac{(n-i_1-i_2)!}{(n-i_1-k-i_2+3)!} = \frac{(n-\sum_{l=1}^{j}i_l)!}{(n-(\sum_{l=1}^j i_l) -k+(j+1))!}
\end{align*}
Using this equation, after recursing through $k-2$ conditional probabilities, we arrive at,
\begin{align*}
    &=\frac{(k-1)!}{n(n-1)\dots (n-(k-2))}\cdot \frac{(n-\sum_{l=1}^{k-2}i_l)!}{(n-(\sum_{l=1}^{k-2}i_l)-k+((k-2)+1))!}\cdot p(\zeta_{k-1}|\{\zeta_j\}_{j=1}^{k-2})\\
    &=\frac{(k-1)!}{n(n-1)\dots (n-(k-2))}\cdot \frac{(n-\sum_{l=1}^{k-2}i_l)!}{(n-(\sum_{l=1}^{k-2}i_l)-1)!}\cdot p(\zeta_{k-1}|\{\zeta_j\}_{j=1}^{k-2})\\
    &=\frac{(k-1)!}{n(n-1)\dots (n-(k-2))}\cdot \frac{(n-\sum_{l=1}^{k-2}i_l)!}{(n-(\sum_{l=1}^{k-2}i_l)-1)!}\\
    &\hspace{5em} \cdot \frac{1}{n-\sum_{l=1}^{k-1}i_l +1}\frac{(n-\sum_{l=1}^{k-2}i_l-1)!(n-\sum_{l=1}^{k-1}i_l+1)!}{(n-\sum_{l=1}^{k-2}i_l)!(n-\sum_{l=1}^{k-1}i_l)!}\\
    &=\frac{(k-1)!}{n(n-1)\dots (n-(k-2))}
\end{align*}
\end{proof}

The last part of this theorem confirms the Inverse-Hypergeometric distribution is the noninformative prior on discrete segment lengths, as it measures each change point process sample point with equal probability $\frac{1}{\binom{n}{k-1}}$.

From the other direction, we ought to expect that the Inverse Hypergeometric distribution in discrete time converges in distribution to the Dirichlet as well. This is indeed the case,

\begin{theorem}[Noninformative segment Length Convergence: Inverse Hypergeometric to Dirichlet]
\label{thm:IHGDir}
Let $\{\zeta^*_j\}_{j=1}^{k-1} \in (0,1)$ be arbitrary having $\sum_{j=1}^{k-1} \zeta^*_j < 1$. Define the corresponding discrete case as $\zeta_j = \floor{(n-j+1) \zeta^*_j}$ for all $j$. Then the inverse hypergeometric segment lengths converge in distribution to the noninformative continuous time distribution on segment lengths, Dirichlet $\bm{1}_k$ as $n\to\infty$,
\begin{align*}
    F_{IHG}(\{\zeta_j\}_{j=1}^{k-1}) \to F_{Dir}(\{\zeta_j^*\}_{j=1}^{k-1};\bm{1}_k)
\end{align*}
Where $F$ denotes the distribution function and the IHG distribution is parameterized as in the noninformative case with population $n$, $k-1$ total successes, and samples until $k-1$ successes.
\end{theorem}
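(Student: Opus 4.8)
The plan is to exploit the fact, established in Theorem~\ref{thm:IHG}, that the $\mathit{IHG}(n,k-1,k-1)$ law is \emph{uniform} on the admissible discrete compositions: every configuration $\{\zeta_j\}_{j=1}^{k-1}$ receives mass exactly $1/\binom{n}{k-1}$. The target law $\mathit{Dir}(\bm{1}_k)$ is likewise uniform, having constant density $(k-1)!$ on the simplex $\{x\in\R^{k-1}_{>0}:\sum_{j=1}^{k-1}x_j<1\}$. Thus the statement asserts that a discrete uniform measure on the lattice simplex, rescaled by $1/n$, converges weakly to the continuous uniform measure on the simplex, and I would prove it by showing convergence of the joint distribution functions via a lattice-point counting argument.

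First I would write the joint distribution function as a ratio of counts. Because the law is uniform, for $a_j \coloneqq \floor{(n-j+1)\zeta_j^*}$,
\[
F_{IHG}(\{\zeta_j\}_{j=1}^{k-1}) = \frac{\#\big\{\zeta\in\mathbb{Z}^{k-1}: 1\le \zeta_j\le a_j,\ \textstyle\sum_{j=1}^{k-1}\zeta_j\le n-1\big\}}{\binom{n}{k-1}},
\]
the numerator counting admissible compositions dominated coordinatewise by $\{a_j\}$. Rescaling each axis by $1/n$ and using $(n-j+1)/n\to1$, so that $a_j/n\to\zeta_j^*$, I would observe that the rescaled index region converges to the polytope $R \coloneqq \{x\in\R^{k-1}: 0\le x_j\le \zeta_j^*,\ \sum_{j=1}^{k-1}x_j\le 1\}$; the lower bounds $\zeta_j\ge1$ and the cap $\sum\zeta_j\le n-1$ become constraints $x_j\ge 1/n$ and $\sum x_j\le 1-1/n$ that vanish in the limit.

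Next I would invoke the standard lattice-point asymptotic for a Jordan-measurable set $K$, namely $\#(nK\cap\mathbb{Z}^{k-1})/n^{k-1}\to\mathrm{Vol}(K)$ as $n\to\infty$. Applying this with $K=R$---a bounded convex polytope, hence Jordan measurable---shows the numerator divided by $n^{k-1}$ converges to $\mathrm{Vol}(R)$, the $O(1)$ discrepancies between the exact index region and $nR$ affecting the count only at order $n^{k-2}$. Combined with $\binom{n}{k-1}/n^{k-1}\to 1/(k-1)!$, this gives $F_{IHG}\to (k-1)!\,\mathrm{Vol}(R)$. Finally I would identify the limit with $F_{Dir}$: since the density of the first $k-1$ Dirichlet coordinates is the constant $(k-1)!$ on the simplex, $F_{Dir}(\{\zeta_j^*\};\bm{1}_k)=(k-1)!\,\mathrm{Vol}(R)$ as well, completing the proof.

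The main obstacle is the lattice-point-to-volume step with its boundary bookkeeping: I must confirm that the boundary pieces where $\sum x_j=1$ or some $x_j=\zeta_j^*$ carry no limiting mass (guaranteed by Jordan measurability of the polytope), and that the three families of $O(1/n)$ corrections---the $(n-j+1)$ versus $n$ scaling, the unit lower bounds, and the exact boundary convention distinguishing $\binom{n}{k-1}$ from $\binom{n-1}{k-1}$---are all asymptotically negligible. An equivalent route, more in the spirit of the product form in Theorem~\ref{thm:IHG}, is to write each conditional factor $\zeta_l\mid\text{past}\sim \mathit{IHG}(N_{l-1},k-l,1)$ with remaining population $N_{l-1}=n-\sum_{m<l}\zeta_m$, show by the ratio computation that $\zeta_l/N_{l-1}\Rightarrow \mathrm{Beta}(1,k-l)$ conditionally and jointly, and then appeal to the continuous mapping theorem through the stick-breaking representation of $\mathit{Dir}(\bm{1}_k)$; the counting argument, however, avoids tracking the conditioning and is the cleaner path.
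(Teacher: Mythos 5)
Your proposal is correct, and it reaches the paper's conclusion by a genuinely different (and heavier) route than the one the paper takes. Both arguments pivot on the same key fact from Theorem~\ref{thm:IHG}, that the $\mathit{IHG}(n,k-1,k-1)$ law is uniform with mass $(k-1)!/\big(n(n-1)\cdots(n-k+2)\big) = 1/\binom{n}{k-1}$ on admissible configurations. But where you keep the simplex cap $\sum_j \zeta_j \le n$ inside the counting region and then invoke the general lattice-point asymptotic $\#(nK\cap\mathbb{Z}^{k-1})/n^{k-1}\to\mathrm{Vol}(K)$ for Jordan-measurable polytopes, the paper observes that the cap never binds: since $\sum_{j=1}^{k-1}\zeta_j^*<1$ \emph{strictly}, for $n$ large enough every lattice point of the box $\prod_{j=1}^{k-1}\{1,\dots,\floor{(n-j+1)\zeta_j^*}\}$ is admissible, so the CDF is \emph{exactly} the constant pmf summed over a product box and factorizes exactly as $(k-1)!\prod_{j=1}^{k-1}\floor{(n-j+1)\zeta_j^*}/(n-j+1)$, which converges termwise to $(k-1)!\prod_j \zeta_j^*$. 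This is precisely the role of the hypothesis $\sum_j\zeta_j^*<1$ in the statement: it makes the constraint $\sum_j x_j\le 1$ inactive on the limit region, so $\mathrm{Vol}(R)=\prod_j\zeta_j^*$ in your notation and the Dirichlet CDF is a plain product. Note also that the paper's pairing of each floor $\floor{(n-j+1)\zeta_j^*}$ with the matching telescoped denominator factor $(n-j+1)$ eliminates one of your three families of corrections (the $(n-j+1)$-versus-$n$ rescaling) at the source, and the exactness of the box count eliminates the other two, so no volume-approximation lemma or boundary bookkeeping is needed at all. What your approach buys in exchange is robustness and generality: the Jordan-measure counting argument works uniformly, extends to limit points on the face $\sum_j\zeta_j^*=1$ (where the paper's exact box factorization breaks down because not all box points remain admissible), and handles non-box regions, at the cost of the $O(n^{k-2})$ boundary-discrepancy estimates you correctly flag as the main obstacle --- each of which is indeed negligible against the $n^{k-1}$ normalization. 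Your alternative stick-breaking route via conditional $\mathrm{Beta}(1,k-l)$ limits and the continuous mapping theorem is also sound and closer to the product structure of Theorem~\ref{thm:IHG}, but, as you note, it requires tracking the conditioning that the counting argument avoids.
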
  

\begin{proof}[Proof of Theorem \ref{thm:IHGDir}]
    Let $\{\zeta^*_j\}_{j=1}^k \in (0,1)$ such that $\sum_{j=1}^k \zeta^*_j = 1$ be otherwise arbitrary. Define the discretization $\zeta_j = \floor{(n-j+1) \zeta^*_j}$. Note, $\{\zeta_j\}_{j=1}^k$ as defined represents the sample space of the IHG probability measure, and $\{\zeta^*_j\}_{j=1}^k$ represents the sample space of the Dirichlet random variable. As such, the CDF of the IHG random variable follows,
    \begin{align*}
        F_{IHG}(\zeta_1, \dots,\zeta_{k-1}) & = F_{IHG}(\floor{n\zeta^*_1},...,\floor{(n-k+2)\zeta^*_{k-1}})\\
        &= \sum_{i_1=1}^{\floor{n\zeta^*_1}}\dots \sum_{i_{k-1}=1}^{\floor{(n-k+2)\zeta^*_{k-1}}} \frac{(k-1)!}{n(n-1)\dots(n-k+2)} &&(\text{Theorem \ref{thm:4}, $n$ large enough})\\
        &= (k-1)! \frac{\floor{n\zeta^*_1}}{n}\frac{\floor{(n-1)\zeta^*_2}}{n-1}\dots \frac{\floor{(n-k+2)\zeta^*_{k-1}}}{n-k+2}\\
        &= B^{-1}(\bm{1}_k) \frac{\floor{n\zeta^*_1}}{n}\frac{\floor{(n-1)\zeta^*_2}}{n-1}\dots \frac{\floor{(n-k+2)\zeta^*_{k-1}}}{n-k+2}\\
        &\to B^{-1}(\bm{1}_k) \zeta^*_1\dots \zeta^*_{k-1} &&(n\to\infty)\\
        &= B^{-1}(\bm{1}_k) \int_0^{\zeta^*_1}\dots\int_0^{\zeta^*_{k-1}} \partial\zeta^*_1\dots \partial\zeta^*_{k-1}\\
        &= F_{Dirichlet}(\zeta^*_1, \dots, \zeta^*_{k-1}; \bm{1}_k)
    \end{align*}
\end{proof}

\section{Appendix C: Supplementary Results for Simulation Study}
\subsection{Simulation study: factorial subsets}
Following up from Section \ref{sec:sim}, we breakdown the factorial study into subsets along the time distribution (Figure \ref{fig:ms_ss_timedist}), the error distribution (Figure \ref{fig:ms_ss_errvar}), and the robustness distribution (Figure \ref{fig:ms_ss_robust}). Finally the performance of the three main models are broken down by number of segments in the synthetic data in Figure \ref{fig:ms_ss_nseg}.
\begin{figure}
    \centering
    \includegraphics[width=1.\linewidth]{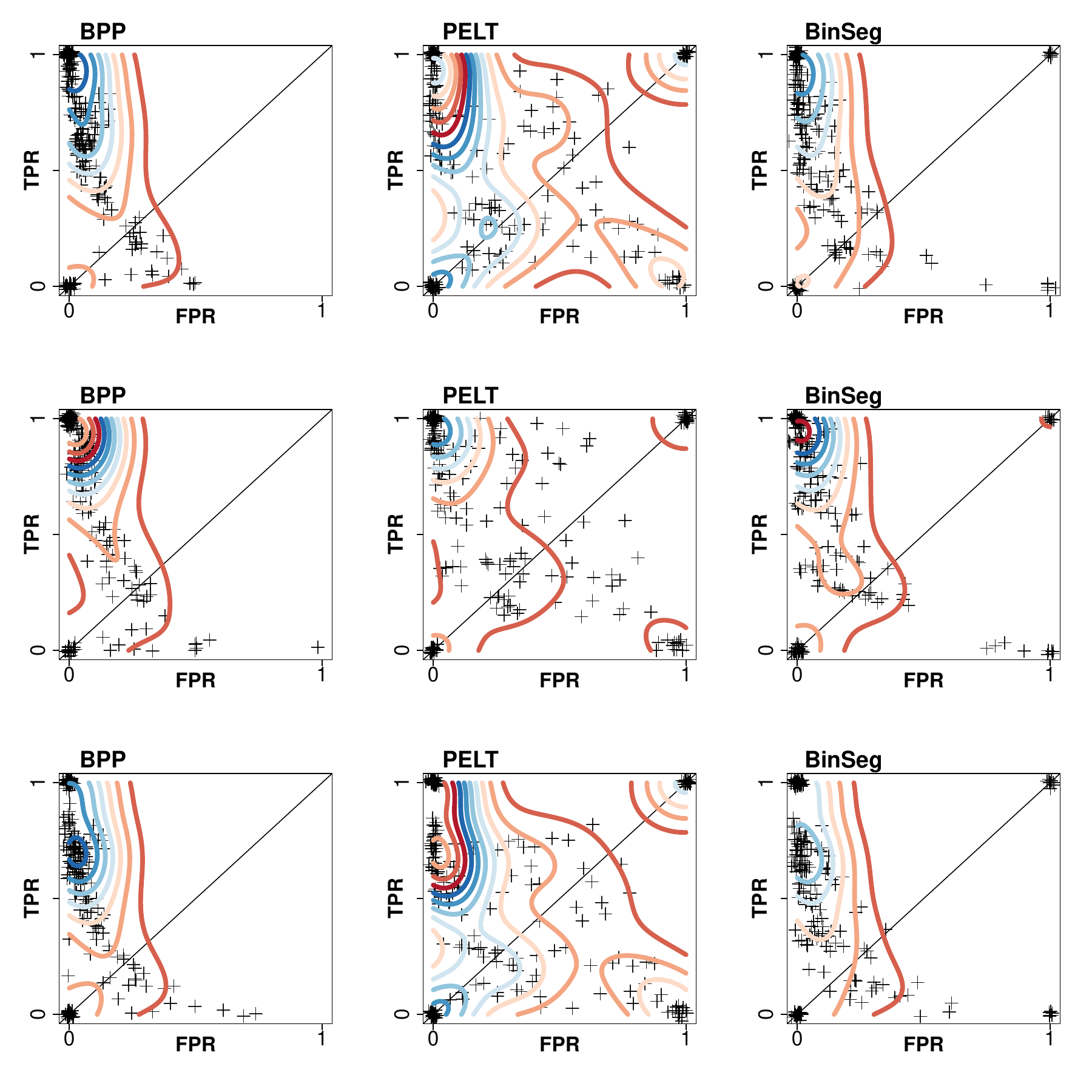}
    \caption{Row 1 is a subset of the data generated with time distribution uniformly spaced and change points uniformly distributed. Row 2 is a subset of the data generated with time distribution $t_i \overset{\text{i.i.d.}}{\sim} \textbf{Beta}(0.5,0.5)$ and change points simulated from $\textbf{BPP}$. Row 3 is a subset of the data generated with time distribution $t_i \overset{\text{i.i.d.}}{\sim} \textbf{Beta}(2,2)$ and change points simulated from $\textbf{BPP}$.}
    \label{fig:ms_ss_timedist}
\end{figure}

\begin{figure}
    \centering
    \includegraphics[width=1.\linewidth]{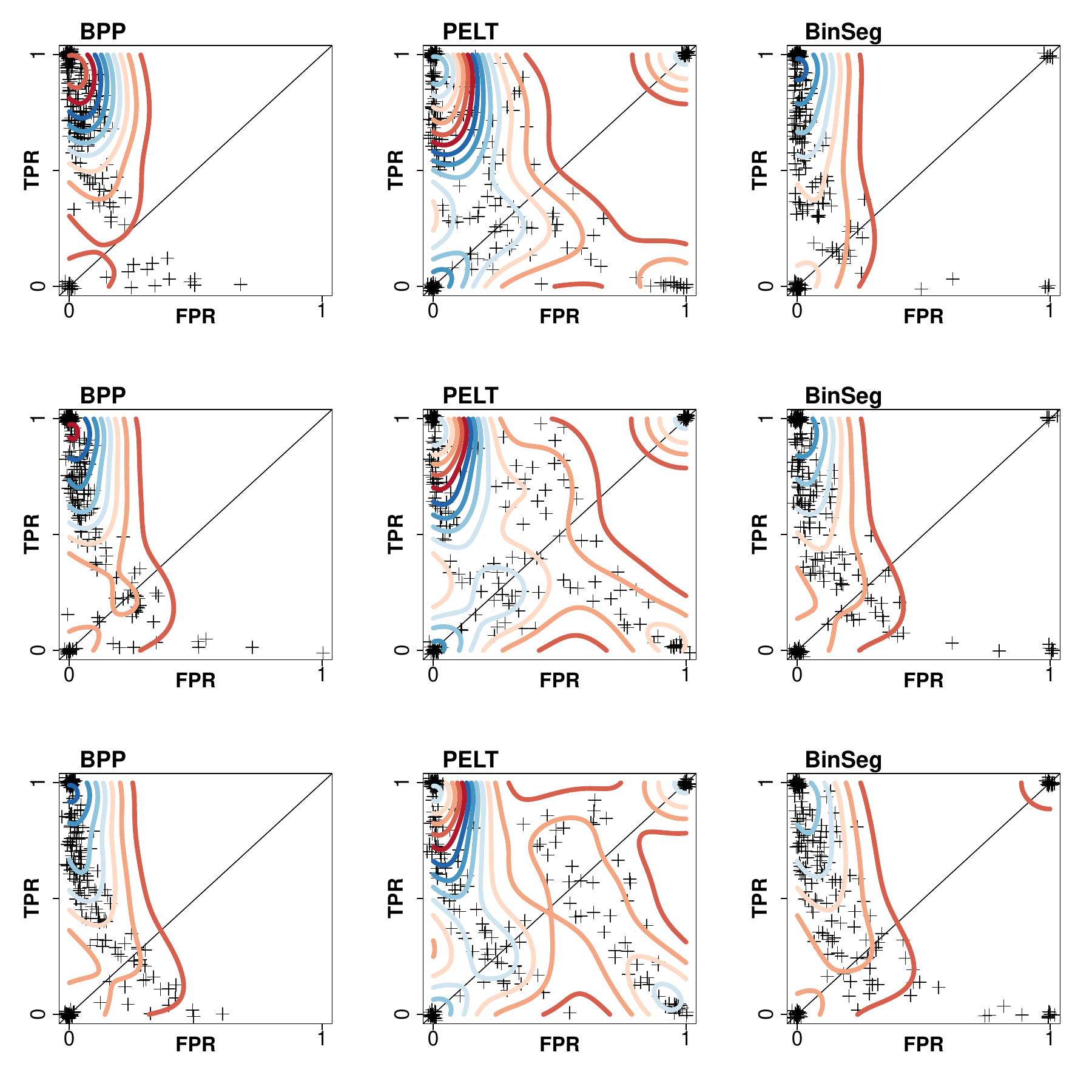}
    \caption{Row 1 is a subset of the data generated with error variance $0.1$. Row 2 is a subset of the data generated with error variance $0.2$. Row 3 is a subset of the data generated with error variance $0.3$.}
    \label{fig:ms_ss_errvar}
\end{figure}

\begin{figure}
    \centering
    \includegraphics[width=1.\linewidth]{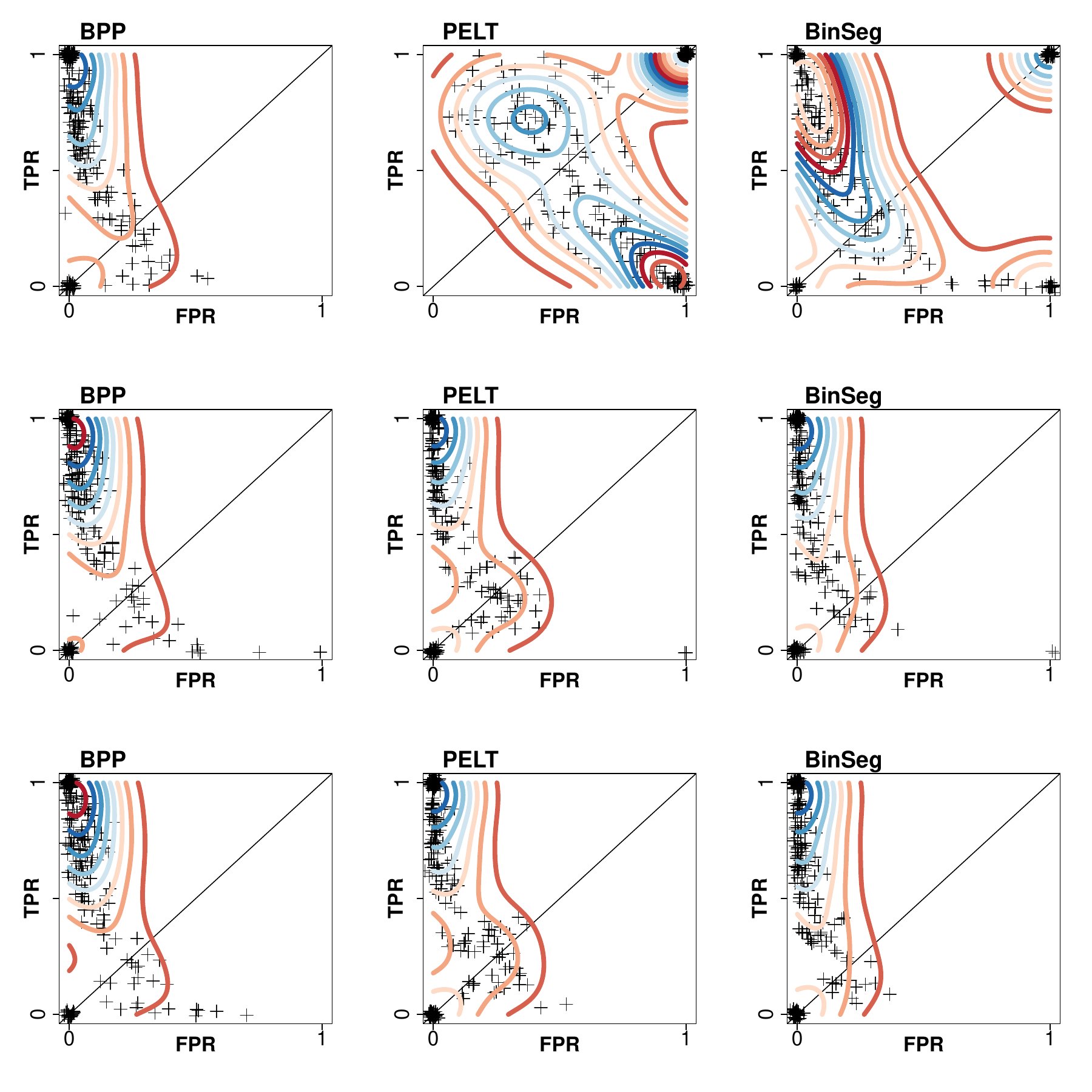}
    \caption{Row 1 is a subset of the data generated with robustness parameter $\nu = 3$. Row 2 is a subset of the data generated with robustness parameter $\nu = 10$. Row 3 is a subset of the data generated with robustness parameter $\nu = 100$.}
    \label{fig:ms_ss_robust}
\end{figure}

\begin{figure}
    \centering
    \includegraphics[width=1.\linewidth]{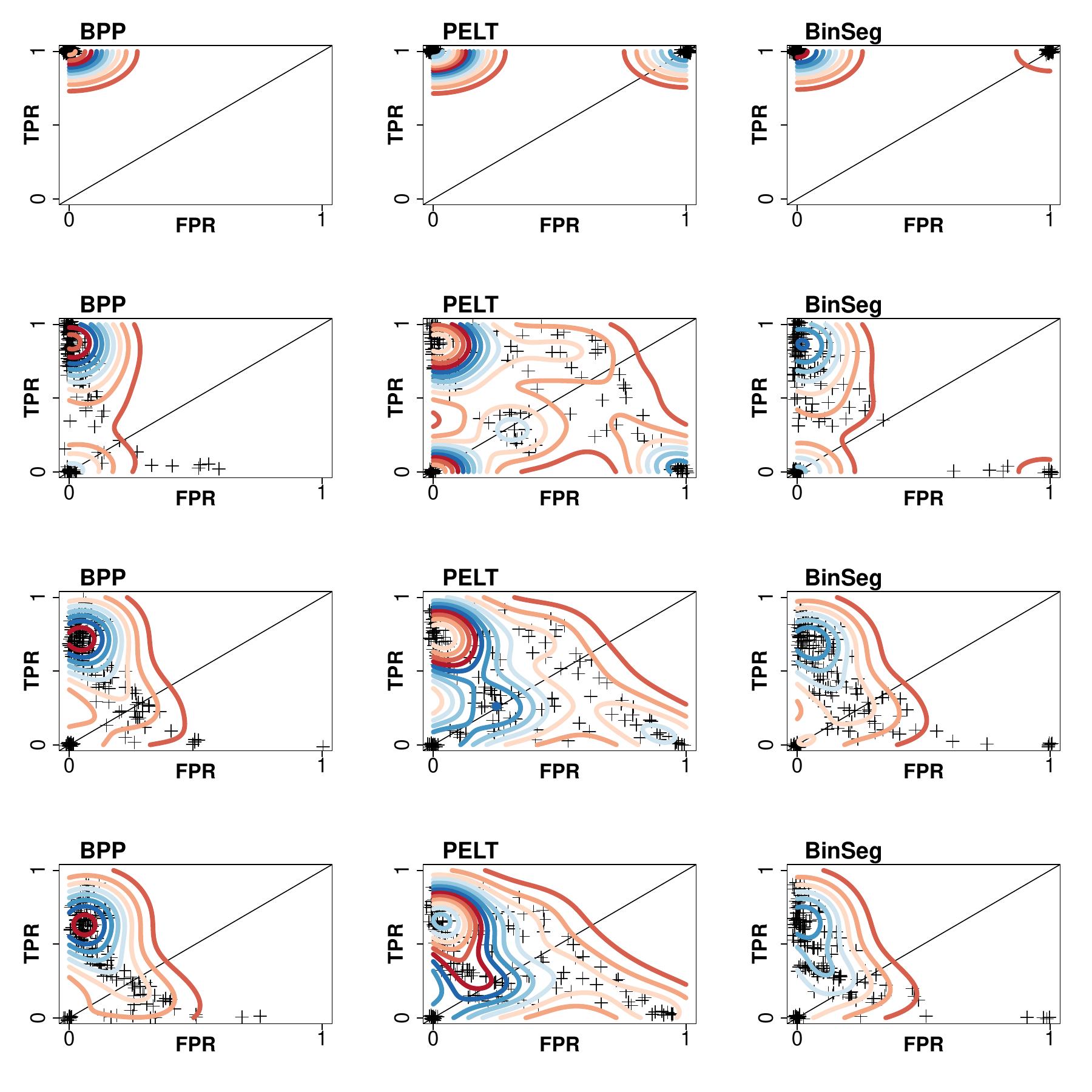}
    \caption{Study is broken down by number of changes. First row is 0 changes, to the fourth row of 3 changes.}
    \label{fig:ms_ss_nseg}
\end{figure}

\subsection{Simulation study with Gibbs sampler}
Figure \ref{fig:mainstudyGibbs} implements the Gibbs sampler from subsection \ref{sub:gibbs} on the synthetic data study with 10 replicates per setting as described in Section \ref{sec:sim}.
\begin{figure}
    \centering
    \includegraphics[width=0.5\linewidth]{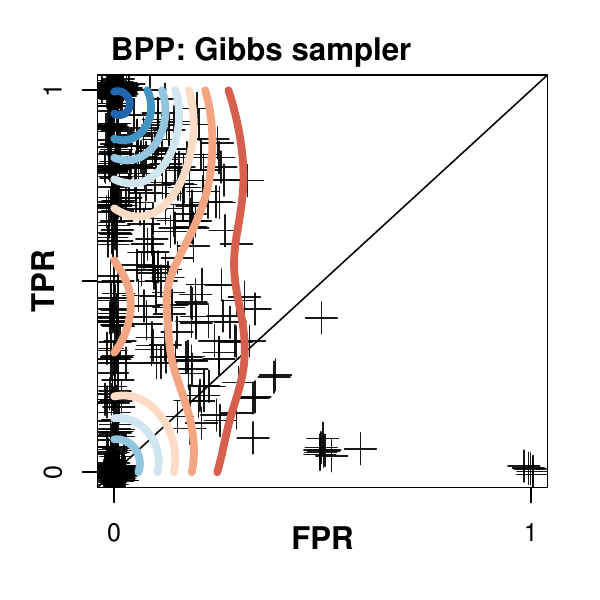}
    \caption{Gibbs sampler from subsection \ref{sub:gibbs} is run on 10 replications of the same synthetic data from Section \ref{sec:sim}.}
    \label{fig:mainstudyGibbs}
\end{figure}

\subsection{Simulation study with other models}
Following up from Section \ref{sec:sim}, we run the full study on three additional models in Figure \ref{fig:os_study}.  The first model is $\textbf{BPP}$ change point process model with Normal likelihood for the error distribution instead of t-distributed likelihood, the second it the noninformative discrete time model from Proposition \ref{prop:trans}, and the third model is $\textbf{BPP}$ but with a different prior on the number of segments following Equation \ref{eq:impliedpk}. We then breakdown the factorial study into subsets along the time distribution (Figure \ref{fig:os_ss_timedist}), the error distribution (Figure \ref{fig:os_ss_errvar}), and the robustness distribution (Figure \ref{fig:os_ss_robust}) for these three models. Finally the performance of the three main models are broken down by number of segments in the synthetic data in Figure \ref{fig:os_ss_nseg}.

\begin{figure}
    \centering
    \includegraphics[width=1.\linewidth]{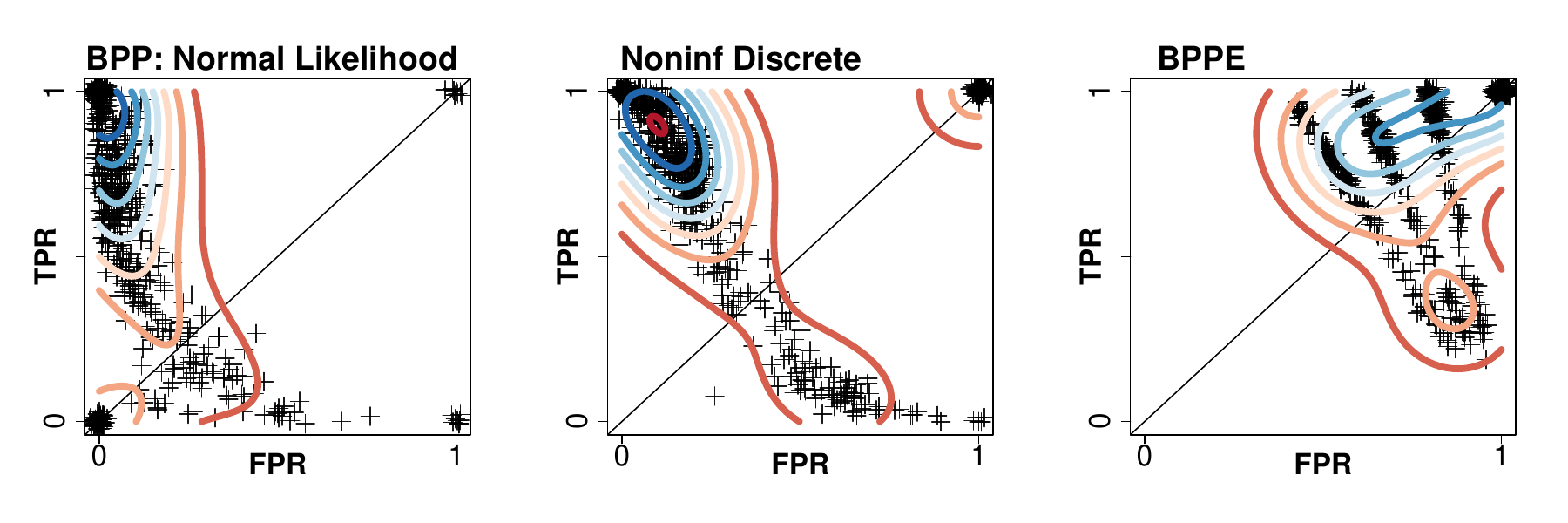}
    \caption{Comparing three additional models on the full factorial synthetic study. Left is the continuous time noninformative BPP model but with a normal error distribution. Middle is the noninformative discrete time model from Proposition \ref{prop:trans}. Right is the continuous time noninformative BPP model but with prior on number of segments that represents equally likely sequences across $k$.}
    \label{fig:os_study}
\end{figure}

\begin{figure}
    \centering
    \includegraphics[width=1.\linewidth]{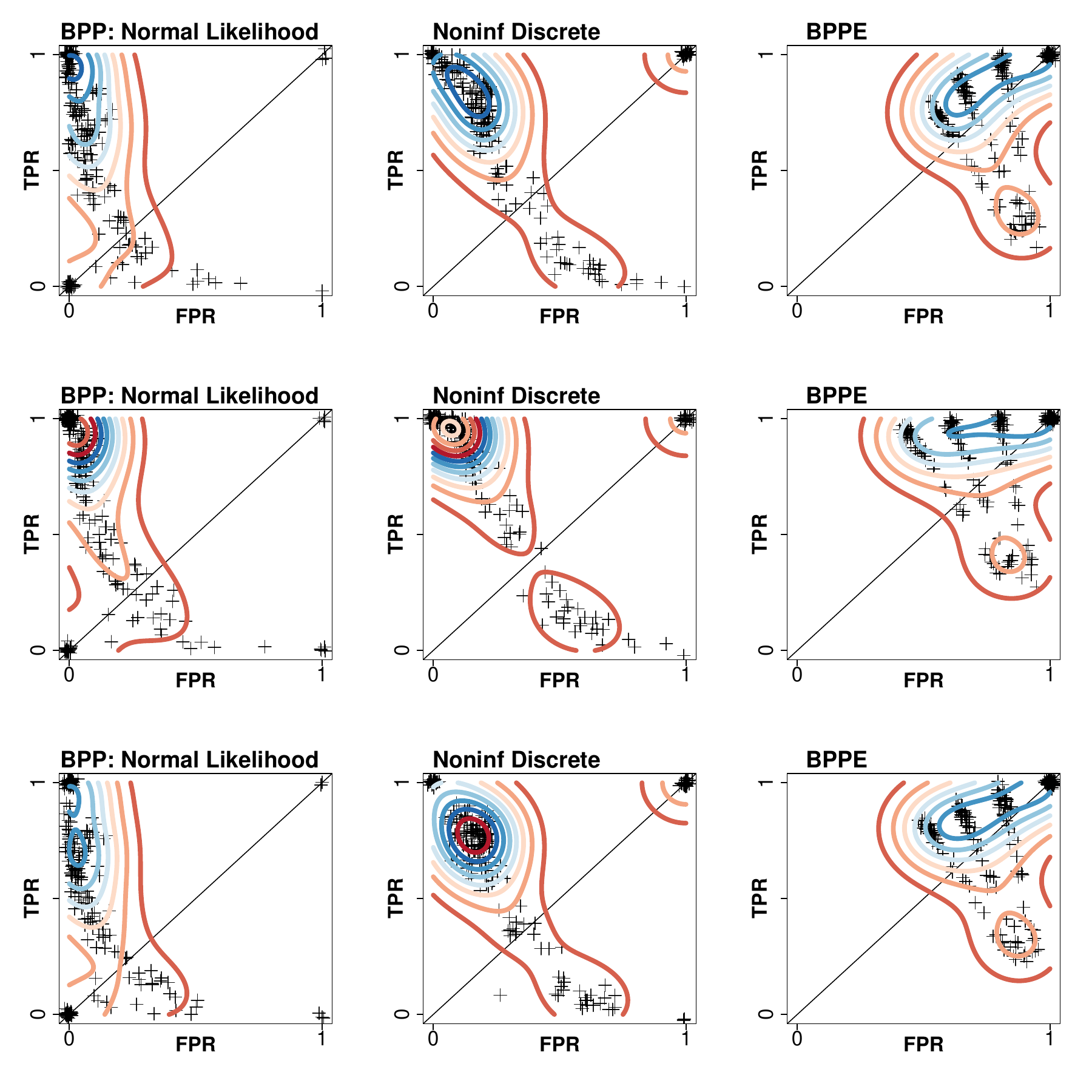}
    \caption{Row 1 is a subset of the data generated with time distribution uniformly spaced and change points uniformly distributed. Row 2 is a subset of the data generated with time distribution $t_i \overset{\text{i.i.d.}}{\sim} \textbf{Beta}(0.5,0.5)$ and change points simulated from $\textbf{BPP}$. Row 3 is a subset of the data generated with time distribution $t_i \overset{\text{i.i.d.}}{\sim} \textbf{Beta}(2,2)$ and change points simulated from $\textbf{BPP}$.}
    \label{fig:os_ss_timedist}
\end{figure}

\begin{figure}
    \centering
    \includegraphics[width=1.\linewidth]{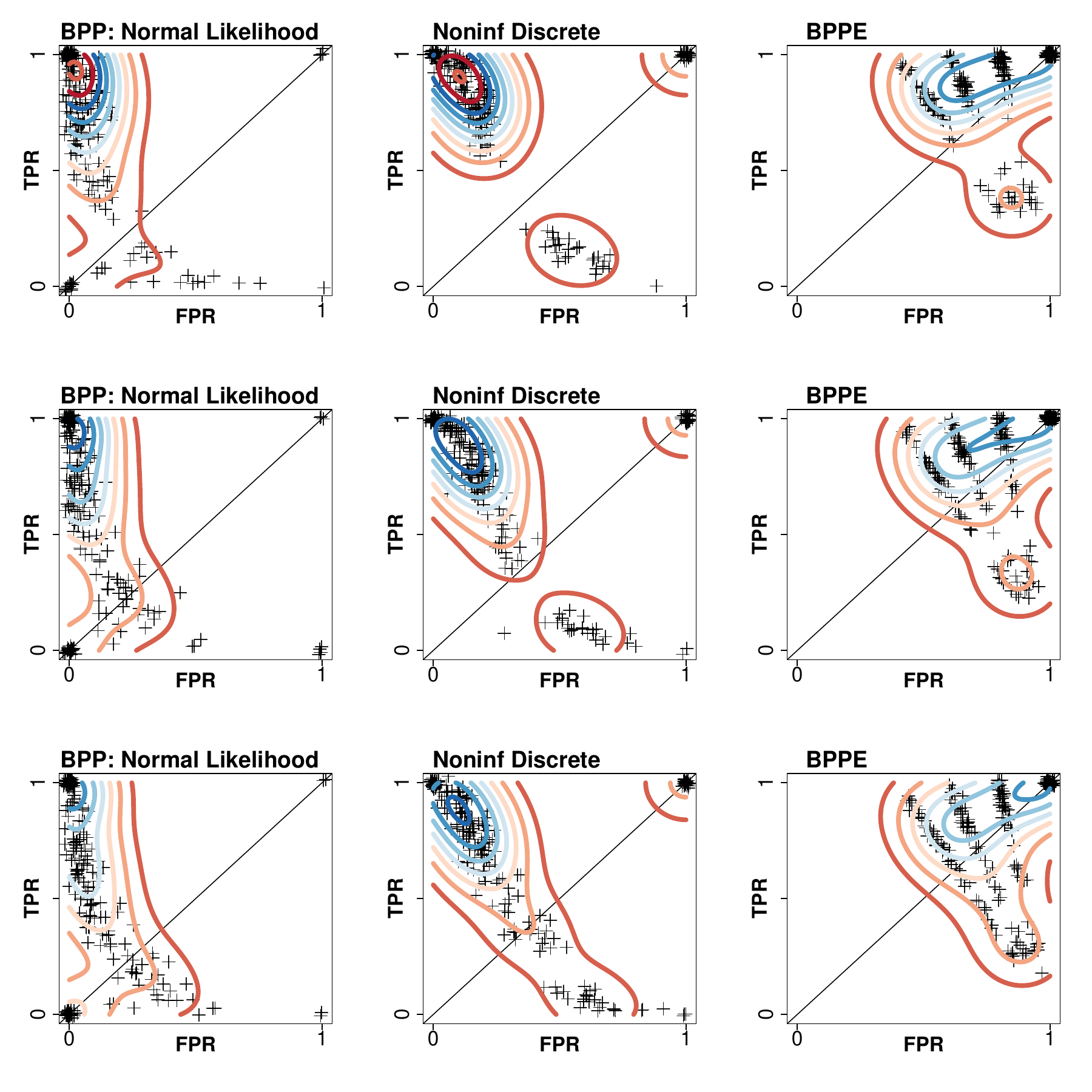}
    \caption{Row 1 is a subset of the data generated with error variance $0.1$. Row 2 is a subset of the data generated with error variance $0.2$. Row 3 is a subset of the data generated with error variance $0.3$.}
    \label{fig:os_ss_errvar}
\end{figure}

\begin{figure}
    \centering
    \includegraphics[width=1.\linewidth]{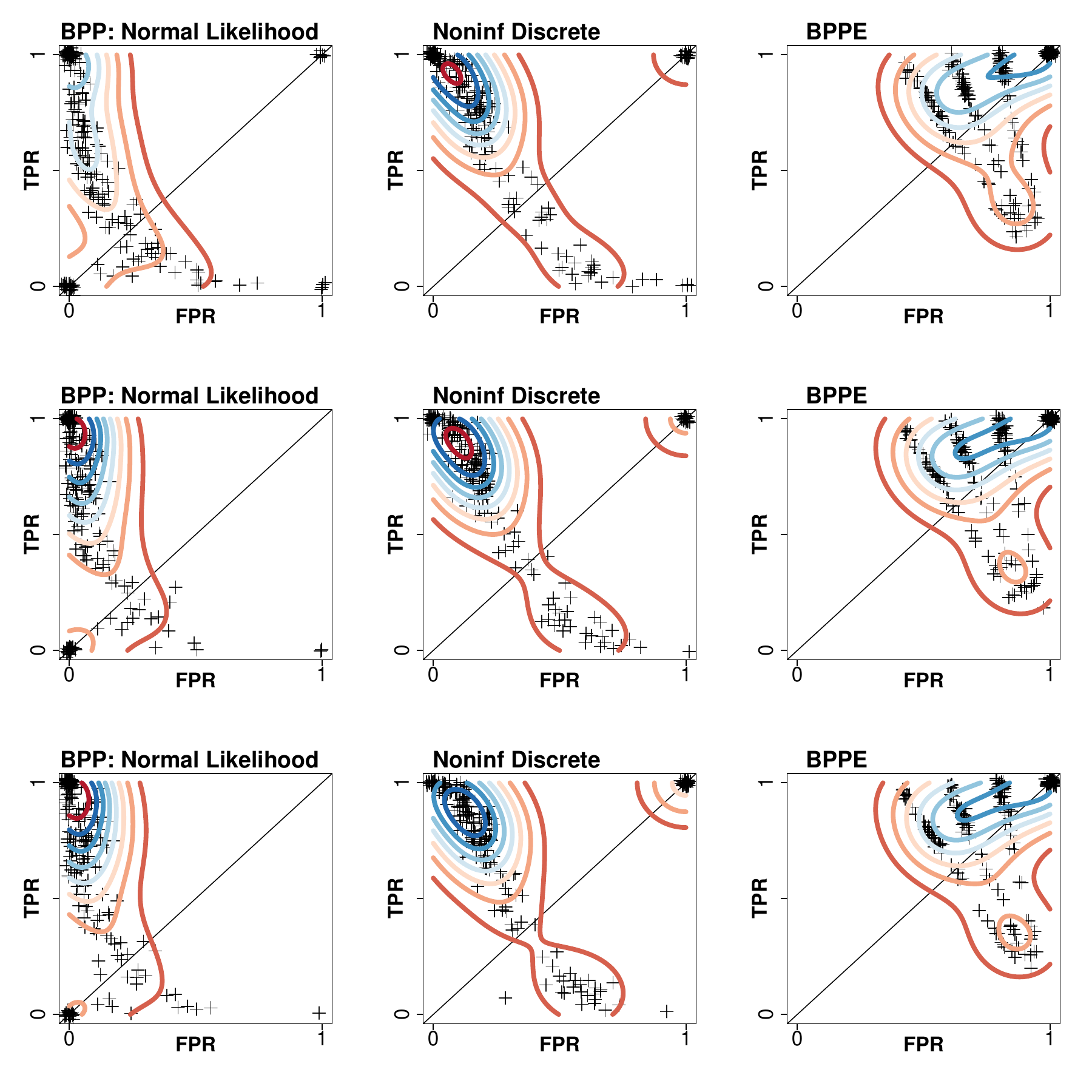}
    \caption{Row 1 is a subset of the data generated with robustness parameter $\nu = 3$. Row 2 is a subset of the data generated with robustness parameter $\nu = 10$. Row 3 is a subset of the data generated with robustness parameter $\nu = 100$.}
    \label{fig:os_ss_robust}
\end{figure}

\begin{figure}
    \centering
    \includegraphics[width=1.\linewidth]{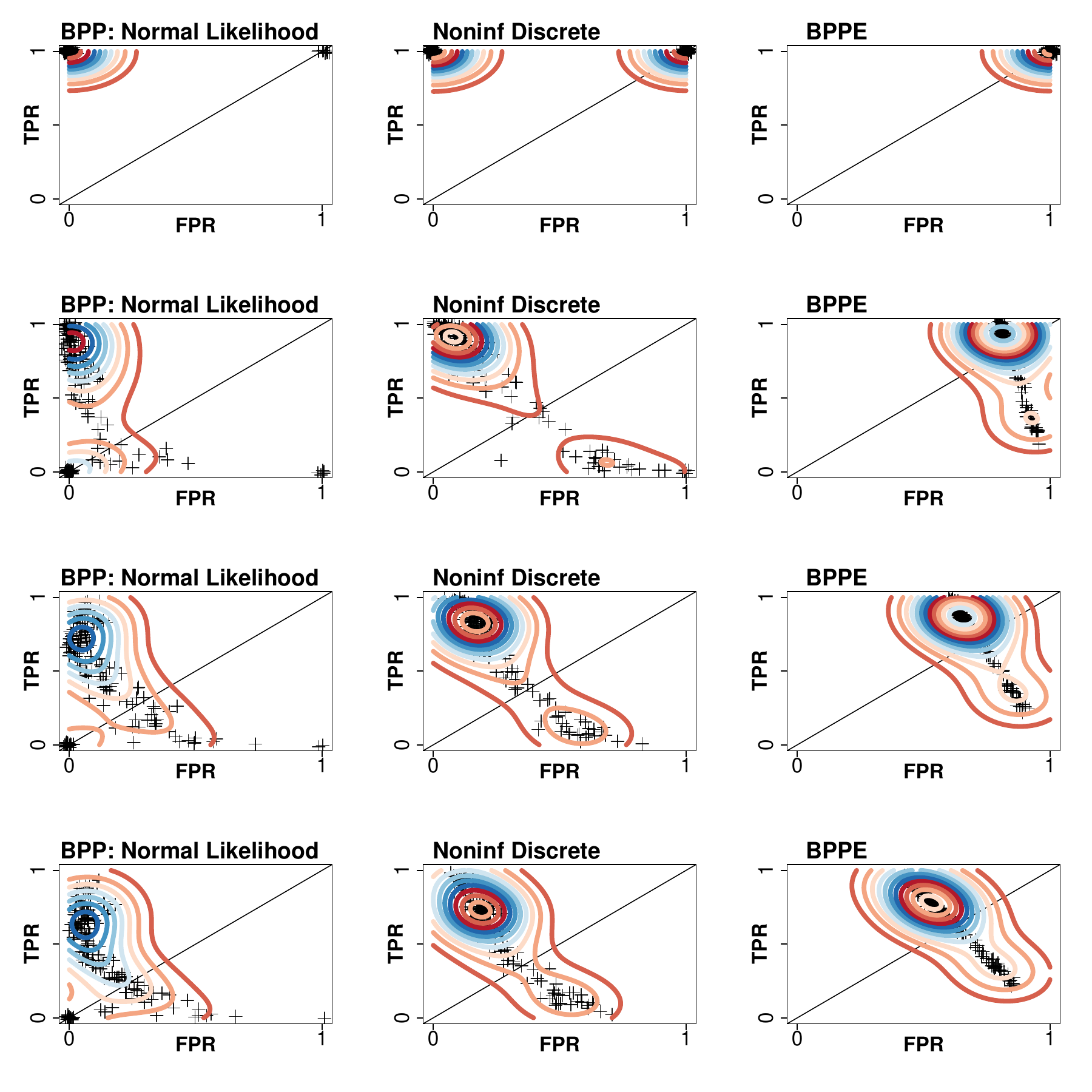}
    \caption{Study is broken down by number of changes. First row is 0 changes, to the fourth row of 3 changes.}
    \label{fig:os_ss_nseg}
\end{figure}

\section{Appendix D: Supplementary Results for Case Study}
\subsection{Prior on parameters}
The mean parameter $\bm{\theta} = (\alpha,\beta,\{\gamma_h,\delta_h\}_{h=1}^H)^T$ are a priori independent and 0 precision except for $\beta$. Since we do not want short periods of change to be captured by sharp slopes, we set the precision of $\beta$ to be 5 to help regularize and avoid spurious changes. Denote corresponding precision matrix as $\Lambda_{\theta}$.

Now, consider the prior distribution on the annual harmonic contrasts $\bm{\phi} = (\{\gamma_{h,l}\}_{h=1,l=1}^{H,J},\{\delta_{h,l}\}_{h=1,l=1}^{H,J})^T$ given their continuity constraints. We will construct this prior separately for $\gamma_{hl}$ and $\delta_{hl}$ for each year, and then put it together afterwards.  

Define $\bm{\gamma}_l = (\gamma_{1,l},\dots,\gamma_{H,l})^T$ as the vector of $\sin$ coefficients for the $l$th year. Assume these contrasts are Gaussian with mean zero, having an exponentially decaying diagonal variance of the seasonal anomalies with respect to the harmonic number.  Given the prior for $\bm{\gamma}_{l}$, we will derive the prior distribution for $\bm{\gamma}_{l,-H}$ conditioned on the continuity constraints on the $H$th harmonic. 

Let $\bm{\gamma}_l \sim N(\bm{0}, \Phi_C)$ where $\Phi_C = \psi \text{Diag}_{h = 1, \ldots, H} \{\exp{\lambda(1-h)}\}$. The $\psi$ parameter is the prior variance of the first harmonic, which then exponentially decays according to $\lambda$ as the harmonics increase. In all that follows, we assume $\lambda=1$. The joint distribution of $\bm{\gamma}_l$ and the continuity constraint $\xi_l = \sum_{h = 1}^H \gamma_{hl}$ is, with $s = \sum_{h'}\sum_h \Phi_{Chh'}$, 
\begin{align*}
\begin{bmatrix}\bm{\gamma} \\ \xi\end{bmatrix} = \begin{bmatrix} I_H \\ \bm{1}_H^\top\end{bmatrix}\bm{\gamma} &\sim N\left(\bm{0},\begin{bmatrix} \Phi_C & \Phi_C \bm{1}_H \\ (\Phi_C \bm{1}_H)^\top & s \end{bmatrix}\right)
\end{align*}

Using formulae for Gaussian conditional distributions, we arrive at, $\bm{\gamma}_l|\xi_l=0 \sim N(\bm{0}, \Phi_C - \Phi_C\bm{1}_H(\Phi_C\bm{1}_H)^\top/s)$. Only the first $h-1$ positions of this conditional multivariate Gaussian are used since the $h$th harmonic is constrained. The contrast covariance matrix is then the kronecker product over $2J$ copies of this covariance matrix for $2$ harmonics and $J$ years.
\[\Lambda_{\phi}^{-1} = I_{2J} \otimes \bigg(\Phi_C - \Phi_C\bm{1}_H(\Phi_C\bm{1}_H)^\top/s)\bigg)\]
The full parameter precision matrix is then the block diagonal operation of the precision matrix on $\bm{\theta}$ and $\bm{\phi}$ as,
\[\Phi^{-1} = \text{blkdiag}(\Lambda_{\theta},\Lambda_{\phi} )\]

\subsection{Applying other models to the case study}
\subsubsection{Case study results for model without interannually varying harmonics}
We also evaluate the three case study locations for the harmonic model without interannually varying harmonics from Equation \ref{eq:GausLik}. These results are in Figure \ref{fig:rondonia_6h}, Figure \ref{fig:cropRotation_6h} and Figure \ref{fig:permianBasin_6h}. The mean phenology function estimates are clearly different from our model in the original case study since interannual variation is not being captured.  The detected changes for deforestation and crop rotation are similar, however the model fails to capture the changes due to drought in the shrub and grassland example. 
\begin{figure}
\includegraphics[width=1.\linewidth]{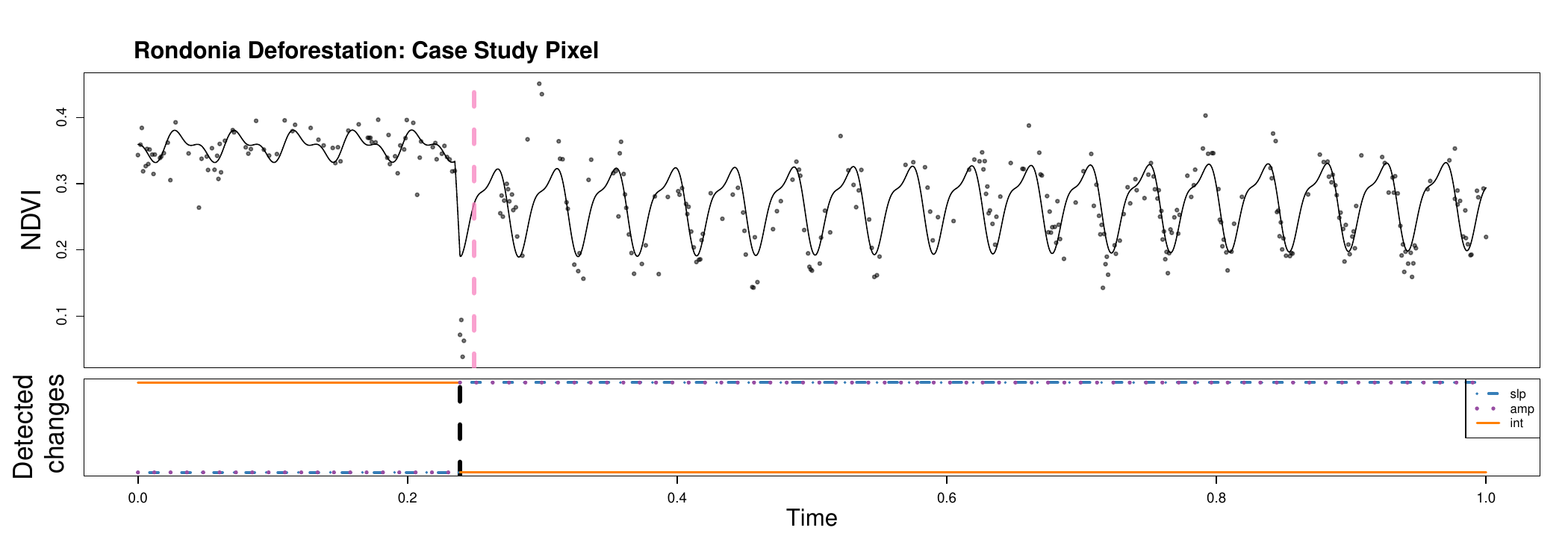}  
\caption{Evaluating the same case study location for deforestation in Rondonia, but without interannually varying harmonics as in Equation \ref{eq:GausLik}.}
\label{fig:rondonia_6h}
\end{figure}

\begin{figure}
\includegraphics[width=1.\linewidth]{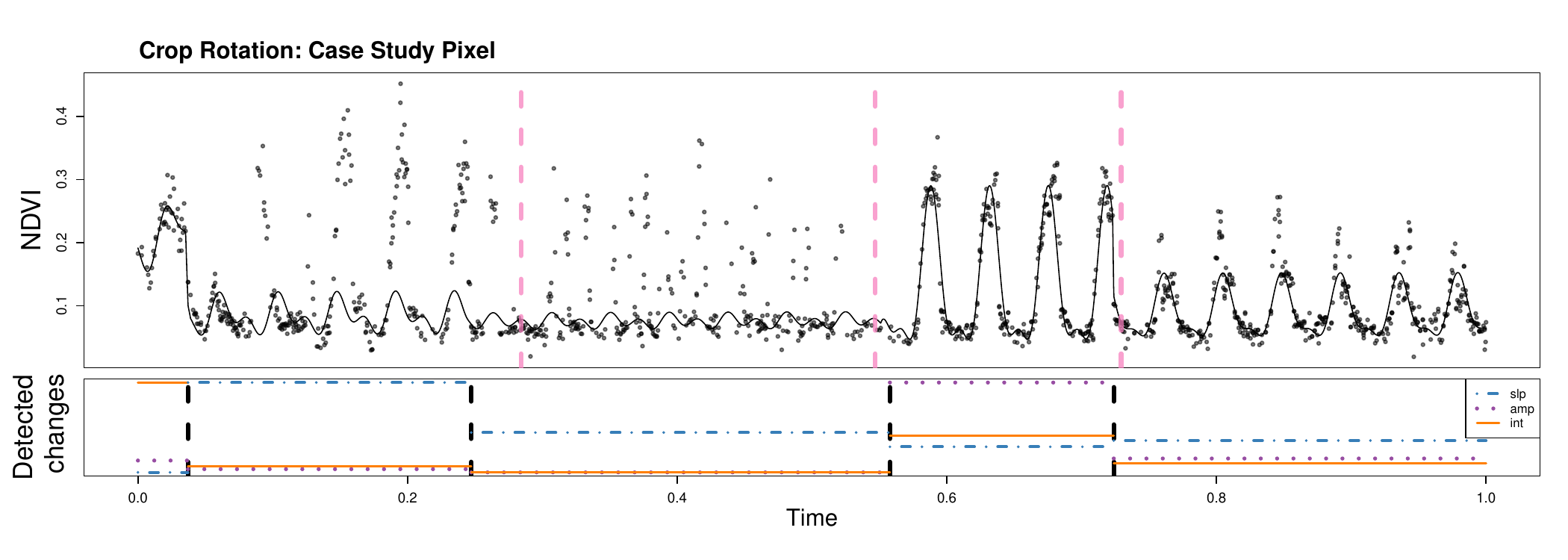} 
\caption{Evaluating the same case study location for crop rotation, but without interannually varying harmonics as in Equation \ref{eq:GausLik}. This model detects similar changes despite that it does not capture interannual variation.}
\label{fig:cropRotation_6h}
\end{figure}

\begin{figure}
\includegraphics[width=1.\linewidth]{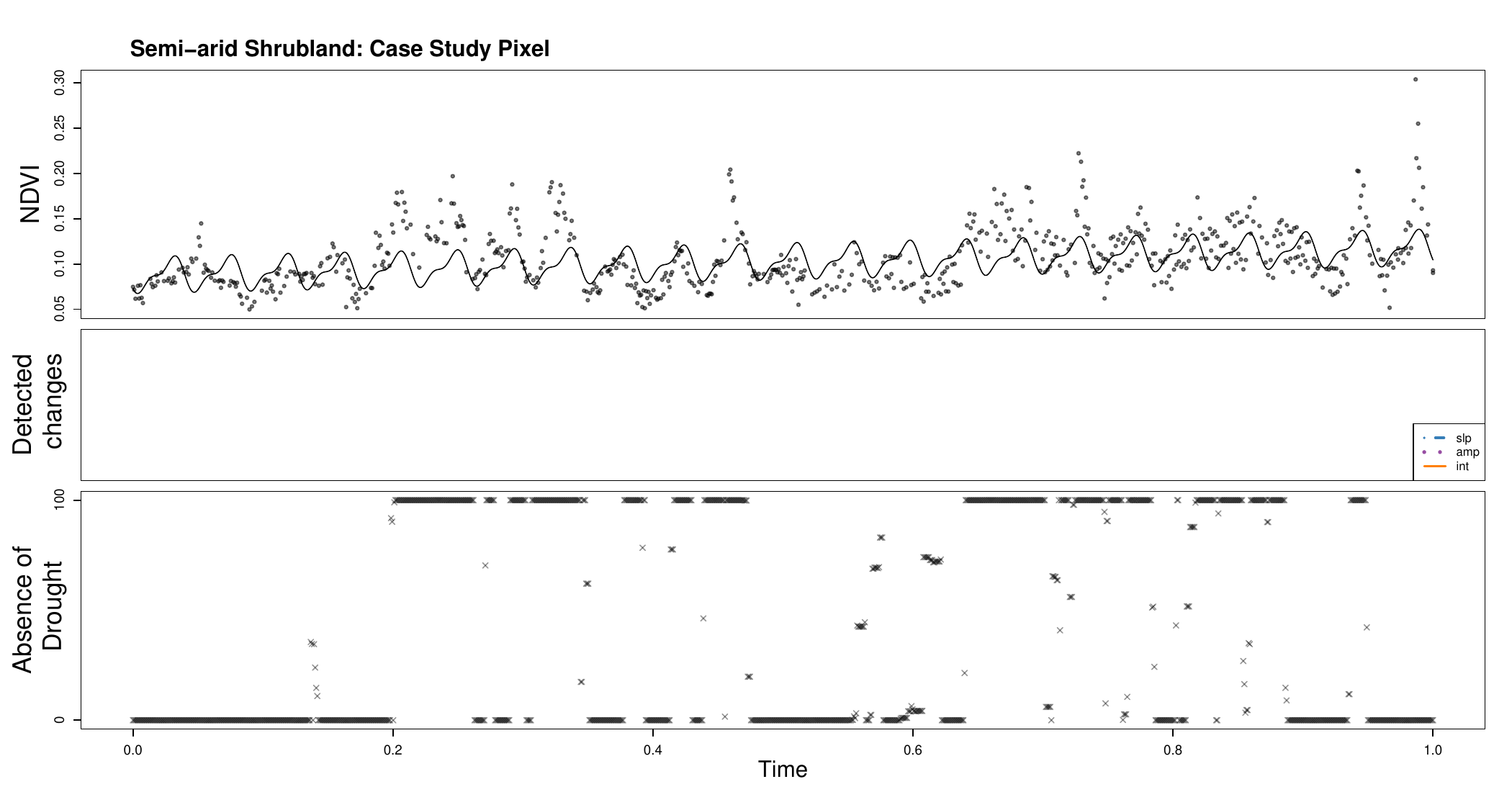}
\caption{Evaluating the same case study location for drought responses in shrub and grassland, but without interannually varying harmonics as in Equation \ref{eq:GausLik}. This model fails to detect changes due to drought as a result of removing interannual variation.}
\label{fig:permianBasin_6h}
\end{figure}

\subsubsection{Case study results for different prior on number of segments}
In subsection \ref{sub:pk}, we introduced two priors on the number of segments. The prior we use in the case study in Section \ref{sec:casestudy} is from Equation \ref{eq:pk}. In this subsection, we evaluate the case study pixels under the inverse of that prior,
\begin{equation}
\pi_0(k)\propto (2\pi)^{\frac{pk}{2}}|\Phi^{-1}|^{\frac{-k}{2}} \prod_{i=1}^n \big((1-t_{i})/(1-t_{i-1})\big)^{-k}     
\end{equation}
The results are in Figure \ref{fig:rondonia_impliedpk}, Figure \ref{fig:cropRotation_impliedpk}, and Figure \ref{fig:permianBasin_impliedpk}. The deforestation example demonstrates that the model under this prior incurs extra falsely detected changes compared to the prior in Equation \ref{eq:pk}.

\begin{figure}
\includegraphics[width=1.\linewidth]{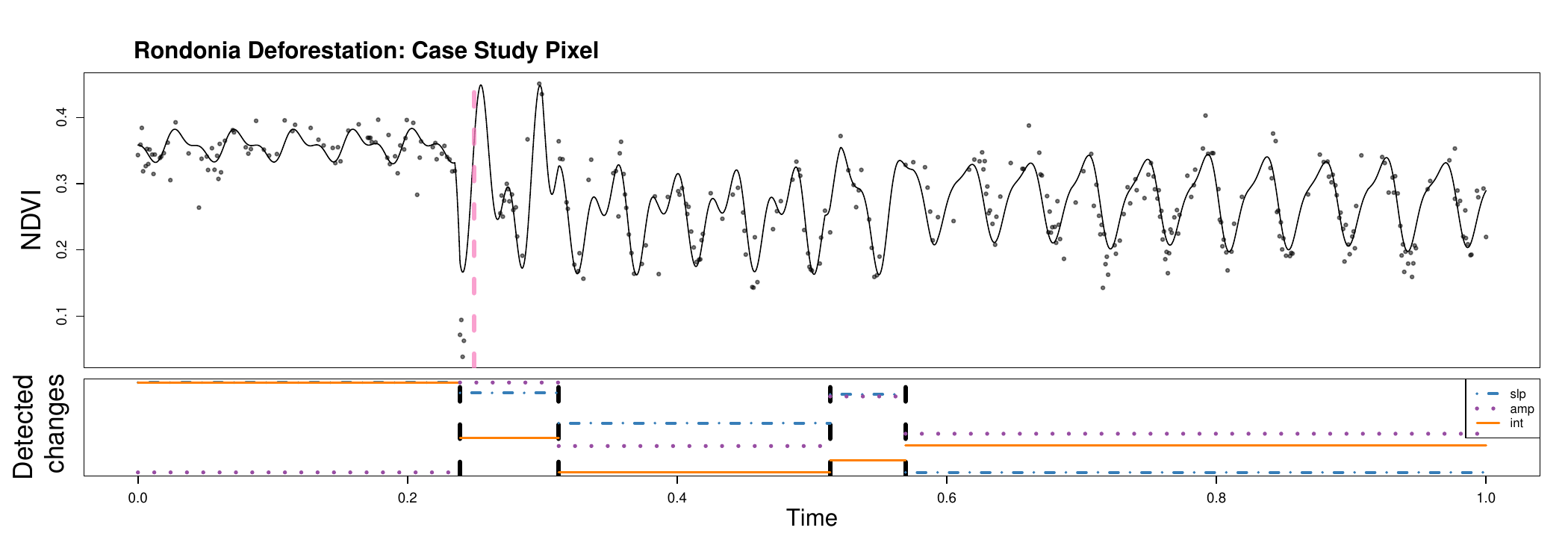}  
\caption{Evaluating the same case study location for deforestation in Rondonia, but with a different prior on the number of segments. Notice three more changes are added. These extra changes appear to be false positives as supported by high resolution imagery and reference to MapBiomas}
\label{fig:rondonia_impliedpk}
\end{figure}

\begin{figure}
\includegraphics[width=1.\linewidth]{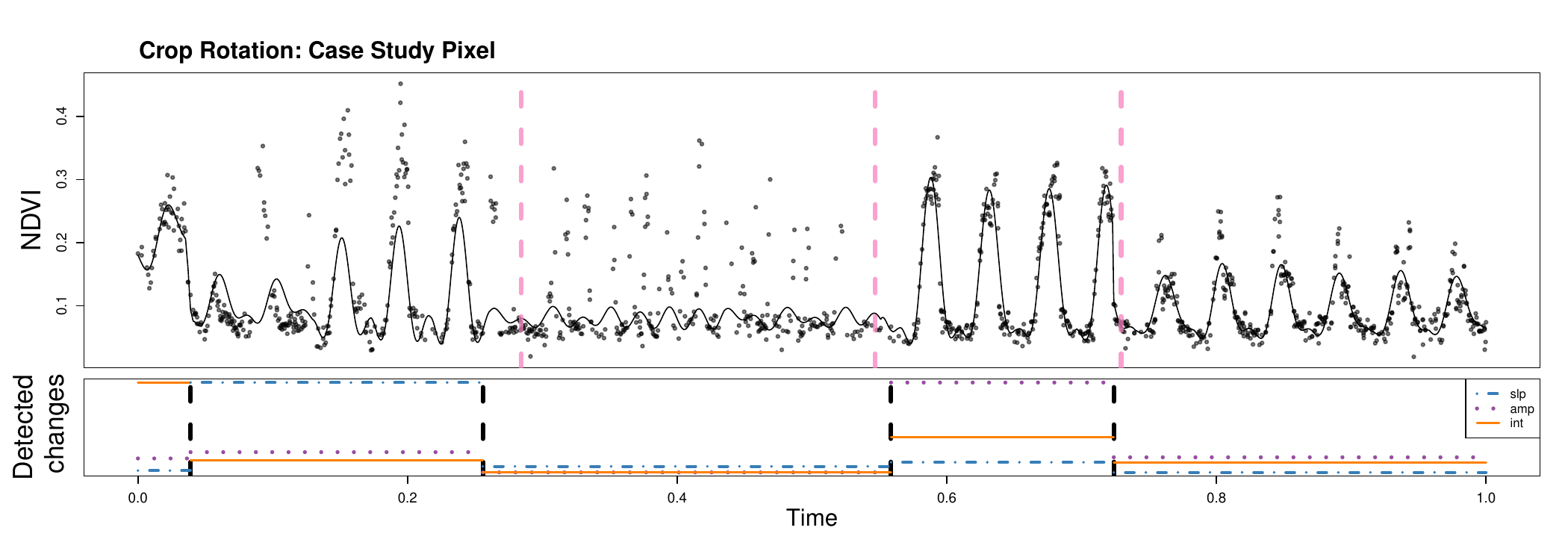} 
\caption{Evaluating the same case study location for crop rotation, but with the inverse prior on the number of segments. This model detects similar changes despite the different prior.}
\label{fig:cropRotation_impliedpk}
\end{figure}

\begin{figure}
\includegraphics[width=1.\linewidth]{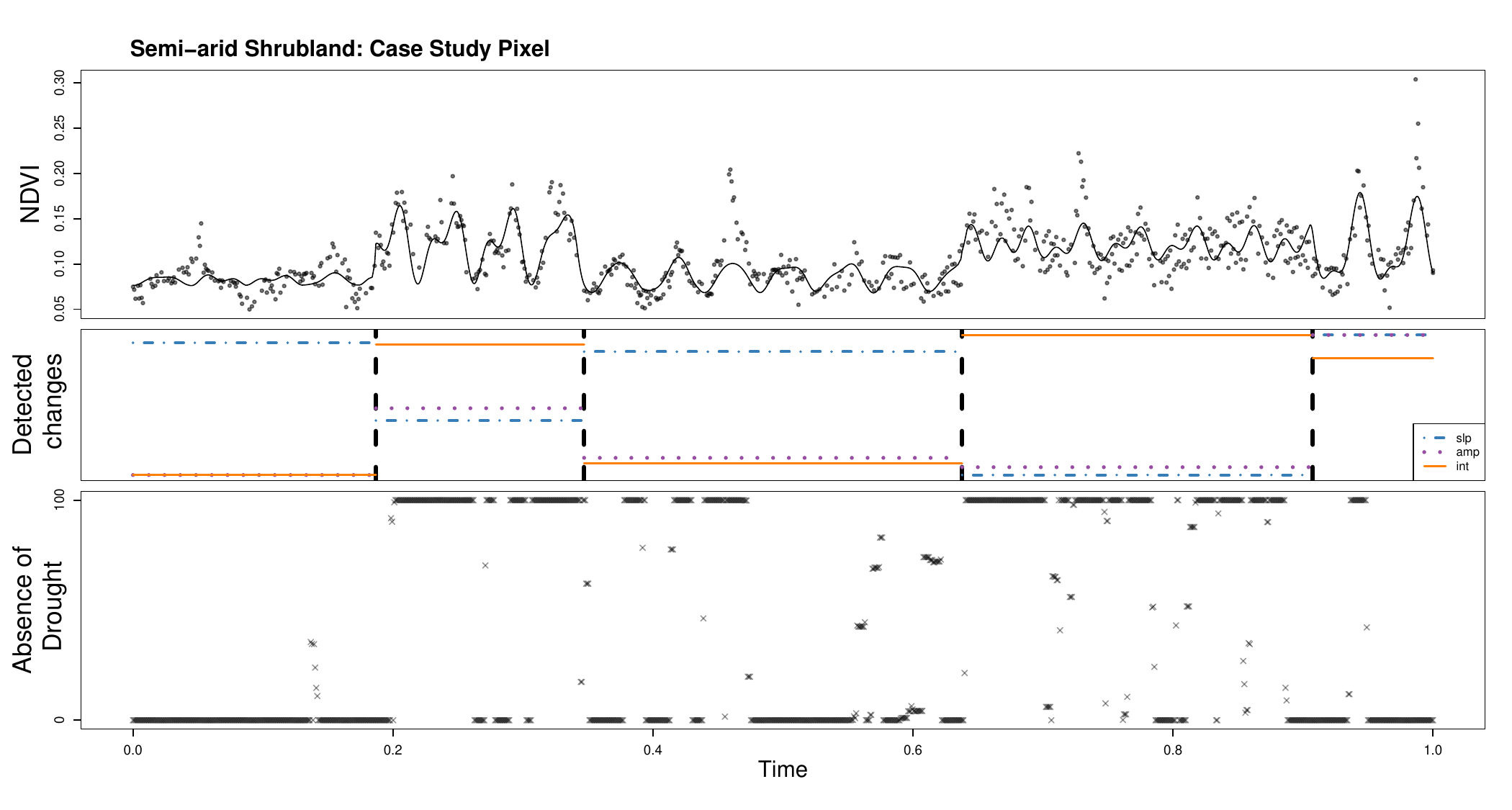}
\caption{Evaluating the same case study location for drought responses in shrub and grassland, but with the inverse prior on the number of segements. Change detected results do not change after switching the prior.}
\label{fig:permianBasin_impliedpk}
\end{figure}

\section{Appendix E: Supplementary Results for Methodology: EM and Simulation}
\subsection{Expectation Maximization}
Expectation maximization will be used to obtain posterior expectations of the robustness variables $\{q_i\}_{i=0}^n$ as well as the state variables $\{z_{t_i}\}_{i=0}^n$, and to maximize the marginal likelihood with respect to the mean and variance parameters for each segment $(\Theta,\sigma^2)$. As such, we evaluate the posterior expectations, $\mathbb{E}_{z_{t_i}|\bm{y},X,\Theta^{(s)}}[1\{z_{t_{i}}=j\}]$, $\mathbb{E}_{z_{t_i},z_{t_{i-1}}|\bm{y},X,\Theta^{(s)}}[1\{z_{t_{i}}=j,z_{t_{i-1}}=j\}]$.

Following \cite{little2019statistical}, the posterior distribution of $q_i|z_i=j,y_i,X,\Theta_j^{(s)} \sim Ga(\frac{\nu+1}{2}, (\frac{\nu}{2} + \frac{(y_i - x_i^T\theta_j^{(s)})^2}{2\sigma_j^{2(s)}}))$ from which the corresponding E-steps are readily available. Conditioning on $z_i = j$ and the likelihood mean function for the $j$th state at time $t_i$, $\mu_{j,t_i}^{(s)}$, and assessing the posterior for a single $q_i$,
\begin{align*}
    p(q_{t_i}|y_{t_i}, z_{t_i} = j;\nu) &\propto \frac{1}{q_{t_i}}^{-1/2} \exp{-\frac{q_{t_i}(y_{t_i} - \mu_{j,t_i}^{
    (s)})^2}{2\sigma^{2(t)}}} q_i^{\frac{\nu}{2}-1}\exp{-\frac{q_i\nu}{2}}\\
    &\propto  q_i^{\frac{\nu+1}{2}-1} \exp{-q_i(\frac{\nu}{2} + \frac{(y_i - \mu_{j,t_i}^{(s)})^2}{2\sigma^{2(s)}})}\\
\end{align*}
Which is a gamma distribution $Ga(\frac{\nu+1}{2}, (\frac{\nu}{2} + \frac{(y_{t_i} - \mu_{j,t_i}^{(s)})^2}{2\sigma^{2(s)}}))$.  The Q function follows,
\begin{align*}
    Q(\bm{\Theta}|\bm{\Theta}^{(s)}) \overset{(c)}{=} \mathbb{E}_{\bm{q},\bm{z}|\bm{y},X,\Theta^{(s)}}\bigg[\sum_{i=0}^n\sum_{j=1}^k 1\{z_{t_{i}}=j\}\bigg( -\log(\sigma) - \frac{q_{t_i}}{2\sigma^2}(y_{t_i} - x_{t_i}^T\bm{\theta}_j)^2\bigg) +\\ -pk\log(\sigma) -(\frac{1}{2\sigma^2}\sum_{j=1}^k \bm{\theta}_j^T\Phi^{-1}\bm{\theta}_j) - \log(\sigma^2)\\
    +\sum_{i=1}^n\sum_{j=1}^{k-1}\sum_{h=j}^{k} 1\{z_{t_i}=h,z_{t_{i-1}}=j\}\log\bigg(\pi(z_{t_{i}}=h|z_{t_{i-1}}=j)\bigg)\bigg]
\end{align*}
The M-steps for the mean parameters are weighted least squares $\hat{\bm{\theta}}_j = (X^TW_jX + \Phi^{-1})^{-1}X^TW_j\bm{y}$ where $W_j$ is a diagonal matrix with entries $\mathbb{E}[1\{z_i=j\}q_i|y,\Theta^{(s)}] = \mathbb{E}[q_i|1\{z_i=j\},y,\Theta^{(s)}]*\mathbb{E}[1\{z_i=j\}|y,\Theta^{(s)}]$. The first of those expectations is given above, and the marginal expectation of $z_i=j$ is provided by the forward-backward algorithm. The joint posterior expectations of $1\{z_i=j+1,z_{i-1}=j\}$ is also provided by the forward-backward algorithm. The M-step for the variance $\sigma^2$ can also be evaluated analytically, 
\[ \sigma^{2(s+1)} = \frac{\sum_{i=0}^n \sum_{j=1}^k \mathbb{E} \bigg[1\{z_{t_i}=j\}q_{t_i} \big| \bm{y},\Theta^{(s)},\sigma^{2(s)}\bigg] \bigg(y_{t_i} -x_{t_i}^T\bm{\theta}^{(s+1)}_j\bigg)^2 + \sum_{j=1}^k\bm{\theta}^{(s+1)}_j\Phi^{-1}\bm{\theta}^{(s+1)}_j}{\bigg(\sum_{i=0}^n \sum_{j=1}^k\mathbb{E}\big[1\{z_{t_i}=j\} \big| \bm{y},\Theta^{(s)},\sigma^{2(s)}\big]\bigg)+pk+2}
\] 
After the M-step is complete, the E-step is then repeated conditioned on the updated parameters.  The algorithm is repeated until convergence of the $Q$ function. 

The likelihood distribution of $y_{t_i}|z_{t_i}=j,\Theta$ after marginalizing out $q_{t_i}$ is t-distributed as follows,
\begin{align*}
    f_{y_{t_i}}(y_{t_i};\mu_{t_i},\sigma^2)
    &=(2\pi\sigma^2)^{-1/2} \frac{\frac{\nu}{2}^{\frac{\nu}{2}}}{\Gamma(\frac{\nu}{2})} \int q_i^{\frac{\nu+1}{2}-1} \exp{-q_i(\frac{\nu}{2} + \frac{(y_i - \mu_{j,t_i})^2}{2\sigma^{2}})} dq_i\\
    &=(2\pi\sigma^2)^{-1/2} \frac{\frac{\nu}{2}^{\frac{\nu}{2}}\Gamma(\frac{\nu+1}{2})}{\Gamma(\frac{\nu}{2})}(\frac{\nu}{2} + \frac{(y_i - \mu_{j,t_i})^2}{2\sigma^{2}})^{-\frac{\nu+1}{2}}\\
    &=\frac{1}{\sigma} \nu^{\frac{\nu}{2}}\frac{\frac{1}{2}^{\frac{\nu+1}{2}}\Gamma(\frac{\nu+1}{2})}{(\pi)^{1/2}\Gamma(\frac{\nu}{2})}(\frac{\nu}{2} + \frac{(y_i - \mu_{j,t_i})^2}{2\sigma^{2}})^{-\frac{\nu+1}{2}}\\
    &=\frac{1}{\sigma} \frac{\Gamma(\frac{\nu+1}{2})}{(\pi\nu)^{1/2}\Gamma(\frac{\nu}{2})}(1 + \frac{(y_i - \mu_{j,t_i})^2}{\sigma^{2}\nu})^{-\frac{\nu+1}{2}}\\
\end{align*}
Which is a location-scaled t-distribution with mean $\mu_{j,t_i}$ and scale $\sigma$.
\subsection{Gibbs Sampling}
\label{sub:gibbs}
Toward full Bayesian inference, analytical posteriors are not available for general models (see \cite{fearnhead2006exact} for a model obtaining exact posterior inference), however simulation for the full conditional distribution $[\bm{z}|\bm{y},\Theta,\sigma^2]$ can be derived and used within a broader Gibbs sampling methodology.

The posterior conditional distribution of the mean vectors $\bm{\theta}_j$ follow a Gaussian distribution since their prior is Gaussian. Let $W^{(j)}_{ii} = q_i 1\{z_i=j\}$ be diagonal,
\begin{align*}
    p(\bm{\theta}_j|\bm{y}, \bm{z},\bm{q},\sigma_j^2) 
    \propto \exp\{-\frac{1}{2}(\bm{\theta}_j - \bm{\mu}_j)^T\Lambda_j (\bm{\theta}_j - \bm{\mu}_j)\}    
\end{align*}
Where $\bm{\mu} = (X^TW^{(j)}X+\Phi^{-1})^{-1}X^T W^{(j)} \bm{y}$ and $\Lambda_j = (X^TW^{(j)}X+\Phi^{-1})/\sigma^2$ are the mean and precision matrix of the Gaussian posterior for $\bm{\theta}_j$.
The posterior conditional distribution of $\sigma^2$ is scaled-inverse-$\chi^2$ as follows,
\begin{align*}
    p(\sigma^2 | \bm{y}, \bm{z},\bm{q},\Theta)     &\propto (\sigma^2)^{-\frac{(\sum_{i=1}^n\sum_{j=1}^k 1\{z_{t_i}=j\}) + pk}{2}-1}\\ 
    &\quad\quad\exp\bigg(-\frac{\sum_{i=1}^n\sum_{j=1}^k q_{t_i} 1\{z_{t_i}=j\} (y_{t_i} - \bm{x}_{t_i}^T\bm{\theta}_j)^2  + \sum_{j=1}^k\bm{\theta}_j^T\Phi^{-1}\bm{\theta}_j}{2\sigma^2}\bigg)
\end{align*}
Which is a scaled-inverse-$\chi^2 (\nu_0, \tau^2_0)$ with parameters $\nu_0 = \sum_{i=1}^n \sum_{j=1}^k 1\{z_{t_i}=j\}+pk$ and $\tau^2_0 = \frac{\sum_{i=1}^n\sum_{j=1}^k q_{t_i} 1\{z_{t_i}=j\} (y_{t_i} - \bm{x}_{t_i}^T\bm{\theta}_j)^2  + \sum_{j=1}^k\bm{\theta}_j^T\Phi^{-1}\bm{\theta}_j}{\sum_{i=1}^n\sum_{j=1}^k 1\{z_{t_i}=j\} + pk}$, where $p$ is the dimension of $\bm{\theta}_j$ for all $j=1,\dots,k$.

\subsubsection{Conditional distribution of state variables}
The conditional distribution $p(\bm{z}|\bm{y},\bm{\theta},\bm{\sigma^2},\bm{q})$ can be derived using the contribution of \cite{chib1996calculating}, while carefully handling the robustness parameters $\bm{q}$. We cover high level details from \cite{chib1996calculating} here for our model.  Define $\bm{Z}_{t_i} = (z_{t_0},\dots,z_{t_i})^T$ and $\bm{Z}^{t_{i+1}} = (z_{t_{i+1}},\dots,z_{t_n})^T$, with similar vectors $\bm{Y}_{t_i},\bm{Y}^{t_{i+1}},\bm{Q}_{t_i},\bm{Q}^{t_{i+1}}$ for the observations $\bm{y}$ and robustness parameters $\bm{q}$. Start by factorizing the conditional distribution as follows,
\begin{align*}
p(\bm{z}|\bm{y},\bm{\theta},\bm{\sigma^2},\bm{q}) &= p(z_{t_n}|\bm{y},\bm{\theta},\bm{\sigma^2},\bm{q})p(z_{t_{n-1}}|\bm{Z}^{t_{n}},\bm{y},\bm{\theta},\bm{\sigma^2},\bm{q})\dots\\
&\quad\quad p(z_{t_i}|\bm{Z}^{t_{i+1}},\bm{y},\bm{\theta},\bm{\sigma^2},\bm{q})\dots p(z_{t_0}|\bm{Z}^{t_{1}},\bm{y},\bm{\theta},\bm{\sigma^2},\bm{q})    
\end{align*}

Except for the first $z_{t_n}$ term, these terms take the form $p(z_{t_i}|\bm{Z}^{t_{i+1}},\bm{y},\bm{\theta},\bm{\sigma^2},\bm{q})$. After using Bayes rule and noting conditional independencies from the Markov chain,
\begin{align*}
p(z_{t_i}|\bm{Z}^{t_{i+1}},\bm{y},\bm{\theta},\bm{\sigma^2},\bm{q}) &\propto p(z_{t_i},\bm{Z}^{t_{i+1}},\bm{Y}^{t_{i+1}},\bm{Q}^{t_{i+1}}|\bm{Y}_{t_{i}},\bm{Q}_{t_{i}},\bm{\theta},\bm{\sigma})\\
&\propto p(\bm{Z}^{t_{i+1}},\bm{Y}^{t_{i+1}},\bm{Q}^{t_{i+1}}|z_{t_i},\bm{\theta},\bm{\sigma})p(z_{t_i}|\bm{Y}_{t_{i}},\bm{Q}_{t_{i}},\bm{\theta},\bm{\sigma})\\
&\propto p(\bm{Y}^{t_{i+1}},\bm{Q}^{t_{i+1}}|\bm{Z}^{t_{i+1}},\bm{\theta},\bm{\sigma}) p(\bm{Z}^{t_{i+1}}|z_{t_i},\bm{\theta},\bm{\sigma})p(z_{t_i}|\bm{Y}_{t_{i}},\bm{Q}_{t_{i}},\bm{\theta},\bm{\sigma})\\
&\propto p(z_{t_{i+1}} |z_{t_i}) p(z_{t_i}|\bm{Y}_{t_i},\bm{Q}_{t_i},\bm{\theta},\bm{\sigma^2})    
\end{align*}
The first term is the continuous time transition probability from Theorem \ref{thm:conttime}. Regarding the second term, first note that $p(z_{t_0}=1|\bm{Y}_{t_0},\bm{Q}_{t_0},\bm{\theta},\bm{\sigma^2})=1$ since the prior is a point mass at 1, and thus we can proceed recursively. Assume $p(z_{t_{i-1}}|\bm{Y}_{t_{i-1}},\bm{Q}_{t_{i-1}},\bm{\theta},\bm{\sigma^2})$ is known. We have,
\begin{align*}
p(z_{t_i}|\bm{Y}_{t_i},\bm{Q}_{t_i},\bm{\theta},\bm{\sigma^2}) &\propto p(z_{t_{i}},y_{t_{i}},q_{t_{i}}|\bm{Y}_{t_{i-1}},\bm{Q}_{t_{i-1}})\\
&\propto p(z_{t_i}|\bm{Y}_{t_{i-1}},\bm{Q}_{t_{i-1}},\bm{\theta},\bm{\sigma^2}) p(y_{t_i}|q_{t_i},z_{t_i},\bm{\theta},\bm{\sigma^2})
\end{align*}
Since $p(q_i|z_i,\bm{\theta},\bm{\sigma^2}) = p(q_i)$ which is a constant with respect to $z_{t_i}$.  The first term above can be written as,
\[
p(z_{t_i}|\bm{Y}_{t_{i-1}},\bm{Q}_{t_{i-1}},\bm{\theta},\bm{\sigma^2}) = \sum_{j=1}^k p(z_{t_i}|z_{t_{i-1}}=j) p(z_{t_{i-1}}=j|\bm{Y}_{t_{i-1}},\bm{Q}_{t_{i-1}},\bm{\theta},\bm{\sigma^2})
\]
And the second term is the likelihood distribution for $y_i$.

\end{document}